\documentclass[final,12pt,authoryear]{elsarticle}

\usepackage{amssymb}
\usepackage{amsmath}
\usepackage{subcaption}
\usepackage{optidef}
\usepackage{booktabs}
\usepackage{multirow}

\allowdisplaybreaks

\usepackage{amsthm} 

\newtheorem{theorem}{Theorem}[section]
\newtheorem{definition}[theorem]{Definition}
\newtheorem{remark}[theorem]{Remark}
\newtheorem{proposition}[theorem]{Proposition}
\newtheorem{corollary}[theorem]{Corollary}

\usepackage[ruled,vlined]{algorithm2e}

\usepackage[colorlinks=true]{hyperref}
\usepackage[margin=1in]{geometry}
\makeatletter
\def\ps@pprintTitle{%
 \let\@oddfoot\@empty
 \def\@oddfoot{\hfil \thepage\hfil}  
}
\makeatother

\begin{document}

\begin{frontmatter}

\title{\textbf{Minimum Covariance Determinant Estimator and\\ Outlier Detection for Interval-valued Data}}

\author[inst1]{Catarina P. Loureiro\corref{cor1}}
\ead{catarinapadrela@tecnico.ulisboa.pt}
\author[inst1]{M. Ros\'ario Oliveira}
\ead{rosario.oliveira@tecnico.ulisboa.pt}
\author[inst2]{Paula Brito}
\ead{mpbrito@fep.up.pt}
\author[inst3]{Lina Oliveira}
\ead{lina.oliveira@tecnico.ulisboa.pt}

\cortext[cor1]{Corresponding author.}

\affiliation[inst1]{organization={CEMAT, Departamento de Matemática, Instituto Superior Técnico, Universidade de Lisboa},
            addressline={\\Av. Rovisco Pais 1}, 
            postcode={1049-001},
            city={Lisbon},
            country={Portugal}}

\affiliation[inst2]{organization={Faculdade de Economia e Gestão, Universidade do Porto, and LIAAD-INESC TEC},
            addressline={\\Rua Dr. Roberto Frias}, 
            postcode={4200-464},
            city={Porto},
            country={Portugal}}

\affiliation[inst3]{organization={CAMGSD, Departamento de Matemática, Instituto Superior Técnico, Universidade de Lisboa},
            addressline={\\Av. Rovisco Pais 1}, 
            postcode={1049-001},
            city={Lisbon},
            country={Portugal}}

\begin{abstract}
Interval-valued data are one of the most common symbolic data types, which enables the preservation of the underlying variability of the data. The interval mean and covariance matrix can be estimated using the barycenter approach based on the Mallows distance. However, as for conventional data, classical estimates can be significantly affected by anomalous data points, frequently present in real-life datasets. To address this problem, we develop a robust alternative which estimates location and scale by extending the Minimum Covariance Determinant estimator to interval-valued data. The algorithm yields a robust Interval-Mahalanobis distance, which can be used to detect anomalous observations based on adaptive cutoff values. Through extensive simulation studies across various contamination levels, we demonstrate that the interval-valued robust estimator consistently outperforms classical methods in covariance matrix estimation and achieves superior outlier detection accuracy. Finally, the applicability and effectiveness of the proposed method are illustrated through real-world datasets.
\end{abstract}

\begin{keyword}
Symbolic Data Analysis \sep Symbolic Covariance Matrix \sep Robust Interval-Mahalanobis Distance \sep Outlier Detection \sep Robust Yeo-Johnson Transformation
\end{keyword}

\end{frontmatter}

\section{Introduction}
\label{sec:intro}
In the era of big data, traditional statistical methods often struggle to handle increasingly complex data structures efficiently. Symbolic Data Analysis (SDA) \citep{Diday} has emerged as a powerful framework for dealing with such complexity by representing data in more sophisticated forms. Among these representations, histogram-valued and interval-valued data stand out as particularly useful, as they naturally capture inherent variability in numerical data. They usually emerge from the aggregation (\textit{macrodata}) of individual observations (\textit{microdata}). For instance, daily temperature ranges, price fluctuations, or sensor reading variations can be more meaningfully represented as intervals rather than single values. Other scenarios where symbolic data can be useful include privacy preservation and large sample size. For a comprehensive review of SDA, the reader is referred to \cite{billard_diday_2006} and \cite{SDAoverview}.

In this work, we adopt the ontic view of an interval, where a closed interval represents precise information associated with an objective entity and reflects its intrinsic variability. Under this view, an interval is treated as a realization of a set-valued random variable. By contrast, in fuzzy data analysis, the epistemic view is adopted, where an interval represents uncertainty, when the information of an objective entity takes on a single value in an interval with some level of uncertainty \citep{COUSO2014,Grzegorzewski2017}.

Several methodological approaches have been developed to analyze interval-valued data effectively, like clustering \citep{DESOUZA2004,RODRIGUEZ2022,DESA2025}, discriminant analysis \citep{Queiroz2018,Silva2015}, regression analysis \citep{DIAS2017,Irpino2015regression,LIMANETO2018}, time series \citep{LIN2016,MAIA2008}, principal component analysis \citep{RGSerrao2023,LeRademacher2012}, modelling \citep{LERADEMACHER2011,Brito2012,Beranger2023}, outlier detection \citep{Li2006,Viattchenin2012,DuarteSilva2018}, among others.

Each of these methods relies on different frameworks to represent and analyze interval-valued data. For this work, we adopt the representation and framework proposed by \cite{Oliveira2022,oliveira2024}, which models interval-valued data as tuples consisting of the interval's center, range, and a distribution function characterizing the microdata within the interval. This approach allows for a more nuanced analysis that considers both the macrodata (intervals) and microdata (distribution within intervals). In this framework, \cite{oliveira2024} also derived descriptive statistics such as the barycenter and symbolic covariance matrix using the Mallows distance. This distance, also known as the $L_2$ Wasserstein distance, has been widely adopted in SDA for measuring dissimilarities between histogram-valued and interval-valued observations \citep{Irpino2015}. It takes into account the centers, ranges, and distributions of the microdata, providing a comprehensive measure of proximity.

The increased expressiveness of interval-valued data brings new challenges to statistical analysis, particularly in the presence of anomalous observations. Classical statistical methods for interval-valued data, while effective under ideal conditions, can be severely compromised by outliers. These outlying observations, which deviate significantly from the majority of the data, can substantially distort estimates of location and scale parameters, leading to unreliable analyses and potentially misleading conclusions. The field of robust statistics offers tools to address this vulnerability, with the Minimum Covariance Determinant (MCD) estimator, in particular the FastMCD algorithm \citep{fastMCD}, being one of the most successful approaches for multivariate data. Originally introduced by \cite{Rousseeuw1984,Rousseeuw1985}, the MCD estimator provides robust estimates of location and scatter by identifying a subset of observations with the most compact scatter structure. In addition, the Mahalanobis distance computed using the MCD estimates serves as a robust measure of outlyingness, enabling effective outlier detection. This approach has proven particularly effective in conventional multivariate settings \citep{hubert2018}. 

Outlier detection in interval-valued data is further complicated by the need to consider deviations in both the centers and ranges of the intervals, as well as in their interactions. \cite{DuarteSilva2018} approached outlier detection by collecting the centers and logarithms of the ranges of the intervals into a single multivariate vector. Under a multivariate normal or skew-normal distribution with specific covariance configurations, conventional multivariate outlier detection methods were then applied. While this method leverages existing techniques, it does not fully exploit the unique structure of interval-valued data. 

More recently, \cite{Tian2024} proposed an MCD estimator for interval-valued data based on a different framework, where each interval is represented by its lower and upper bounds. The symbolic covariance estimator \citep{Billard2008} in use assumes a uniform distribution of the microdata, regardless of the dataset in use, thereby not incorporating information about the underlying microdata distribution. In addition, the associated location estimator is defined as the mean of the interval centers, and a weighted Euclidean distance is likewise computed solely from the centers, without accounting for the ranges or the interval-valued structure. Such simplifications may limit the method's ability to detect outliers, especially when shifts occur in the ranges or in the interaction between centers and ranges, potentially also being misleading when the microdata do not follow a uniform distribution.

In this paper, we present a comprehensive framework for robust analysis of interval-valued data that addresses these challenges. We introduce a novel extension of the MCD estimator to interval-valued data, providing robust estimates of location and scatter while accounting for the center, range, and microdata components. Moreover, a robust Interval-Mahalanobis distance derived from the Interval-valued Minimum Covariance Determinant (IMCD) estimator is proposed, along with appropriate cutoff values for outlier detection. Finally, we validate our methodology through simulation studies and real-world applications.

The remainder of this paper is organized as follows. Section \ref{sec:intervaldata} introduces the fundamental concepts of interval-valued data and establishes notation. Section \ref{sec:MCD} presents our extension of the MCD estimator to interval-valued data, while Section \ref{sec:outlier} develops the outlier detection methodology. Section \ref{sec:simulation} presents and discusses simulation results, and Section \ref{sec:applications} demonstrates the methodology on real-world datasets. Finally, Section \ref{sec:conclusion} summarizes our main contributions. The proofs of the theoretical results from Section \ref{sec:MCD}, the IMCD algorithm pseudocode, as well as additional simulation results are provided in \ref{sec:appendix_mcd}, \ref{sec:appendix_mcd_algorithm}, and \ref{sec:appendix_simulations}, respectively.

\section{Interval-valued Data}
\label{sec:intervaldata}
Interval-valued data are one of the most common types of symbolic data, where each observation is typically represented by an interval rather than a single value. We call a \textit{symbolic interval} the object composed of the real-valued interval, defined by its lower and upper bounds (\textit{macrodata}), together with the set of individual points between those bounds (\textit{microdata}). We follow the notation and model proposed by \cite{Oliveira2022,oliveira2024} and represent a symbolic interval by a tuple $\tilde{x}=([a,b],F)$, where $[a,b]\subset\mathbb{R}$ is the real-valued interval corresponding to the macrodata and $F$ is an absolutely continuous distribution function characterizing the microdata.

The set of all real closed and bounded intervals is defined as $\mathbb{IR} = \{[a,b]:\; a,b \in \mathbb{R},\; a \leq b \}$ and the set of $p$-dimensional hyperrectangles, with $p$ $\in \mathbb{N}$, as the Cartesian product, $\mathbb{IR}^p = \{([a_1,b_1],\dots,[a_p,b_p])^\top:\; a_j,b_j \in \mathbb{R},\; a_j \leq b_j, j=1,\dots,p \}$. Alternatively, a real-valued interval $[a,b]\in\mathbb{IR}$ can also be represented by its center $c=(a+b)/2\in\mathbb{R}$ and range $r=b-a\in\mathbb{R}_0^+$, becoming $[c-r/2,c+r/2]$. Additionally, the microdata are normalized to the interval $[-1,1]$ through an affine transformation. Therefore, the symbolic interval can, equivalently, be represented by a new tuple $\tilde{x}=(x,F_U)$, where $x=(c,r)^\top$ corresponds to the macrodata and $F_U$ characterizes the normalized microdata. $F_U$ is the distribution function of an absolutely continuous real-valued latent random variable $U$ with support $[-1,1]$. This distribution function can be directly estimated from the microdata when available. Otherwise, an assumption has to be made, with the continuous uniform distribution being the most common assumption in the literature (see \cite{oliveira2024} for details).

Let $\boldsymbol{U} = (U_1,\ldots,U_p)^\top$ be a real-valued random vector of independent and absolutely continuous random variables with support $\left[-1,1\right]^p$ and distribution function $F_{\boldsymbol{U}}$. An interval-valued random vector can be defined as $\boldsymbol{\mathcal{\tilde{X}}} = (X_1,\ldots,X_p)^\top$ and described by $\boldsymbol{\mathcal{\tilde{X}}}=(\boldsymbol{\mathcal{X}},F_{\boldsymbol{U}})$, where $\boldsymbol{\mathcal{X}}=(\boldsymbol{\mathcal{C}}^\top,\boldsymbol{\mathcal{R}}^\top)^\top$ with $\boldsymbol{\mathcal{C}} = (C_1,\ldots,C_p)^\top$ and $\boldsymbol{\mathcal{R}} = (R_1,\ldots,R_p)^\top$ are the real-valued random vectors of centers and ranges, respectively, such that $\mathbb{P}(\boldsymbol{\mathcal{R}} \geq \boldsymbol{0}) = 1$. Note that a conventional random vector, $\boldsymbol{\mathcal{X}}=\boldsymbol{\mathcal{C}}$, can be seen as a particular case when $\mathbb{P}(\boldsymbol{\mathcal{R}} = \boldsymbol{0}) = 1$ and $\mathbb{P}(\boldsymbol{U} = \boldsymbol{0}) = 1$. In this context, \cite{Oliveira2022} proposed a model linking the macrodata to the microdata. Let $\boldsymbol{V} = (V_1,\ldots,V_p)^\top\in\mathbb{R}^p$ be a real-valued random vector characterizing the microdata within the macrodata of the interval-valued random vector $\boldsymbol{\mathcal{\tilde{X}}}$. The proposed model links the microdata and macrodata by defining $V_j=C_j+U_jR_j/2$, if $\mathbb{P}(R_j=0)=0$, and $V_j=C_j$, if $\mathbb{P}(R_j=0)=1$ and $\mathbb{P}(U_j=0)=1$.

Another crucial concept is the distance between two intervals. In interval-valued data analysis, the Mallows distance is the common choice for this task, being particularly useful as it accounts for the variability and distribution of the microdata within the intervals. Interestingly, the Mallows distance can be expressed in terms of the centers and ranges of the intervals, leading to a somewhat interpretable representation. Under some assumptions on the latent variables, \cite{oliveira2024} proved that the square of the Mallows distance can be expressed as summarized in \autoref{thm:Mallows}. We introduce the notation $\mathrm{diag}(\boldsymbol{v})$ to represent the diagonal matrix whose main diagonal is the vector $\boldsymbol{v}$.

\begin{theorem}[Mallows Distance \protect{\citep[Thm.~3.5]{oliveira2024}}]
    \label{thm:Mallows}
    For $i=1,2$, let $\boldsymbol{\tilde{x}}_i=(\boldsymbol{x}_i,F_{\boldsymbol{U}})$ denote the $i$-th realization of an interval-valued random vector $\boldsymbol{\mathcal{\tilde{X}}}$, where $\boldsymbol{x}_i=(\boldsymbol{c}_i^\top,\boldsymbol{r}_i^\top)^\top$ are the macrodata, $\boldsymbol{c}_i=(c_{i1},\dots,c_{ip})^\top\in\mathbb{R}^p$ are the centers, $\boldsymbol{r}_i=(r_{i1},\dots,r_{ip})^\top\in(\mathbb{R}^{+}_0)^p$ are the ranges, and $\boldsymbol{U}=(U_1,\dots,U_p)^\top$ is a latent random vector of independent random variables supported on $\left[-1,1\right]^p$, with distribution function $F_{\boldsymbol{U}}$, and finite second moments. Then, the squared Mallows distance between these intervals is
    \begin{equation}
        \label{eq:mallows}
        d_\mathrm{M}^2(\boldsymbol{x}_1, \boldsymbol{x}_2) = (\boldsymbol{c}_1-\boldsymbol{c}_2)^\top(\boldsymbol{c}_1-\boldsymbol{c}_2) + (\boldsymbol{r}_1-\boldsymbol{r}_2)^\top\boldsymbol{\Delta}(\boldsymbol{r}_1-\boldsymbol{r}_2) + (\boldsymbol{c}_1-\boldsymbol{c}_2)^\top\boldsymbol{\Psi}(\boldsymbol{r}_1-\boldsymbol{r}_2),
    \end{equation}
    where $\boldsymbol{\Delta}=\mathrm{diag}(\delta_1,\dots,\delta_p)$, with $\delta_j=\mathbb{E}(U_j^2)/4$ and $\boldsymbol{\Psi}=\mathrm{diag}(\mathbb{E}(U_1),\dots,\mathbb{E}(U_p))$.
\end{theorem}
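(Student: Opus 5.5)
The plan is to reduce the multivariate squared Mallows distance to a sum of univariate ones and then compute each univariate term explicitly through quantile functions. In the framework of \cite{oliveira2024}, the squared Mallows distance between two symbolic intervals is the squared $L_2$ Wasserstein distance between the microdata distributions $\boldsymbol{V}_1$ and $\boldsymbol{V}_2$. Since the components of $\boldsymbol{U}$ are independent, each $\boldsymbol{V}_i$ has a product law with marginals $V_{ij}=c_{ij}+U_j r_{ij}/2$. I will take the multivariate distance to be the sum of the univariate squared Mallows distances over $j=1,\dots,p$. If instead the definition is the genuine multivariate $L_2$ Wasserstein distance, the same sum results. Indeed, for any coupling of $\boldsymbol{V}_1$ and $\boldsymbol{V}_2$, the cost $\mathbb{E}\|\boldsymbol{V}_1-\boldsymbol{V}_2\|^2$ splits into a sum over coordinates, so it is at least the sum of the coordinatewise optimal costs. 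The product of the coordinatewise comonotone couplings is a valid joint coupling and attains this lower bound.

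For a single coordinate, I will use the univariate representation
\begin{equation*}
d^2_\mathrm{M}(V_{1j},V_{2j})=\int_0^1\bigl(F_{V_{1j}}^{-1}(t)-F_{V_{2j}}^{-1}(t)\bigr)^2\,dt .
\end{equation*}
Because $r_{ij}\ge 0$, the map $u\mapsto c_{ij}+u\,r_{ij}/2$ is nondecreasing, so the quantile function transforms affinely: $F_{V_{ij}}^{-1}(t)=c_{ij}+\tfrac{r_{ij}}{2}F_{U_j}^{-1}(t)$. The degenerate case $r_{ij}=0$ gives the constant quantile $c_{ij}$, which is consistent with this formula. Writing $\Delta c_j=c_{1j}-c_{2j}$ and $\Delta r_j=r_{1j}-r_{2j}$, the integrand becomes $\bigl(\Delta c_j+\tfrac{\Delta r_j}{2}F_{U_j}^{-1}(t)\bigr)^2$.

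Expanding the square and integrating uses two identities: $\int_0^1 F_{U_j}^{-1}(t)\,dt=\mathbb{E}(U_j)$ and $\int_0^1 (F_{U_j}^{-1}(t))^2\,dt=\mathbb{E}(U_j^2)$, since $F_{U_j}^{-1}(T)\sim U_j$ for $T$ uniform on $(0,1)$. Both integrals are finite by the finite second moment assumption. This gives
\begin{equation*}
(\Delta c_j)^2+\Delta c_j\,\Delta r_j\,\mathbb{E}(U_j)+(\Delta r_j)^2\,\mathbb{E}(U_j^2)/4 .
\end{equation*}
Summing over $j$ and writing the three sums in matrix form with $\boldsymbol{\Delta}$ and $\boldsymbol{\Psi}$ diagonal yields \eqref{eq:mallows}.

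The integral computation is routine. The main obstacle is the first step: making sure the multivariate distance decomposes into the sum over coordinates. I would settle this with the lower-bound-plus-attainment coupling argument above, which relies on the independence of the $U_j$ and on the fact that both interval-valued observations share the same $F_{\boldsymbol{U}}$.
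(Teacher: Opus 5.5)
Your proof is correct. The paper itself gives no proof of this statement: it is quoted from \cite{oliveira2024} (Thm.~3.5) and used as a black box, so there is no in-paper argument to compare yours against.

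Your steps all hold:
\begin{itemize}
\item The affine quantile identity $F_{V_{ij}}^{-1}(t)=c_{ij}+\tfrac{r_{ij}}{2}F_{U_j}^{-1}(t)$ is valid precisely because $r_{ij}\ge 0$.
\item Expanding the square and using $\int_0^1 F_{U_j}^{-1}(t)\,dt=\mathbb{E}(U_j)$ and $\int_0^1 \bigl(F_{U_j}^{-1}(t)\bigr)^2\,dt=\mathbb{E}(U_j^2)$ gives exactly the center, range and interaction terms of the formula.
\item Your lower-bound-plus-attainment coupling argument makes the result hold under either reading of the multivariate Mallows distance. This covers the sum of univariate distances and the genuine $L_2$ Wasserstein distance on $\mathbb{R}^p$.
\end{itemize}

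One small sharpening of your last paragraph. The product of the coordinatewise comonotone couplings is just the coupling $\bigl(c_{1j}+U_j r_{1j}/2,\ c_{2j}+U_j r_{2j}/2\bigr)_{j=1}^p$, obtained by feeding the same latent vector $\boldsymbol{U}$ into both observations. Each coordinate pair is then a pair of nondecreasing functions of the same $U_j$. So this coupling attains the coordinatewise lower bound for any joint law $F_{\boldsymbol{U}}$ that the two observations share.

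What drives the decomposition is therefore the shared latent law together with $r_{ij}\ge 0$. Independence of the $U_j$ is needed only for your specific product-of-marginals construction. The shared $F_{U_j}$ is what collapses the cross and range terms to $\mathbb{E}(U_j)$ and $\mathbb{E}(U_j^2)/4$.
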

\begin{remark}
    The parameter $\delta_j=\mathbb{E}(U_j^2)/4$, actually, takes values in the interval $[0,1/4]$. Moreover, if the latent random variables, $U_j$, are symmetric and identically distributed, with $\delta=\mathbb{E}(U_j^2)/4\in [0,1/4]$, then the squared Mallows distance can be simplified to $d_\mathrm{M}^2(\boldsymbol{x}_1, \boldsymbol{x}_2) = (\boldsymbol{c}_1-\boldsymbol{c}_2)^\top(\boldsymbol{c}_1-\boldsymbol{c}_2) + \delta(\boldsymbol{r}_1-\boldsymbol{r}_2)^\top(\boldsymbol{r}_1-\boldsymbol{r}_2)$.
\end{remark}

This distance can be used to derive descriptive statistics for interval-valued data. In particular, \cite{Irpino2015} combine the Mallows distance with the barycenter (or Fréchet mean) approach to define the sample barycenter of a set of intervals. The barycenter plays a central role in SDA, providing a representative for the data. It is defined as the interval that minimizes the weighted sum of squared distances to a set of intervals. In the case of uniform latent random variables, \cite{Irpino2006} showed that the sample barycenter is the interval whose center is the mean of the centers and whose range is the mean of the ranges of the set of intervals being considered. Based on the model linking microdata and macrodata, \cite{oliveira2024} generalized this result to any distribution of the latent random variables and derived the corresponding population barycenter. This leads to the following theorem.

\begin{theorem}[Barycenter \protect{\citep[Thm.~4.1]{oliveira2024}}]
    \label{thm:barycenter}
    Let $\boldsymbol{\mathcal{\tilde{X}}}=(\boldsymbol{\mathcal{X}},F_{\boldsymbol{U}})$ be an interval-valued random vector, with $\boldsymbol{\mathcal{X}}=(\boldsymbol{\mathcal{C}}^\top,\boldsymbol{\mathcal{R}}^\top)^\top$, where the centers and ranges, $\boldsymbol{\mathcal{C}}$ and $\boldsymbol{\mathcal{R}}$, are assumed to have finite expected values, $\boldsymbol{\mu}_C$ and $\boldsymbol{\mu}_R$, respectively. Additionally, let $\boldsymbol{U}=(U_1,\dots,U_p)^\top$ be a real-valued latent random vector of independent random variables, independent of $\boldsymbol{\mathcal{X}}$, with distribution $F_{\boldsymbol{U}}$. Then the population barycenter of $\boldsymbol{\mathcal{\tilde{X}}}$ is given by $\boldsymbol{\mu}_B=(\boldsymbol{\mu}_C,\boldsymbol{\mu}_R,F_{\boldsymbol{U}})$.
\end{theorem}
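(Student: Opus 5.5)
The population barycenter is the interval $\tilde{\boldsymbol{b}}=(\boldsymbol{b},F_{\boldsymbol{U}})$, with $\boldsymbol{b}=(\boldsymbol{c}^\top,\boldsymbol{r}^\top)^\top$ and $\boldsymbol{r}\ge\boldsymbol{0}$, that minimizes the expected squared Mallows distance $\mathbb{E}\,[d_\mathrm{M}^2(\boldsymbol{\mathcal{X}},\boldsymbol{b})]$. The plan is to reduce this minimization to a finite-dimensional quadratic problem using \autoref{thm:Mallows}, and then solve it directly. Because $\boldsymbol{U}$ is independent of $\boldsymbol{\mathcal{X}}$ and has independent coordinates, \autoref{thm:Mallows} holds pointwise for each realization. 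Hence, wherever it is finite, the objective is
\begin{equation*}
J(\boldsymbol{c},\boldsymbol{r})=\mathbb{E}\!\left[(\boldsymbol{\mathcal{C}}-\boldsymbol{c})^\top(\boldsymbol{\mathcal{C}}-\boldsymbol{c})+(\boldsymbol{\mathcal{R}}-\boldsymbol{r})^\top\boldsymbol{\Delta}(\boldsymbol{\mathcal{R}}-\boldsymbol{r})+(\boldsymbol{\mathcal{C}}-\boldsymbol{c})^\top\boldsymbol{\Psi}(\boldsymbol{\mathcal{R}}-\boldsymbol{r})\right].
\end{equation*}

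The key step is a bias--variance decomposition. Write $\boldsymbol{\mathcal{C}}-\boldsymbol{c}=(\boldsymbol{\mathcal{C}}-\boldsymbol{\mu}_C)+(\boldsymbol{\mu}_C-\boldsymbol{c})$, and similarly for the ranges. All cross terms that pair a centered random vector with a deterministic one vanish in expectation, since $\mathbb{E}(\boldsymbol{\mathcal{C}}-\boldsymbol{\mu}_C)=\boldsymbol{0}$ and $\mathbb{E}(\boldsymbol{\mathcal{R}}-\boldsymbol{\mu}_R)=\boldsymbol{0}$. This gives
\begin{equation*}
J(\boldsymbol{c},\boldsymbol{r})=J(\boldsymbol{\mu}_C,\boldsymbol{\mu}_R)+d_\mathrm{M}^2\big((\boldsymbol{\mu}_C,\boldsymbol{\mu}_R),(\boldsymbol{c},\boldsymbol{r})\big).
\end{equation*}
It therefore suffices to show that the quadratic form $Q(\boldsymbol{u},\boldsymbol{v})=\boldsymbol{u}^\top\boldsymbol{u}+\boldsymbol{v}^\top\boldsymbol{\Delta}\boldsymbol{v}+\boldsymbol{u}^\top\boldsymbol{\Psi}\boldsymbol{v}$ is nonnegative. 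Minimality then holds at $(\boldsymbol{\mu}_C,\boldsymbol{\mu}_R)$, and the minimizer is admissible because $\mathbb{P}(\boldsymbol{\mathcal{R}}\ge\boldsymbol{0})=1$ implies $\boldsymbol{\mu}_R\ge\boldsymbol{0}$.

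Nonnegativity of $Q$ follows directly from the fact that it equals a squared Mallows distance, which is nonnegative. One can also check it coordinatewise. Per coordinate, $Q_j=u^2+\tfrac14\mathbb{E}(U_j^2)v^2+\mathbb{E}(U_j)uv=\mathbb{E}\big[(u+U_jv/2)^2\big]\ge 0$. This expression also identifies when the minimizer is unique. Equality requires $u+U_jv/2=0$ almost surely, and since $U_j$ is absolutely continuous this forces $u=v=0$. So $Q$ is positive definite and the barycenter is unique.

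The main obstacle is integrability. The decomposition requires $J$ to be finite, which needs finite second moments of $\boldsymbol{\mathcal{C}}$ and $\boldsymbol{\mathcal{R}}$, whereas the statement assumes only finite means. To avoid this, I would instead minimize the difference $\mathbb{E}\big[d_\mathrm{M}^2(\boldsymbol{\mathcal{X}},\boldsymbol{b})-d_\mathrm{M}^2(\boldsymbol{\mathcal{X}},\boldsymbol{b}_0)\big]$ for a fixed reference $\boldsymbol{b}_0$. This difference is affine in $\boldsymbol{\mathcal{X}}$ plus a deterministic term, so it is integrable under first moments alone. The same decomposition then goes through verbatim.
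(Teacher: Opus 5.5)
The paper gives no proof of this statement; it imports it from the cited reference. Your argument therefore has to stand on its own, and its algebra does. The split $J(\boldsymbol c,\boldsymbol r)=J(\boldsymbol\mu_C,\boldsymbol\mu_R)+Q(\boldsymbol\mu_C-\boldsymbol c,\boldsymbol\mu_R-\boldsymbol r)$ is right. So is the identity $Q_j=\mathbb{E}[(u+U_jv/2)^2]$, which gives positive definiteness and uniqueness via absolute continuity of $U_j$. Admissibility via $\boldsymbol\mu_R\geq\boldsymbol 0$ and the reference-point device for the first-moment-only hypothesis are also correct.

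The gap is in your first sentence. You declare the barycenter to have the form $(\boldsymbol b,F_{\boldsymbol U})$ and then optimize only over the macrodata $\boldsymbol b$. As the paper describes it, the barycenter is the interval minimizing the expected squared Mallows distance. A symbolic interval here is $([a,b],F)$ with an arbitrary absolutely continuous $F$. So the fact that the optimal latent law is again $F_{\boldsymbol U}$ is part of what the theorem asserts; it is exactly what "generalizing the uniform case to any latent distribution" means. Your argument never rules out that a candidate with a different latent law $W$ does better. In fact \autoref{thm:Mallows} cannot be applied to such a competitor, since it is stated for two realizations of the same interval-valued random vector. With two different latent laws, the range part of the distance involves $\mathbb{E}(W_j)$, $\mathbb{E}(W_j^2)$ and the cross moment $\int_0^1F_{U_j}^{-1}(t)F_{W_j}^{-1}(t)\,dt$. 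Within the class of intervals sharing $F_{\boldsymbol U}$ your proof is complete, but there the third component of $\boldsymbol\mu_B$ is an assumption rather than a conclusion.

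The repair is to run your bias--variance split on quantile functions instead of macrodata, which also makes \autoref{thm:Mallows} unnecessary. In this framework the Mallows distance is the sum over coordinates of the $L_2$ distances between univariate quantile functions. The $j$-th microdata of $\boldsymbol{\mathcal{\tilde{X}}}$ has quantile function $A_j(t)=C_j+\tfrac12R_jF_{U_j}^{-1}(t)$; set $m_j(t)=\mathbb{E}A_j(t)=\mu_{C_j}+\tfrac12\mu_{R_j}F_{U_j}^{-1}(t)$. Take any candidate $\tilde{\boldsymbol y}$ whose $j$-th marginal has quantile function $G_j^{-1}$. Writing $A_j-G_j^{-1}=(A_j-m_j)+(m_j-G_j^{-1})$ pointwise in $t$ and using Fubini gives
\begin{equation*}
\mathbb{E}\,d_\mathrm{M}^2(\boldsymbol{\mathcal{\tilde{X}}},\tilde{\boldsymbol y})=\mathbb{E}\,d_\mathrm{M}^2(\boldsymbol{\mathcal{\tilde{X}}},\boldsymbol\mu_B)+\sum_{j=1}^p\int_0^1\bigl(m_j(t)-G_j^{-1}(t)\bigr)^2\,dt .
\end{equation*}
The extra term vanishes if and only if $G_j^{-1}=m_j$ almost everywhere. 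Moreover, $m_j$ is non-decreasing (because $\mu_{R_j}\geq 0$) and is exactly the quantile function of $\mu_{C_j}+\tfrac12\mu_{R_j}U_j$, that is, of $(\mu_{C_j},\mu_{R_j},F_{U_j})$. This delivers the macrodata, the latent law, and uniqueness at once. Restricting to $G_j^{-1}=c_j+\tfrac12 r_jF_{U_j}^{-1}$ turns the extra term into $\mathbb{E}[(u_j+U_jv_j/2)^2]$, so your computation is precisely the special case of this one. Your reference-point trick also carries over unchanged, because $A_j(t)$ enters the difference of two squared distances linearly and $|F_{U_j}^{-1}|\leq 1$.
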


The symbolic covariance matrix is also essential in SDA, summarizing the variability of multivariate interval-valued data. In \cite{oliveira2024}, the authors have shown that the symbolic covariance matrix has a closed expression based on the covariance matrices of the centers, ranges, and cross centers-ranges of the intervals, as well as the latent variables' first and second order moments. The symbolic covariance matrix based on the Mallows distance and the barycenter is given in Corollary~\ref{cor:symbolic_covariance_matrix}, where the notation $[\boldsymbol{A}]_{j\ell}$ represents the entry $(j,\ell)$ of the matrix $\boldsymbol{A}$.

\begin{corollary}[Symbolic Covariance Matrix \protect{\citep[Cor.~4.3]{oliveira2024}}]
    \label{cor:symbolic_covariance_matrix}
    Let $\boldsymbol{\mathcal{\tilde{X}}}=(\boldsymbol{\mathcal{X}},F_{\boldsymbol{U}})$ be an interval-valued random vector, with $\boldsymbol{\mathcal{X}}=(\boldsymbol{\mathcal{C}}^\top,\boldsymbol{\mathcal{R}}^\top)^\top$, where the centers and ranges, $\boldsymbol{\mathcal{C}}$ and $\boldsymbol{\mathcal{R}}$, have covariance matrices, $\boldsymbol{\Sigma}_{CC}$ and $\boldsymbol{\Sigma}_{RR}$, respectively, and $\boldsymbol{\Sigma}_{CR}$ is the covariance matrix between $\boldsymbol{\mathcal{C}}$ and $\boldsymbol{\mathcal{R}}$. Furthermore, let $\boldsymbol{U}=(U_1,\dots,U_p)^\top$ be a real-valued latent random vector of independent random variables, independent of $\boldsymbol{\mathcal{X}}$, with distribution $F_{\boldsymbol{U}}$. Then, the symbolic covariance matrix of $\boldsymbol{\mathcal{\tilde{X}}}$ is
    \begin{equation}
        \label{eq:cov_matrix}
        \boldsymbol{\Sigma}_B=\boldsymbol{\Sigma}_{CC}+\dfrac{1}{4}\boldsymbol{\mathfrak{E}}_{UU}\bullet\boldsymbol{\Sigma}_{RR}+\dfrac{1}{2}\boldsymbol{\Sigma}_{CR}\boldsymbol{\Psi}+\dfrac{1}{2}\boldsymbol{\Psi}\boldsymbol{\Sigma}_{RC},
    \end{equation}
    where $[\boldsymbol{\mathfrak{E}}_{UU}]_{j\ell}=\mathcal{E}(U_j,U_\ell)$, $j\neq \ell$, with $\mathcal{E}(U_j,U_\ell)=\mathbb{E}(F_{U_j}^{-1}(T) F_{U_\ell}^{-1}(T))=\int_0^1 F_{U_j}^{-1}(t) F_{U_\ell}^{-1}(t) \, dt$, $T\sim\mathrm{Unif}(0,1)$, $[\boldsymbol{\mathfrak{E}}_{UU}]_{jj}=\mathbb{E}(U_j^2)$, $\boldsymbol{\Psi}=\mathrm{diag}(\mathbb{E}(U_1),\dots,\mathbb{E}(U_p))$, and $\bullet$ denotes the Schur or entrywise product of matrices.
\end{corollary}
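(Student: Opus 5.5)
The plan is to derive the symbolic covariance matrix directly from its definition as the covariance of the microdata vector $\boldsymbol{V}$, built through the Mallows (quantile) coupling. Here the entry $(j,\ell)$ is the expected product of the deviations of $X_j$ and $X_\ell$ from the barycenter. By Theorem~\ref{thm:barycenter}, the barycenter is $\boldsymbol{\mu}_B=(\boldsymbol{\mu}_C,\boldsymbol{\mu}_R,F_{\boldsymbol{U}})$.

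The covariance between two interval variables is then taken as an expected inner product of centered quantile functions:
\[
[\boldsymbol{\Sigma}_B]_{j\ell}=\mathbb{E}\int_0^1\bigl(Q_j(t)-Q_{B,j}(t)\bigr)\bigl(Q_\ell(t)-Q_{B,\ell}(t)\bigr)\,dt .
\]
The same $t$ is used for both coordinates, since this is the coupling that polarizes the squared Mallows distance of Theorem~\ref{thm:Mallows}.

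Under the model $V_j=C_j+U_jR_j/2$, the quantile function of the $j$-th interval is
\[
Q_j(t)=C_j+\tfrac{R_j}{2}F_{U_j}^{-1}(t),
\]
because $R_j\ge 0$ makes the map increasing in $U_j$. The barycenter quantile is
\[
Q_{B,j}(t)=\mu_{C_j}+\tfrac{\mu_{R_j}}{2}F_{U_j}^{-1}(t).
\]
Hence the deviation is
\[
Q_j(t)-Q_{B,j}(t)=(C_j-\mu_{C_j})+\tfrac12(R_j-\mu_{R_j})F_{U_j}^{-1}(t).
\]

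Multiplying the deviations for $j$ and $\ell$ gives four terms. I will integrate each over $t\in(0,1)$ and then take expectations over $\boldsymbol{\mathcal{X}}$. Independence of $\boldsymbol{U}$ and $\boldsymbol{\mathcal{X}}$ lets the $t$-integrals factor out, so this order is legitimate. Finite second moments justify Fubini.

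The four terms are as follows.
\begin{itemize}
\item The center--center term gives $[\boldsymbol{\Sigma}_{CC}]_{j\ell}$.
\item The center--range term gives $\tfrac12\,\mathrm{Cov}(C_j,R_\ell)\int_0^1F_{U_\ell}^{-1}(t)\,dt=\tfrac12[\boldsymbol{\Sigma}_{CR}]_{j\ell}\,\mathbb{E}(U_\ell)$, since the integral of a quantile function is the mean. This is exactly the $(j,\ell)$ entry of $\tfrac12\boldsymbol{\Sigma}_{CR}\boldsymbol{\Psi}$.
\item Symmetrically, the range--center term yields $\tfrac12\boldsymbol{\Psi}\boldsymbol{\Sigma}_{RC}$.
\item The range--range term gives $\tfrac14[\boldsymbol{\Sigma}_{RR}]_{j\ell}\int_0^1F_{U_j}^{-1}F_{U_\ell}^{-1}\,dt=\tfrac14[\boldsymbol{\Sigma}_{RR}]_{j\ell}\,\mathcal{E}(U_j,U_\ell)$. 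When $j=\ell$, this integral equals $\mathbb{E}(U_j^2)$ because $F^{-1}(T)\sim U_j$. These two cases are precisely $\tfrac14\boldsymbol{\mathfrak{E}}_{UU}\bullet\boldsymbol{\Sigma}_{RR}$.
\end{itemize}
Assembling the four terms in matrix form yields \eqref{eq:cov_matrix}.

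The main obstacle is the first step: fixing the correct definition of the symbolic covariance so that it is compatible with Theorem~\ref{thm:Mallows}. The variance entries must match the Mallows distance to the barycenter, and the off-diagonal entries must use the common quantile coupling $F_{U_j}^{-1}(T),F_{U_\ell}^{-1}(T)$ rather than the independent joint law of $\boldsymbol{U}$. That coupling is why $\mathcal{E}(U_j,U_\ell)$ appears instead of $\mathbb{E}(U_j)\mathbb{E}(U_\ell)$.

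As a check, set $j=\ell$. The cross terms then give $\mathrm{Cov}(C_j,R_j)\mathbb{E}(U_j)$, which is consistent with the $\boldsymbol{\Psi}$ term in \eqref{eq:mallows}. The range term gives $\tfrac14\mathbb{E}(U_j^2)\mathrm{Var}(R_j)=\delta_j\mathrm{Var}(R_j)$, which matches $\boldsymbol{\Delta}$. The remaining steps are routine bookkeeping.
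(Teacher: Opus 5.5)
The paper does not prove this corollary. It is imported from \cite{oliveira2024} and only used afterwards, for example in the proof of Proposition~\ref{prop:cov_matrix_rewritten}, so there is no internal argument to compare yours with. On its own terms, your derivation is correct. Define the symbolic covariance as the expected $L^2(0,1)$ inner product of the quantile functions centered at the barycenter. Then the quantile representation $C_j+\tfrac{R_j}{2}F_{U_j}^{-1}(t)$, the barycenter of Theorem~\ref{thm:barycenter}, and your four-term expansion give exactly \eqref{eq:cov_matrix}. A useful consistency check you could add: replacing the expectation by an average over the active observations reproduces the paper's sample estimator \eqref{eq:cov_matrix_z}, after the rewriting of Proposition~\ref{prop:cov_matrix_rewritten}.

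Two justifications should be tightened.

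\textbf{The common-$t$ coupling is a definition, not a consequence.} Polarizing Theorem~\ref{thm:Mallows} does not force it, because that distance is a sum of univariate Mallows distances and never pairs $X_j$ with $X_\ell$ for $\ell\neq j$. Your $j=\ell$ check therefore validates only the diagonal. Pairing the two coordinates through independent latent draws would pass the same check, yet it would give $\mathbb{E}(U_j)\mathbb{E}(U_\ell)$ instead of $\mathcal{E}(U_j,U_\ell)$ off the diagonal. The common-$T$ coupling belongs to the definition in the cited work. The appearance of $\mathcal{E}(U_j,U_\ell)$ in the statement reflects it, as does the identity $[\boldsymbol{\Xi}]_{j\ell}=\mathrm{Cov}(F_{U_j}^{-1}(T),F_{U_\ell}^{-1}(T))$ in Proposition~\ref{prop:cov_matrix_rewritten}. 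Present it as the definition rather than as a consequence of the Mallows distance.

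\textbf{Independence of $\boldsymbol{U}$ and $\boldsymbol{\mathcal{X}}$ is used earlier than you say.} It is not what lets the $t$-integrals factor out; that is automatic, since $F_{U_j}^{-1}$ is deterministic. Independence is needed for two other things:
\begin{enumerate}
\item It guarantees that every realization carries the same latent law $F_{U_j}$. Its quantile function is then $C_j+\tfrac{R_j}{2}F_{U_j}^{-1}(t)$ rather than one built from a conditional quantile $F_{U_j\mid\boldsymbol{\mathcal{X}}}^{-1}$.
\item It is a hypothesis of Theorem~\ref{thm:barycenter}, which you invoke to obtain $\boldsymbol{\mu}_B$.
\end{enumerate}
Mutual independence of the $U_j$ plays no role in the computation, since only the marginals $F_{U_j}$ enter.
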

\begin{remark}
    If the latent random variables, $U_j$, are identically distributed and symmetric, then the symbolic covariance matrix can be simplified to $\boldsymbol{\Sigma}_B=\boldsymbol{\Sigma}_{CC}+\delta\boldsymbol{\Sigma}_{RR}$, where $\delta=\mathbb{E}(U_j^2)/4=\mathrm{Var}(U_j)/4$.
\end{remark}

The symbolic covariance matrix is pivotal in the development of multivariate statistical methods, such as Principal Component Analysis and Linear Discriminant Analysis, in the context of interval-valued data. However, the sample estimator of this matrix, obtained by replacing the population parameters with their sample counterparts, is sensitive to the presence of outliers in the data. This sensitivity can lead to misleading results and interpretations. To address this issue, we propose a robust estimator of the symbolic covariance matrix, which is introduced in the next section.

\section{IMCD Estimator}
\label{sec:MCD}
When thinking about a robust estimator for the covariance matrix, the Minimum Covariance Determinant (MCD) estimator is a natural choice. The MCD estimator was introduced by \cite{Rousseeuw1984,Rousseeuw1985} as a robust estimator of location and scatter for multivariate data, but it was only with the introduction of the computationally efficient FastMCD algorithm in 1999 \citep{fastMCD} that it became widely used. The idea behind the MCD estimator is to find a subset of the data that has the smallest determinant of the covariance matrix, under the assumption that this subset will be less likely to include outliers. This subset is then used to compute robust estimates of location and scatter. The MCD estimator is affine equivariant and has a high breakdown point. These properties make it a suitable choice for robust statistics, having been widely used in various fields, including finance, medicine, and chemistry. For more details, see \cite{hubert2018}.

\subsection{Derivation of the \textnormal{IMCD} Estimator}
Taking inspiration from the conventional case, we propose an extension of the MCD estimator to interval-valued data. In order to derive the interval-valued MCD estimator, we first rewrite the symbolic covariance matrix as follows.

\begin{proposition}
    \label{prop:cov_matrix_rewritten}
    Under the conditions of Corollary~\ref{cor:symbolic_covariance_matrix}, the symbolic covariance matrix \eqref{eq:cov_matrix} can be rewritten as
    \begin{equation}
        \label{eq:cov_matrix_rewritten}
        \boldsymbol{\Sigma}_B=\boldsymbol{\Lambda}\boldsymbol{\Sigma}_{2p}\boldsymbol{\Lambda}^\top
        + \frac{1}{4}\boldsymbol{\Xi}\bullet\boldsymbol{\Sigma}_{RR},
    \end{equation}
    where $\boldsymbol{\Lambda}=\begin{bmatrix}
                \boldsymbol{I}_p & \boldsymbol{\Psi}/2
            \end{bmatrix}\in\mathbb{R}^{p\times2p}$, $\boldsymbol{\Sigma}_{2p}=\begin{bmatrix}
                \boldsymbol{\Sigma}_{CC} & \boldsymbol{\Sigma}_{CR} \\
                \boldsymbol{\Sigma}_{RC} & \boldsymbol{\Sigma}_{RR}
        \end{bmatrix}$ is the $2p\times 2p$ covariance matrix of the centers and ranges, $\boldsymbol{I}_p$ is the identity matrix of dimension $p$, $\boldsymbol{\Psi}=\mathrm{diag}(\mathbb{E}(U_1),\dots,\mathbb{E}(U_p))$, and $\boldsymbol{\Xi}$ is a $p\times p$ matrix whose elements are defined by
        \begin{equation}
        \label{eq:cov_xi}
            [\boldsymbol{\Xi}]_{j\ell}=\widetilde{\mathrm{Cov}}(U_j,U_\ell)=\begin{cases}
                \mathrm{Var}(U_j), &\mathrm{if} \quad j=\ell,\\
                \mathrm{Cov}(F_{U_j}^{-1}(T), F_{U_\ell}^{-1}(T)), &\mathrm{if} \quad j\neq \ell,
            \end{cases}
        \end{equation}
        with $T\sim\mathrm{Unif}(0,1)$.
\end{proposition}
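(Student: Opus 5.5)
Starting from the expression \eqref{eq:cov_matrix} of Corollary~\ref{cor:symbolic_covariance_matrix}, we expand the block product on the right-hand side of \eqref{eq:cov_matrix_rewritten}. This leaves a term in $\boldsymbol{\Sigma}_{RR}$ whose coefficient matrix we rearrange. The proof is a direct algebraic verification and needs no further probabilistic input beyond the definitions of $\mathcal{E}(U_j,U_\ell)$ and of $\widetilde{\mathrm{Cov}}$.

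\emph{Step 1: the block product.} Since $\boldsymbol{\Psi}$ is diagonal, $\boldsymbol{\Psi}^\top=\boldsymbol{\Psi}$, so $\boldsymbol{\Lambda}^\top=\begin{bmatrix}\boldsymbol{I}_p & \boldsymbol{\Psi}/2\end{bmatrix}^\top$. Block multiplication gives
\begin{equation*}
\boldsymbol{\Lambda}\boldsymbol{\Sigma}_{2p}\boldsymbol{\Lambda}^\top=\boldsymbol{\Sigma}_{CC}+\tfrac12\boldsymbol{\Sigma}_{CR}\boldsymbol{\Psi}+\tfrac12\boldsymbol{\Psi}\boldsymbol{\Sigma}_{RC}+\tfrac14\boldsymbol{\Psi}\boldsymbol{\Sigma}_{RR}\boldsymbol{\Psi}.
\end{equation*}
Comparing with \eqref{eq:cov_matrix}, the first three terms already appear there. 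It therefore remains to show that
\begin{equation*}
\tfrac14\boldsymbol{\mathfrak{E}}_{UU}\bullet\boldsymbol{\Sigma}_{RR}=\tfrac14\boldsymbol{\Psi}\boldsymbol{\Sigma}_{RR}\boldsymbol{\Psi}+\tfrac14\boldsymbol{\Xi}\bullet\boldsymbol{\Sigma}_{RR}.
\end{equation*}

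\emph{Step 2: the $\boldsymbol{\Sigma}_{RR}$ term as a Schur product.} Because $\boldsymbol{\Psi}$ is diagonal, the entries satisfy $[\boldsymbol{\Psi}\boldsymbol{\Sigma}_{RR}\boldsymbol{\Psi}]_{j\ell}=\mathbb{E}(U_j)\mathbb{E}(U_\ell)[\boldsymbol{\Sigma}_{RR}]_{j\ell}$. Hence $\boldsymbol{\Psi}\boldsymbol{\Sigma}_{RR}\boldsymbol{\Psi}=(\boldsymbol{m}\boldsymbol{m}^\top)\bullet\boldsymbol{\Sigma}_{RR}$, where $\boldsymbol{m}=(\mathbb{E}(U_1),\dots,\mathbb{E}(U_p))^\top$. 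By distributivity of the Schur product, the identity to prove reduces to the entrywise statement
\begin{equation*}
\boldsymbol{\mathfrak{E}}_{UU}=\boldsymbol{m}\boldsymbol{m}^\top+\boldsymbol{\Xi}.
\end{equation*}

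\emph{Step 3: entrywise check.} On the diagonal, $\mathbb{E}(U_j^2)=\mathbb{E}(U_j)^2+\mathrm{Var}(U_j)$, which is the required identity. Off the diagonal, let $T\sim\mathrm{Unif}(0,1)$. By the quantile transform, $F_{U_j}^{-1}(T)$ has the same law as $U_j$, so $\mathbb{E}(F_{U_j}^{-1}(T))=\mathbb{E}(U_j)$. Therefore
\begin{equation*}
\mathcal{E}(U_j,U_\ell)=\mathbb{E}\bigl(F_{U_j}^{-1}(T)F_{U_\ell}^{-1}(T)\bigr)=\mathrm{Cov}\bigl(F_{U_j}^{-1}(T),F_{U_\ell}^{-1}(T)\bigr)+\mathbb{E}(U_j)\mathbb{E}(U_\ell),
\end{equation*}
which is the required identity by the definition \eqref{eq:cov_xi}.

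The only step requiring some care is this off-diagonal identity. It rests on the quantile transform, which gives $\mathbb{E}(F_{U_j}^{-1}(T))=\mathbb{E}(U_j)$ for an absolutely continuous $U_j$, and on the finiteness of second moments, which is guaranteed by the bounded support $[-1,1]$. Everything else is routine block and Schur-product algebra.
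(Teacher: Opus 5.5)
Your proof is correct and follows the paper's argument: expand $\boldsymbol{\Lambda}\boldsymbol{\Sigma}_{2p}\boldsymbol{\Lambda}^\top$, isolate the leftover $\tfrac14\boldsymbol{\Psi}\boldsymbol{\Sigma}_{RR}\boldsymbol{\Psi}$ term, and check entrywise that $\boldsymbol{\mathfrak{E}}_{UU}\bullet\boldsymbol{\Sigma}_{RR}-\boldsymbol{\Psi}\boldsymbol{\Sigma}_{RR}\boldsymbol{\Psi}=\boldsymbol{\Xi}\bullet\boldsymbol{\Sigma}_{RR}$. Factoring out $\boldsymbol{\Sigma}_{RR}$ through $\boldsymbol{\Psi}\boldsymbol{\Sigma}_{RR}\boldsymbol{\Psi}=(\boldsymbol{m}\boldsymbol{m}^\top)\bullet\boldsymbol{\Sigma}_{RR}$, and stating explicitly that $\mathbb{E}(F_{U_j}^{-1}(T))=\mathbb{E}(U_j)$ (which the paper uses without comment), are only presentational differences.
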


\begin{proof}
    The proof of this proposition is given in \ref{sec:proof_cov_matrix_rewritten}.
\end{proof}

In the interval-valued data setting, we consider that data are stored in a matrix $\boldsymbol{X} = (\boldsymbol{x}_1,\dots,\boldsymbol{x}_n)^\top$, where $\boldsymbol{x}_i\in\mathbb{IR}^p$, for $i=1,\dots,n$, are $n$ multivariate interval-valued observations. Suppose that the centers and ranges of the intervals are stored in the data matrices $\boldsymbol{C} = (\boldsymbol{c}_1,\dots,\boldsymbol{c}_n)^\top\in\mathbb{R}^{n\times p}$ and $\boldsymbol{R} = (\boldsymbol{r}_1,\dots,\boldsymbol{r}_n)^\top\in(\mathbb{R}_0^+)^{n\times p}$, respectively. Then, we have $\boldsymbol{x}_i=(\boldsymbol{c}_i^\top,\boldsymbol{r}_i^\top)^\top$ and $\boldsymbol{X} = [\boldsymbol{C}, \boldsymbol{R}]\in\mathbb{R}^{n\times2p}$. Throughout this work, we also assume that the parameters of the latent variables' distributions are known. In practice, these parameters can be estimated from the microdata when available or assumed based on prior knowledge.

Following the idea of the conventional MCD estimator, we want to find a subset of $m$ observations with the smallest global variability, which is measured by the determinant of the covariance matrix. Let $\boldsymbol{z}=(z_1,\dots,z_n)^\top\in[0,1]^n$ be a weight vector indicating which observations are ``active'', where $z_i=1$ corresponds to inclusion of observation $i$ in the subset and $z_i=0$ to exclusion. Then, assuming $\sum_{i=1}^nz_i = m$, the sample barycenter can be written as a function of $\boldsymbol{z}$ resulting in
\begin{equation}
    \label{eq:barycenter_z}
    \overline{\boldsymbol{x}}_B(\boldsymbol{z})=\frac{1}{m}\boldsymbol{X}^{\top}\boldsymbol{z}=\left(\left(\frac{1}{m}\boldsymbol{C}^{\top}\boldsymbol{z}\right)^\top,\left(\frac{1}{m}\boldsymbol{R}^{\top}\boldsymbol{z}\right)^\top\right)^\top:=\left(\overline{\boldsymbol{c}}(\boldsymbol{z})^\top,\overline{\boldsymbol{r}}(\boldsymbol{z})^\top\right)^\top.
\end{equation}
Likewise, the sample symbolic covariance matrix of the ``active'' observations can be written as a function of $\boldsymbol{z}$:
\begin{equation}
    \label{eq:cov_matrix_z}
    \boldsymbol{S}_B(\boldsymbol{z})=\boldsymbol{\Lambda}\left(\frac{1}{m}\sum_{i=1}^nz_i\boldsymbol{x}_i\boldsymbol{x}_i^\top-\overline{\boldsymbol{x}}_B(\boldsymbol{z})\overline{\boldsymbol{x}}_B(\boldsymbol{z})^\top\right)\boldsymbol{\Lambda}^\top+\frac{1}{4}\boldsymbol{\Xi}\bullet\left(\frac{1}{m}\sum_{i=1}^nz_i\boldsymbol{r}_i\boldsymbol{r}_i^\top-\overline{\boldsymbol{r}}(\boldsymbol{z})\overline{\boldsymbol{r}}(\boldsymbol{z})^\top\right).
\end{equation}

We are now able to define the optimization problem that yields the IMCD estimates: 
\begin{mini}|l|[2]
    {\boldsymbol{z}}{\log\det\left(\boldsymbol{S}_B(\boldsymbol{z})\right)}{}{\label{eq:opt_prob}}
    \addConstraint{\sum_{i=1}^n z_i = m,\;\; \boldsymbol{z}\in\{0,1\}^n.}{}{}
\end{mini}
In order to solve this optimization problem, we take a closer look at the objective function, which is defined on the domain of subsets such that $\boldsymbol{S}_B(\boldsymbol{z})$ is positive definite. As explained in \cite{convex_book}, an important characteristic of the objective function in the optimization context is whether it is convex or concave. In our case, the objective function is concave, as shown in the following proposition.

\begin{proposition}
    \label{prop:concavity}
    The objective function $g(\boldsymbol{z})=\log\det\left(\boldsymbol{S}_B(\boldsymbol{z})\right)$ is concave in $\boldsymbol{z}$.
\end{proposition}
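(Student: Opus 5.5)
The plan is to view $\boldsymbol{S}_B(\boldsymbol{z})$ as a matrix-valued function on the relaxed feasible set $\{\boldsymbol{z}\in[0,1]^n:\sum_i z_i=m\}$, on which the factor $1/m$ is a constant. I will show that it is \emph{matrix-concave} with respect to the Loewner order. Concavity of $g$ then follows because $\log\det$ is concave and Loewner-nondecreasing on positive definite matrices. Concretely, for $\boldsymbol{z}_1,\boldsymbol{z}_2$ in the domain and $t\in[0,1]$, write $\boldsymbol{z}_t=t\boldsymbol{z}_1+(1-t)\boldsymbol{z}_2$. The goal is the chain
\begin{equation*}
g(\boldsymbol{z}_t)\ge \log\det\bigl(t\boldsymbol{S}_B(\boldsymbol{z}_1)+(1-t)\boldsymbol{S}_B(\boldsymbol{z}_2)\bigr)\ge t\,g(\boldsymbol{z}_1)+(1-t)\,g(\boldsymbol{z}_2).
\end{equation*}
The first inequality is monotonicity of $\log\det$ combined with matrix-concavity. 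The second is the standard concavity of $\log\det$ \citep{convex_book}. Matrix-concavity also makes the domain $\{\boldsymbol{z}:\boldsymbol{S}_B(\boldsymbol{z})\succ 0\}$ convex: if $\boldsymbol{S}_B(\boldsymbol{z}_1),\boldsymbol{S}_B(\boldsymbol{z}_2)\succ0$, then $\boldsymbol{S}_B(\boldsymbol{z}_t)\succeq t\boldsymbol{S}_B(\boldsymbol{z}_1)+(1-t)\boldsymbol{S}_B(\boldsymbol{z}_2)\succ 0$.

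\textbf{Step 1: the inner $2p\times 2p$ matrix.} Consider $\boldsymbol{M}(\boldsymbol{z})=\frac1m\sum_i z_i\boldsymbol{x}_i\boldsymbol{x}_i^\top-\frac1{m^2}\boldsymbol{X}^\top\boldsymbol{z}\boldsymbol{z}^\top\boldsymbol{X}$.
\begin{itemize}
\item The first term is linear in $\boldsymbol{z}$.
\item For the second term, any fixed $\boldsymbol{v}$ gives $\boldsymbol{v}^\top(-\boldsymbol{X}^\top\boldsymbol{z}\boldsymbol{z}^\top\boldsymbol{X})\boldsymbol{v}=-(\boldsymbol{v}^\top\boldsymbol{X}^\top\boldsymbol{z})^2$. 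This is minus the square of a linear function, hence concave.
\item So $\boldsymbol{v}^\top\boldsymbol{M}(\boldsymbol{z})\boldsymbol{v}$ is concave for every $\boldsymbol{v}$, which is exactly matrix-concavity of $\boldsymbol{M}$.
\end{itemize}
The same argument applied to $\boldsymbol{R}$ shows that the range block $\boldsymbol{M}_R(\boldsymbol{z})=\frac1m\sum_i z_i\boldsymbol{r}_i\boldsymbol{r}_i^\top-\overline{\boldsymbol{r}}(\boldsymbol{z})\overline{\boldsymbol{r}}(\boldsymbol{z})^\top$ is matrix-concave.

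\textbf{Step 2: the two outer maps preserve matrix-concavity.} For the first term, the congruence $\boldsymbol{A}\mapsto\boldsymbol{\Lambda}\boldsymbol{A}\boldsymbol{\Lambda}^\top$ is linear and preserves the Loewner order, so $\boldsymbol{\Lambda}\boldsymbol{M}(\boldsymbol{z})\boldsymbol{\Lambda}^\top$ is matrix-concave. For the second term, the map $\boldsymbol{A}\mapsto\boldsymbol{\Xi}\bullet\boldsymbol{A}$ is linear. By the Schur product theorem it preserves the Loewner order provided $\boldsymbol{\Xi}\succeq0$. Granting this, $\boldsymbol{\Xi}\bullet\boldsymbol{M}_R(\boldsymbol{z}_t)-\bigl[t\,\boldsymbol{\Xi}\bullet\boldsymbol{M}_R(\boldsymbol{z}_1)+(1-t)\,\boldsymbol{\Xi}\bullet\boldsymbol{M}_R(\boldsymbol{z}_2)\bigr]$ is the Schur product of $\boldsymbol{\Xi}$ with a positive semidefinite matrix, hence positive semidefinite. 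Summing the two matrix-concave terms gives matrix-concavity of $\boldsymbol{S}_B$.

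\textbf{Step 3 (main obstacle): $\boldsymbol{\Xi}\succeq0$.} This is the delicate point, since $\boldsymbol{\Xi}$ is defined entrywise through quantile couplings rather than as an obvious covariance. Let $\boldsymbol{W}=(F_{U_1}^{-1}(T),\dots,F_{U_p}^{-1}(T))^\top$ with a single $T\sim\mathrm{Unif}(0,1)$ shared by all components.
\begin{itemize}
\item Each $F_{U_j}^{-1}(T)$ has the same law as $U_j$, so $\mathrm{Var}(W_j)=\mathrm{Var}(U_j)=[\boldsymbol{\Xi}]_{jj}$.
\item The off-diagonal entries of $\boldsymbol{\Xi}$ are, by definition \eqref{eq:cov_xi}, $\mathrm{Cov}(W_j,W_\ell)$.
\end{itemize}
Hence $\boldsymbol{\Xi}=\mathrm{Cov}(\boldsymbol{W})$, which is positive semidefinite. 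This also uses the finite second moments assumed in Corollary~\ref{cor:symbolic_covariance_matrix}. With Steps 1–3 in place, the chain of inequalities above gives concavity of $g$ on the relaxed domain, and a fortiori on the feasible binary points.
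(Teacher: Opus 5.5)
Your proof is correct and follows the paper's route. It establishes matrix-concavity of the two centered second-moment matrices, then preservation under the congruence $\boldsymbol{A}\mapsto\boldsymbol{\Lambda}\boldsymbol{A}\boldsymbol{\Lambda}^\top$ and under Schur multiplication by $\boldsymbol{\Xi}\succeq 0$, and finally composition with the concave, Loewner-nondecreasing $\log\det$. You justify $\boldsymbol{\Xi}\succeq 0$ exactly as the paper does, as the covariance of $(F_{U_1}^{-1}(T),\dots,F_{U_p}^{-1}(T))^\top$. Your quadratic-form check via $-(\boldsymbol{v}^\top\boldsymbol{X}^\top\boldsymbol{z})^2$ is a compact version of the paper's expansion down to $(\boldsymbol{X}^\top\boldsymbol{z}-\boldsymbol{X}^\top\boldsymbol{w})(\boldsymbol{X}^\top\boldsymbol{z}-\boldsymbol{X}^\top\boldsymbol{w})^\top\succeq 0$. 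You also explicitly verify that the domain $\{\boldsymbol{z}:\boldsymbol{S}_B(\boldsymbol{z})\succ 0\}$ is convex, which the paper leaves implicit.
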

\begin{proof}
    The proof of this proposition is given in \ref{sec:proof_matrix_concave}.
\end{proof}

Another desirable property of the objective function is the existence of a well-defined gradient, which is crucial for the optimization algorithm used to solve the optimization problem \eqref{eq:opt_prob}. The gradient of the objective function is given in the following proposition.
\begin{proposition}
    \label{prop:gradient}
    Let $g(\boldsymbol{z})=\log\det\left(\boldsymbol{S}_B(\boldsymbol{z})\right)$ be the objective function of the \textnormal{IMCD} estimator defined in \eqref{eq:opt_prob}. Then, for $i=1,\dots,n$,  the $i$-th component of the gradient, $\nabla g(\boldsymbol{z})=\left(\frac{\partial g(\boldsymbol{z})}{\partial z_1},\dots,\frac{\partial g(\boldsymbol{z})}{\partial z_n}\right)^\top$, is given by
    \begin{equation}
        \begin{aligned}
        \label{eq:gradient}
        \frac{\partial g(\boldsymbol{z})}{\partial z_i}&=\frac{1}{m}\Bigg[(\boldsymbol{x}_i-\overline{\boldsymbol{x}}_B(\boldsymbol{z}))^\top\boldsymbol{\Lambda}^\top\boldsymbol{S}_B(\boldsymbol{z})^{-1}\boldsymbol{\Lambda}(\boldsymbol{x}_i-\overline{\boldsymbol{x}}_B(\boldsymbol{z}))-\overline{\boldsymbol{x}}_B(\boldsymbol{z})^\top\boldsymbol{\Lambda}^\top\boldsymbol{S}_B(\boldsymbol{z})^{-1}\boldsymbol{\Lambda}\overline{\boldsymbol{x}}_B(\boldsymbol{z})\\
        &+\frac{1}{4}(\boldsymbol{r}_i-\overline{\boldsymbol{r}}(\boldsymbol{z}))^\top\left(\boldsymbol{\Xi}\bullet\boldsymbol{S}_B(\boldsymbol{z})^{-1}\right)(\boldsymbol{r}_i-\overline{\boldsymbol{r}}(\boldsymbol{z}))-\frac{1}{4}\overline{\boldsymbol{r}}(\boldsymbol{z})^\top\left(\boldsymbol{\Xi}\bullet\boldsymbol{S}_B(\boldsymbol{z})^{-1}\right)\overline{\boldsymbol{r}}(\boldsymbol{z})\Bigg].
        \end{aligned}
    \end{equation}
\end{proposition}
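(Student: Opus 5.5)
The plan is to treat $\boldsymbol{z}$ as a continuous variable in the relaxed domain, with $m$ held fixed as the constant in the constraint $\sum_i z_i=m$, exactly as in \eqref{eq:barycenter_z} and \eqref{eq:cov_matrix_z}. I then differentiate through the log-determinant using Jacobi's formula: for a differentiable family of positive definite matrices,
$$\frac{\partial}{\partial z_i}\log\det \boldsymbol{S}_B(\boldsymbol{z})=\mathrm{tr}\!\left(\boldsymbol{S}_B(\boldsymbol{z})^{-1}\frac{\partial \boldsymbol{S}_B(\boldsymbol{z})}{\partial z_i}\right).$$
This is valid on the domain where $\boldsymbol{S}_B(\boldsymbol{z})$ is positive definite, the same domain on which Proposition~\ref{prop:concavity} is stated. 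The whole task therefore reduces to computing $\partial\boldsymbol{S}_B/\partial z_i$ and rewriting two traces as quadratic forms.

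For the derivative, note that $\partial\overline{\boldsymbol{x}}_B(\boldsymbol{z})/\partial z_i=\boldsymbol{x}_i/m$ and $\partial\overline{\boldsymbol{r}}(\boldsymbol{z})/\partial z_i=\boldsymbol{r}_i/m$. Differentiating the inner matrix of the first summand in \eqref{eq:cov_matrix_z} gives
$$\tfrac1m\left(\boldsymbol{x}_i\boldsymbol{x}_i^\top-\boldsymbol{x}_i\overline{\boldsymbol{x}}_B^\top-\overline{\boldsymbol{x}}_B\boldsymbol{x}_i^\top\right),$$
which I rewrite by completing the square as $\tfrac1m\left[(\boldsymbol{x}_i-\overline{\boldsymbol{x}}_B)(\boldsymbol{x}_i-\overline{\boldsymbol{x}}_B)^\top-\overline{\boldsymbol{x}}_B\overline{\boldsymbol{x}}_B^\top\right]$. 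The same computation for the range part gives $\tfrac1m\left[(\boldsymbol{r}_i-\overline{\boldsymbol{r}})(\boldsymbol{r}_i-\overline{\boldsymbol{r}})^\top-\overline{\boldsymbol{r}}\,\overline{\boldsymbol{r}}^\top\right]$. Because $\boldsymbol{\Lambda}$ and $\boldsymbol{\Xi}$ do not depend on $\boldsymbol{z}$, linearity gives
$$\frac{\partial \boldsymbol{S}_B}{\partial z_i}=\boldsymbol{\Lambda}(\cdot)\boldsymbol{\Lambda}^\top+\tfrac14\boldsymbol{\Xi}\bullet(\cdot),$$
with the two bracketed matrices above inserted in place of $(\cdot)$.

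It remains to evaluate the traces.

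\emph{First term.} Cyclicity gives $\mathrm{tr}(\boldsymbol{S}_B^{-1}\boldsymbol{\Lambda}\boldsymbol{v}\boldsymbol{v}^\top\boldsymbol{\Lambda}^\top)=\boldsymbol{v}^\top\boldsymbol{\Lambda}^\top\boldsymbol{S}_B^{-1}\boldsymbol{\Lambda}\boldsymbol{v}$. Applying this with $\boldsymbol{v}=\boldsymbol{x}_i-\overline{\boldsymbol{x}}_B$ and with $\boldsymbol{v}=\overline{\boldsymbol{x}}_B$ yields the first two terms of \eqref{eq:gradient}.

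\emph{Hadamard term.} This is the only step needing care. For symmetric matrices $\boldsymbol{A}$, $\boldsymbol{B}$, $\boldsymbol{C}$ we have $\mathrm{tr}(\boldsymbol{A}(\boldsymbol{B}\bullet\boldsymbol{C}))=\sum_{j,\ell}[\boldsymbol{A}]_{\ell j}[\boldsymbol{B}]_{j\ell}[\boldsymbol{C}]_{j\ell}=\mathrm{tr}((\boldsymbol{A}\bullet\boldsymbol{B})\boldsymbol{C})$. Taking $\boldsymbol{C}=\boldsymbol{v}\boldsymbol{v}^\top$ this becomes $\boldsymbol{v}^\top(\boldsymbol{B}\bullet\boldsymbol{A})\boldsymbol{v}$. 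I apply it with $\boldsymbol{A}=\boldsymbol{S}_B^{-1}$ and $\boldsymbol{B}=\boldsymbol{\Xi}$. The symmetry hypotheses hold: $\boldsymbol{S}_B^{-1}$ is symmetric because $\boldsymbol{S}_B$ is, and $\boldsymbol{\Xi}$ is symmetric by \eqref{eq:cov_xi}, since $\mathrm{Cov}(F_{U_j}^{-1}(T),F_{U_\ell}^{-1}(T))$ is symmetric in $(j,\ell)$. With $\boldsymbol{v}=\boldsymbol{r}_i-\overline{\boldsymbol{r}}$ and $\boldsymbol{v}=\overline{\boldsymbol{r}}$ this gives the last two terms of \eqref{eq:gradient}, each with the factor $1/4$.

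Collecting the common factor $1/m$ gives the stated formula. The main, though modest, obstacle is this Hadamard trace identity together with justifying the symmetry it needs; everything else is routine matrix calculus.
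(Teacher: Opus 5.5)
Your proposal is correct and follows essentially the same route as the paper: Jacobi's formula for the derivative of $\log\det$, differentiating the covariance blocks with $m$ held fixed, cyclicity of the trace, and the Hadamard identity $\mathrm{tr}(\boldsymbol{A}(\boldsymbol{B}\bullet\boldsymbol{c}\boldsymbol{c}^\top))=\boldsymbol{c}^\top(\boldsymbol{B}\bullet\boldsymbol{A})\boldsymbol{c}$ for symmetric $\boldsymbol{B}$. The only cosmetic difference is that you complete the square before taking traces, whereas the paper expands the six trace terms first and regroups them into the stated quadratic forms at the end.
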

\begin{proof}
    The proof of this proposition is given in \ref{sec:proof_gradient}.
\end{proof}

As shown above, the objective function is concave, which means that its minimization is generally an NP-hard problem. This is where the Majorization-Minorization (MM) algorithm comes in. The MM algorithm \citep{MMalgorithm} is an iterative optimization method that converts a hard optimization problem into a series of easier ones. It does this by finding a surrogate function that majorizes the objective function and is easier to minimize. The idea is then to optimize the surrogate function instead of directly minimizing the objective function. In this way, the MM algorithm iteratively refines the solution estimate.

The Majorization step consists, at each iteration $t$, in constructing the surrogate function, $G(\boldsymbol{z}|\boldsymbol{z}^{(t)})$, which upper bounds the objective function, $g(\boldsymbol{z})=\log\det\boldsymbol{S}_B(\boldsymbol{z})$, and coincides with it at the current estimate $\boldsymbol{z}^{(t)}$. Therefore, for all $\boldsymbol{z}\in\{0,1\}^n$, the surrogate function needs to satisfy $G(\boldsymbol{z}|\boldsymbol{z}^{(t)})\geq g(\boldsymbol{z})$ and $G(\boldsymbol{z}^{(t)}|\boldsymbol{z}^{(t)})=g(\boldsymbol{z}^{(t)})$. In our case, since $g(\boldsymbol{z})=\log\det\boldsymbol{S}_B(\boldsymbol{z})$ is concave and continuously differentiable, it can be upper bounded using its first-order Taylor expansion around the current estimate $\boldsymbol{z}^{(t)}$. The surrogate function is, then, constructed as $G(\boldsymbol{z}|\boldsymbol{z}^{(t)})=g(\boldsymbol{z}^{(t)})+\nabla g(\boldsymbol{z}^{(t)})^\top(\boldsymbol{z}-\boldsymbol{z}^{(t)})$, where $\nabla g(\boldsymbol{z}^{(t)})$ is the gradient of the objective function evaluated at $\boldsymbol{z}^{(t)}$.

The Minorization step consists of minimizing the surrogate function, $G(\boldsymbol{z}|\boldsymbol{z}^{(t)})$. Using the gradient of the objective function \eqref{eq:gradient} obtained in Proposition~\ref{prop:gradient}, at each iteration step $t$, we can compute the new estimate, $\boldsymbol{z}^{(t+1)}$, as
\begin{alignat}{3}
    \boldsymbol{z}^{(t+1)}&=&&\ \underset{\boldsymbol{z}\in\{0,1\}^n,\ \sum_iz_i=m}{\mathrm{arg\,min}}&&\quad g(\boldsymbol{z}^{(t)})+\nabla g(\boldsymbol{z}^{(t)})^\top(\boldsymbol{z}-\boldsymbol{z}^{(t)})\notag
    = \underset{\boldsymbol{z}\in\{0,1\}^n,\ \sum_iz_i=m}{\mathrm{arg\,min}}\quad \nabla g(\boldsymbol{z}^{(t)})^\top\boldsymbol{z}\notag\\
    &=&&\ \underset{\boldsymbol{z}\in\{0,1\}^n,\ \sum_iz_i=m}{\mathrm{arg\,min}}&&\quad\sum_{i=1}^n d_i^2(\boldsymbol{z}^{(t)}) z_i,\label{eq:MM_min}
\end{alignat}
where
\begin{equation}
    \begin{aligned}
    \label{eq:terms}
    d_i^2(\boldsymbol{z}^{(t)})&=\left(\boldsymbol{x}_i-\overline{\boldsymbol{x}}_B(\boldsymbol{z}^{(t)})\right)^\top\boldsymbol{\Lambda}^\top\boldsymbol{S}_B(\boldsymbol{z}^{(t)})^{-1}\boldsymbol{\Lambda}\left(\boldsymbol{x}_i-\overline{\boldsymbol{x}}_B(\boldsymbol{z}^{(t)})\right)\\
    &\quad+\frac{1}{4}\left(\boldsymbol{r}_i-\overline{\boldsymbol{r}}(\boldsymbol{z}^{(t)})\right)^\top\left(\boldsymbol{\Xi}\bullet\boldsymbol{S}_B(\boldsymbol{z}^{(t)})^{-1}\right)\left(\boldsymbol{r}_i-\overline{\boldsymbol{r}}(\boldsymbol{z}^{(t)})\right).
    \end{aligned}
\end{equation}

To find the solution to the minimization problem \eqref{eq:MM_min}, we need to keep in mind that $\boldsymbol{z}$ is restricted to $\{0,1\}^n$ and that the objective function is linear in $\boldsymbol{z}$. Therefore, the solution can be found by sorting the terms defined in \eqref{eq:terms} in ascending order, i.e., $d_{(1)}^2(\boldsymbol{z}^{(t)})\leqslant \dots\leqslant d_{(m)}^2(\boldsymbol{z}^{(t)})\leqslant d_{(m+1)}^2(\boldsymbol{z}^{(t)})\leqslant\dots\leqslant d_{(n)}^2(\boldsymbol{z}^{(t)})$ and selecting the $m$ smallest ones by setting $z_i^{(t+1)}=1$, if $d_i^2(\boldsymbol{z}^{(t)})\leqslant d_{(m)}^2(\boldsymbol{z}^{(t)})$, and $z_i^{(t+1)}=0$, otherwise.

Analogously to the conventional MCD estimation, it can be shown that $d_i(\boldsymbol{z}^{(t)})$ is a distance between observation $\boldsymbol{x}_i$ and the current estimate of the barycenter $\overline{\boldsymbol{x}}_B(\boldsymbol{z}^{(t)})$. Accordingly, the update of $\boldsymbol{z}$ can be interpreted as selecting the $m$ observations closest to the current estimate of the barycenter, which is consistent with the conventional MCD approach. In fact, this Minorization step is equivalent to what is called the concentration step, or, more commonly, the C-step in the FastMCD algorithm \citep{fastMCD}.

\subsection{\textnormal{IMCD} Algorithm Implementation}
The MM algorithm guarantees convergence to a local minimum of the objective function $g(\boldsymbol{z})=\log\det\boldsymbol{S}_B(\boldsymbol{z})$, iterating between the Majorization and Minorization steps for a fixed number of iterations or until the change in the objective function is below a certain threshold. Consequently, the initialization of $\boldsymbol{z}$ can have a significant impact on the final solution. For this reason, we have chosen to follow the same initialization procedure and selective iteration incorporated in the FastMCD algorithm \citep{fastMCD}, given its proven efficiency and effectiveness. This procedure ensures that the randomly drawn subsets have nonsingular covariance matrices by augmenting them until $\det(\boldsymbol{S}_B) > 0$, which, for symmetric covariance matrices, is equivalent to positive definiteness when $m>p$. Starting from such an initialization, subsequent MM updates operate on subsets of size $m>p$ and rely on distances computed from a positive definite covariance matrix, which in practice maintains positive definiteness throughout the iterations. In the rare event that a candidate covariance matrix becomes nearly singular, numerical stability is ensured by resorting to the Moore-Penrose generalized inverse. A detailed pseudocode of the IMCD algorithm is included in \ref{sec:appendix_mcd_algorithm}. 

Suppose the algorithm has converged to a solution $\boldsymbol{z}_\mathrm{final}$. The algorithm first returns what, similarly to the MCD, can be called the raw IMCD estimators of location and scatter, $\overline{\boldsymbol{x}}_{\mathrm{rawIMCD}}=\overline{\boldsymbol{x}}_B(\boldsymbol{z}_\mathrm{final})$ and $\boldsymbol{S}_{\mathrm{rawIMCD}}=\boldsymbol{S}_B(\boldsymbol{z}_\mathrm{final})$, respectively. Following the conventional FastMCD, a one-step reweighting is performed in order to enhance efficiency while preserving robustness \citep{breakdown_point_1991,robust_statistics_book}. Specifically, we define a weight vector $\boldsymbol{w}$ such that $w_i=1$, if $d_i^2(\boldsymbol{z}_\mathrm{final})\leqslant\lambda$, and $w_i=0$, otherwise, and compute the final reweighted IMCD estimators as $\overline{\boldsymbol{x}}_{\mathrm{IMCD}}=\overline{\boldsymbol{x}}_B(\boldsymbol{w})$ and $\boldsymbol{S}_{\mathrm{IMCD}}=\boldsymbol{S}_B(\boldsymbol{w})$. Here, the cutoff value $\lambda$ is obtained based on adjusted boxplots \citep{adjbox} or the farness concept \citep{classmap,Raymaekers2024}. These are further explored in the next section. We have chosen to focus on these two non-parametric methods, in detriment of the conventional chi-squared or F-distribution quantiles, since our squared distances $d_i^2$ are not necessarily chi-squared or F-distributed. Thus, our entire approach is non-parametric and does not rely on any distributional assumptions.

The IMCD estimator inherits certain properties from the conventional MCD estimator, such as the high-dimensional limitation, which restricts its applicability to cases where the number of observations is greater than the number of variables. It also inherits certain limitations from the interval-valued data structure itself, such as the lack of affine equivariance, as discussed in \cite{RGSerrao2023}. This is not specific to the proposed estimator but arises from the constraint $\mathbb{P}(\boldsymbol{\mathcal{R}} \geq \boldsymbol{0}) = 1$, which is inherent to interval-valued random variables. As a consequence, the IMCD estimates may depend on the scaling of the variables. However, this dependence cannot be fully removed by rescaling of centers and ranges, since it originates from the structure of the symbolic covariance matrix itself. This highlights that the notion of affine equivariance requires reconsideration in the context of interval-valued data.

\section{Outlier Detection based on Robust Distance}
\label{sec:outlier}
As a result of the IMCD algorithm, we obtain a distance measure from expression \eqref{eq:terms}, which leads us to propose the following new definition for the interval-valued equivalent of the Mahalanobis distance.

\begin{definition}[Interval-Mahalanobis Distance]
    \label{def:int_mah_dist}
    Let $\boldsymbol{x}=(\boldsymbol{c},\boldsymbol{r},F_{\boldsymbol{U}})$ with $\boldsymbol{c}=(c_1,\dots,c_p)^\top\in\mathbb{R}^p$, $\boldsymbol{r}=(r_1,\dots,r_p)^\top\in(\mathbb{R}^{+}_0)^p$, and $\boldsymbol{U}=(U_1,\dots,U_p)^\top$ a latent random vector of independent random variables supported on $\left[-1,1\right]^p$ with distribution function $F_{\boldsymbol{U}}$. The squared Interval-Mahalanobis distance between $\boldsymbol{x}$ and the barycenter $\boldsymbol{\mu}_B$, as in \autoref{thm:barycenter}, of a population with symbolic covariance matrix $\boldsymbol{\Sigma}_B$, as in Corollary~\ref{cor:symbolic_covariance_matrix}, is defined as
    \begin{equation}
    \label{eq:d2_mah}
    \begin{aligned}
        d_\mathrm{IMah}^2(\boldsymbol{x})&=(\boldsymbol{c}-\boldsymbol{\mu}_C)^\top\boldsymbol{\Sigma}_B^{-1}(\boldsymbol{c}-\boldsymbol{\mu}_C)+\frac{1}{4}(\boldsymbol{r}-\boldsymbol{\mu}_R)^\top\left(\boldsymbol{\mathfrak{E}}_{UU}\bullet\boldsymbol{\Sigma}_B^{-1}\right)(\boldsymbol{r}-\boldsymbol{\mu}_R)\\
        &\quad+(\boldsymbol{c}-\boldsymbol{\mu}_C)^\top\boldsymbol{\Sigma}_B^{-1}\boldsymbol{\Psi}(\boldsymbol{r}-\boldsymbol{\mu}_R).
    \end{aligned}
    \end{equation}
    Alternatively, we have $d_\mathrm{IMah}^2(\boldsymbol{x})=(\boldsymbol{y}-\boldsymbol{\eta})^\top\boldsymbol{H}(\boldsymbol{y}-\boldsymbol{\eta})$, where $\boldsymbol{y}=(\boldsymbol{c}^\top,\boldsymbol{r}^\top)^\top$ represents the macrodata of $\boldsymbol{x}$, $\boldsymbol{\eta}=(\boldsymbol{\mu}_c^\top,\boldsymbol{\mu}_r^\top)^\top$ the macrodata of $\boldsymbol{\mu}_B$, and $\boldsymbol{H}=\begin{bmatrix}
        \boldsymbol{\Sigma}_B^{-1} & \dfrac{1}{2}\boldsymbol{\Sigma}_B^{-1}\boldsymbol{\Psi}\\
        \dfrac{1}{2}\boldsymbol{\Psi}\boldsymbol{\Sigma}_B^{-1} & \dfrac{1}{4}\boldsymbol{\mathfrak{E}}_{UU}\bullet\boldsymbol{\Sigma}_B^{-1}
    \end{bmatrix}$.
\end{definition}

\begin{remark}
    If the latent random variables, $U_j$, are symmetric and identically distributed, with $\delta=\mathrm{Var}(U_j)/4$, then the squared Interval-Mahalanobis distance \eqref{eq:d2_mah} can be simplified to $d_\mathrm{IMah}^2(\boldsymbol{x})=(\boldsymbol{c}-\boldsymbol{\mu}_C)^{\top}\boldsymbol{\Sigma}_B^{-1}(\boldsymbol{c}-\boldsymbol{\mu}_C)+\delta(\boldsymbol{r}-\boldsymbol{\mu}_R)^{\top}\boldsymbol{\Sigma}_B^{-1}(\boldsymbol{r}-\boldsymbol{\mu}_R)$.
\end{remark}

\begin{remark}
    The Mallows distance \eqref{eq:mallows} is a special case of the Interval-Mahalanobis distance \eqref{eq:d2_mah} when $\boldsymbol{\Sigma}_B=\boldsymbol{I}_p$, where $\boldsymbol{I}_p$ is the identity matrix of dimension $p$. Moreover, if the latent random variables $U_j$, $j=1,\dots,p$, are non-degenerate, then $\boldsymbol{H}$ is positive definite, and $d_\mathrm{IMah}(\boldsymbol{x})$ defines a metric, particularly, a weighted Euclidean distance on $\mathbb{R}^p\times(\mathbb{R}_0^+)^p$.
\end{remark}

If the barycenter and the interval covariance matrix are robustly estimated using the IMCD algorithm, then the Interval-Mahalanobis distance is also robust. This robust distance can be used to detect outliers in interval-valued data, similar to how the robust Mahalanobis distance is used for conventional multivariate data. In this setting, the robust distance $d_\mathrm{IMah}(\boldsymbol{x})$ measures how far an observation $\boldsymbol{x}$ is from the IMCD estimated barycenter $\overline{\boldsymbol{x}}_\mathrm{IMCD}$. Therefore, given a cutoff value $\tau$, an observation $\boldsymbol{x}$ is flagged as an outlier if $d_\mathrm{IMah}(\boldsymbol{x}) > \tau$. Following the same strategy as in the reweighting step of the IMCD algorithm mentioned in the previous section, the cutoff value $\tau$ can be chosen based on adjusted boxplots \citep{adjbox} or on the farness concept \citep{classmap,Raymaekers2024}. Once again, we opt for these non-parametric methods, as they do not rely on any distributional assumptions. The choice of the cutoff value is user defined and reflects the tolerated false alarm rate.

The adjusted boxplot extends the traditional boxplot by accounting for skewness in the data. Instead of symmetric whiskers, these are adjusted using a robust skewness measure, the medcouple \citep{medcouple}, denoted by $\mathrm{MC} \in [-1,1]$. Additionally, let $Q_1$ and $Q_3$ denote the first and third quartiles, respectively. The whiskers are defined as the upper and lower bounds of the interval $[Q_1-ke^{-4\mathrm{MC}}(Q_3-Q_1),Q_3+ke^{3\mathrm{MC}}(Q_3-Q_1)]$, if $\mathrm{MC}\geqslant0$, and $[Q_1-ke^{-3\mathrm{MC}}(Q_3-Q_1),Q_3+ke^{4\mathrm{MC}}(Q_3-Q_1)]$, if $\mathrm{MC}<0$, where $k$ is a coefficient that controls the whisker length (typically $k=1.5$). Hence, observations outside this interval, or, in this case, distances higher than these cutoff values are flagged as outliers. This method is suitable for any distribution, including skewed or heavy tailed ones, making it particularly useful in real-world applications where data may not follow a normal distribution. More specifically, we use the \texttt{adjbox} function from the \texttt{robustbase} R-package \citep{robustbase}. Further details can be found in \cite{adjbox}.

The farness concept, on the other hand, is based on the idea of fitting a cumulative distribution function (CDF) to a robust transformation of the observations' distances from the barycenter. This is an adaptation of the farness concept introduced by \cite{classmap} for visualizing classification results, in the sense that we no longer have several classes but rather a single class which corresponds to the regular observations. The idea is to identify the observations that are farthest from the robust location of that class, which are flagged as outliers. Therefore, for our purpose, we fit a CDF to the squared distances $d_\mathrm{IMah}^2(\boldsymbol{x})$ of the observations while accounting for skewness. First, we standardize them using the median and the median absolute deviation (MAD). Next, we apply the robust Yeo-Johnson transformation \citep{yeojohnson2000,Raymaekers2024} from the \texttt{transfo} function in the \texttt{cellWise} R-package \citep{cellWise}, whose goal is to make the distribution of the standardized distances more symmetric and closer to a normal distribution. More precisely, the transformation parameter is estimated via a robust reweighted maximum likelihood estimator \citep{Raymaekers2024}. Another standardization is then carried out using the median and MAD of the transformed distances, resulting in an approximately standard normal distribution. Finally, we obtain the standard normal CDF of these standardized transformed distances, resulting in the final probabilities. These can be seen as outlier scores, with higher values indicating a higher likelihood of being an outlier. Thus, the cutoff value can be chosen based on a desired threshold, such as $0.95$, meaning that observations with a score above $0.95$ will be flagged as outliers.

\section{Simulation Study}
\label{sec:simulation}
A simulation study is used to assess the performance of the proposed IMCD estimator and outlier detection method. 

\subsection{Experimental Setup}
Our experimental design consists of the following factors: number of variables $P\in\{5,20\}$; number of observations $N\in\{500,1000\}$; contamination levels $\epsilon\in\{0,0.05,0.1,0.2\}$; contamination scheme: shift in the means of the first variable only in the centers, $\mu_{C_1}+2$, only in the ranges, $\mu_{R_1}+5$, or in both the centers and ranges simultaneously, $\mu_{C_1}+2$ and $\mu_{R_1}+5$; latent variables: symmetric and identically distributed (i.d.) or generated under a more general asymmetric setting (general case).

For each of the $96$ combinations of simulation parameters, we generate $500$ independent samples. For the covariance matrix estimator, we compare the classic raw estimator (raw.Classic) with three versions of the IMCD estimator: raw (raw.IMCD), reweighted using adjusted boxplots (adjbox.IMCD), and reweighted using farness (farness.IMCD). In addition, we consider four outlier detection methods: the classic Interval-Mahalanobis distance with adjusted boxplot cutoff (adjbox.Classic), the Mallows distance with adjusted boxplot cutoff (adjbox.Classic\_Mallows), the robust Interval-Mahalanobis distance with reweighting and cutoff based on adjusted boxplots with coefficient $k=1.5$ (adjbox.IMCD), and the robust Interval-Mahalanobis distance with reweighting and cutoff based on farness with cutoff value $0.95$ (farness.IMCD). The subset size parameter of the IMCD algorithm is set to $M=\lfloor 0.75N\rfloor$, following the conventional usual choice.

The macrodata are generated from a multivariate normal distribution where, for each interval variable $j=1,\dots,p$, we set $C_j\sim\mathcal{N}\left(\mu_{C_j}=0, \sigma_{C_j}=3j/(4p)\right)$ and $R_j\sim\mathcal{N}\left(\mu_{R_j}=3\right.$, $\left.\sigma_{R_j}=3j/(4p)\right)$. We note that, even in the worst case, the probability of generating a negative range is on the order of $10^{-5}$ and, if it does happen, we take its absolute value. All cross-variable correlations are set to zero, while $\mathrm{Corr}(C_j,R_j)=0.1$ for odd $j$ and $-0.1$ for even $j$. In the case of symmetric and i.d. microdata, we assume that the latent variables, $U_j$, $j=1,\dots,p$, follow a uniform distribution, $U\sim\mathrm{Unif}(-1,1)$. As for the general case, the latent variables, $U_j$, $j=1,\dots,p$, are obtained from the same distributional family, following an asymmetric triangular distribution, $U_j\sim\mathrm{Triang}(-1,1,mo_j)$, with a mode, $mo_j$, generated from a uniform distribution, i.e., $mo_j\sim\mathrm{Unif}(-0.5,-0.2)$.

This setting yields a symbolic covariance matrix that is diagonal for both latent variable scenarios. For instance, in the case of the symmetric and i.d. latent variables with $P=5$, the interval covariance matrix is $\boldsymbol{\Sigma}_B=\mathrm{diag}(0.02,0.10,0.22,0.39,0.61)$, rounded to two decimal places. However, since the IMCD estimator is not affine equivariant, we also considered a scenario in which the first and second variables were correlated. Specifically, $\mathrm{Corr}(C_1,C_2)=\mathrm{Corr}(R_1,R_2)=0.8$, and $\mathrm{Corr}(C_1,R_2)=\mathrm{Corr}(C_2,R_1)=0.1$. The results obtained in this correlated scenario were very similar to those of the uncorrelated case and are therefore omitted from this work.

\begin{figure}[ht]
    \centering
    \begin{subfigure}[t]{0.32\textwidth}
        \centering
        \includegraphics[width=\textwidth]{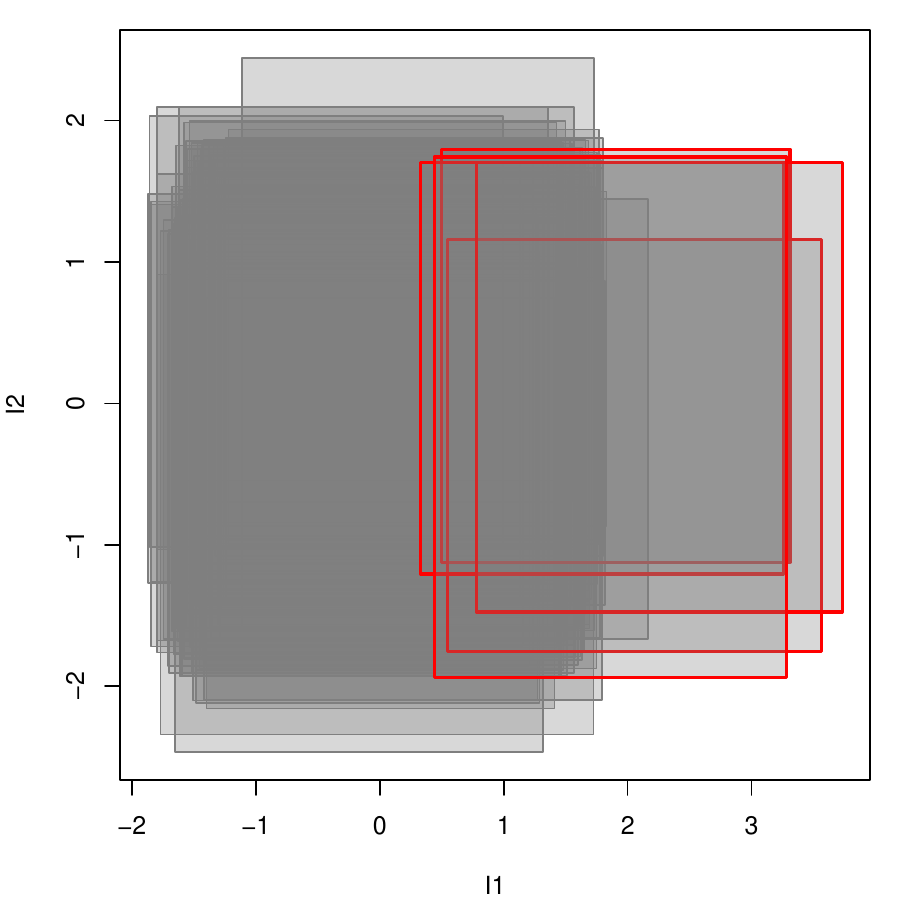}
        \caption{Scenario 1.}
        \label{fig:sim_sample_shift_C}
    \end{subfigure}
    \hfill
    \begin{subfigure}[t]{0.32\textwidth}
        \centering
        \includegraphics[width=\textwidth]{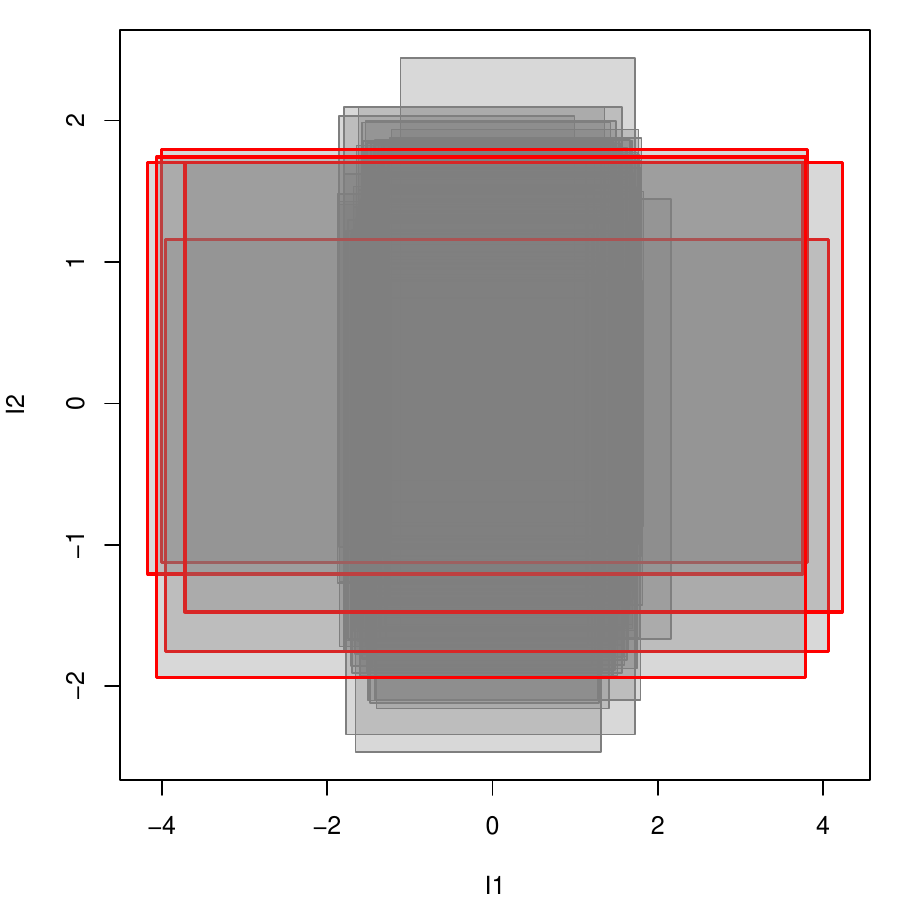}
        \caption{Scenario 2.}
        \label{fig:sim_sample_shift_R}
    \end{subfigure}
    \hfill
    \begin{subfigure}[t]{0.32\textwidth}
        \centering
        \includegraphics[width=\textwidth]{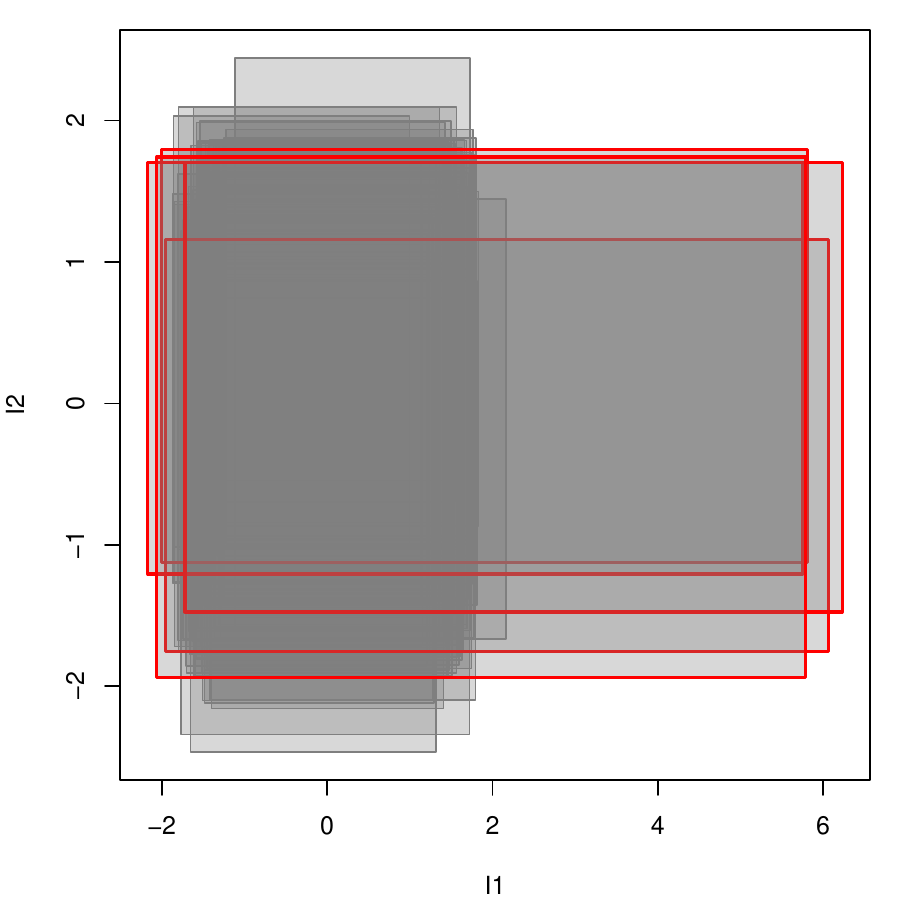}
        \caption{Scenario 3.}
        \label{fig:sim_sample_shift_CR}
    \end{subfigure}
    \caption{Scatter plots of the first two variables from three generated dataset samples with symmetric latent variables and contamination in the mean of: (a) centers, (b) ranges, and (c) both centers and ranges. Outliers are shown in red.}
    \label{fig:sim_sample}
\end{figure}

For simplicity, we define six scenarios by combining the two types of latent variables and the three contamination schemes. Scenarios 1, 2, and 3 correspond to the symmetric and i.d. latent variables case, while scenarios 4, 5, and 6 correspond to the general case. Within each group, contamination affects the centers only, the ranges only, and both the centers and ranges, respectively. \autoref{fig:sim_sample} displays representative scatter plots for samples generated under scenario 1 (\autoref{fig:sim_sample_shift_C}), scenario 2 (\autoref{fig:sim_sample_shift_R}), and scenario 3 (\autoref{fig:sim_sample_shift_CR}). The results for scenarios 3 and 6 were similar to those for scenarios 1 and 4, respectively, since shifts in the centers have a substantially larger impact on the robust distance than shifts in the ranges. For this reason, the results for scenarios 3 and 6 are reported in \ref{sec:appendix_simulations}.

\subsection{Evaluation Metrics}
The performance of the IMCD estimator and the proposed outlier detection method is assessed using several metrics defined as follows. For completeness, additional metrics such as the angle error, Kullback-Leibler (KL) divergence, accuracy, F$_1$-score of the outlier class, and G-mean are reported in \ref{sec:appendix_simulations}.

\paragraph*{Covariance matrix estimation} 
Evaluation is done using the relative Frobenius norm error, a general-purpose measure of reconstruction quality. Considering $\hat{\boldsymbol{\Sigma}}$ and $\boldsymbol{\Sigma}$ are the estimated and ground-truth covariance matrices, respectively, the relative Frobenius norm error is $\|\hat{\boldsymbol{\Sigma}} - \boldsymbol{\Sigma}\|_F/\|\boldsymbol{\Sigma}\|_F = 
    (\sum_{j=1}^{p}\sum_{\ell=1}^{p}([\hat{\boldsymbol{\Sigma}}]_{j\ell}-[\boldsymbol{\Sigma}]_{j\ell})^2)^{1/2}
    (\sum_{j=1}^{p}\sum_{\ell=1}^{p}[\boldsymbol{\Sigma}]_{j\ell}^2)^{-1/2}$.

\paragraph*{Outlier detection} 
Performance measures are functions of: TP (true positives) and TN (true negatives), which denote correctly identified outliers and regular observations, respectively; FP (false positives) and FN (false negatives), which denote incorrect classifications of outliers and regular observations, respectively. We report the estimated precision of the outlier class, which measures the proportion of detected outliers that are correct, $Pr(1) = \mathrm{TP}/(\mathrm{TP}+\mathrm{FP})$; and the estimated recall of the outlier class, measuring the proportion of actual outliers that are detected, $Re(1) = \mathrm{TP}/(\mathrm{TP}+\mathrm{FN})$.

\subsection{Simulation Results}
\autoref{fig:frobenius.error} shows the boxplots of the relative Frobenius error obtained based on the $500$ independent samples for scenarios 1, 2, 4, and 5. The aim is to achieve the smallest relative Frobenius error possible. We can see that for scenarios 1 and 4 the results are very similar, indicating that, for the contamination in the centers, the symmetry of the latent variables does not have a strong impact on the estimation of the covariance matrix. For scenarios 2 and 5, however, we see that the results are slightly worse for the general latent variables case. In scenario 5, the microdata follow triangular distributions instead of uniform distributions, this implies that the microdata are more concentrated around the center of the interval, and thus the ranges play a smaller part in the distance than in scenario 2. Therefore, outliers in the ranges will be harder to identify. In fact, scenarios 2 and 5 present worse results than scenarios 1 and 4, in general. Once again, this can be explained by the robust distance expression, since the $\delta\in[0,\frac{1}{4}]$ parameter ensures that an outlier in the ranges will never affect the distance as much as an outlier in the centers.

\begin{figure}[ht]
    \centering
    \begin{subfigure}[b]{0.49\textwidth}
        \centering
        \includegraphics[width=\textwidth]{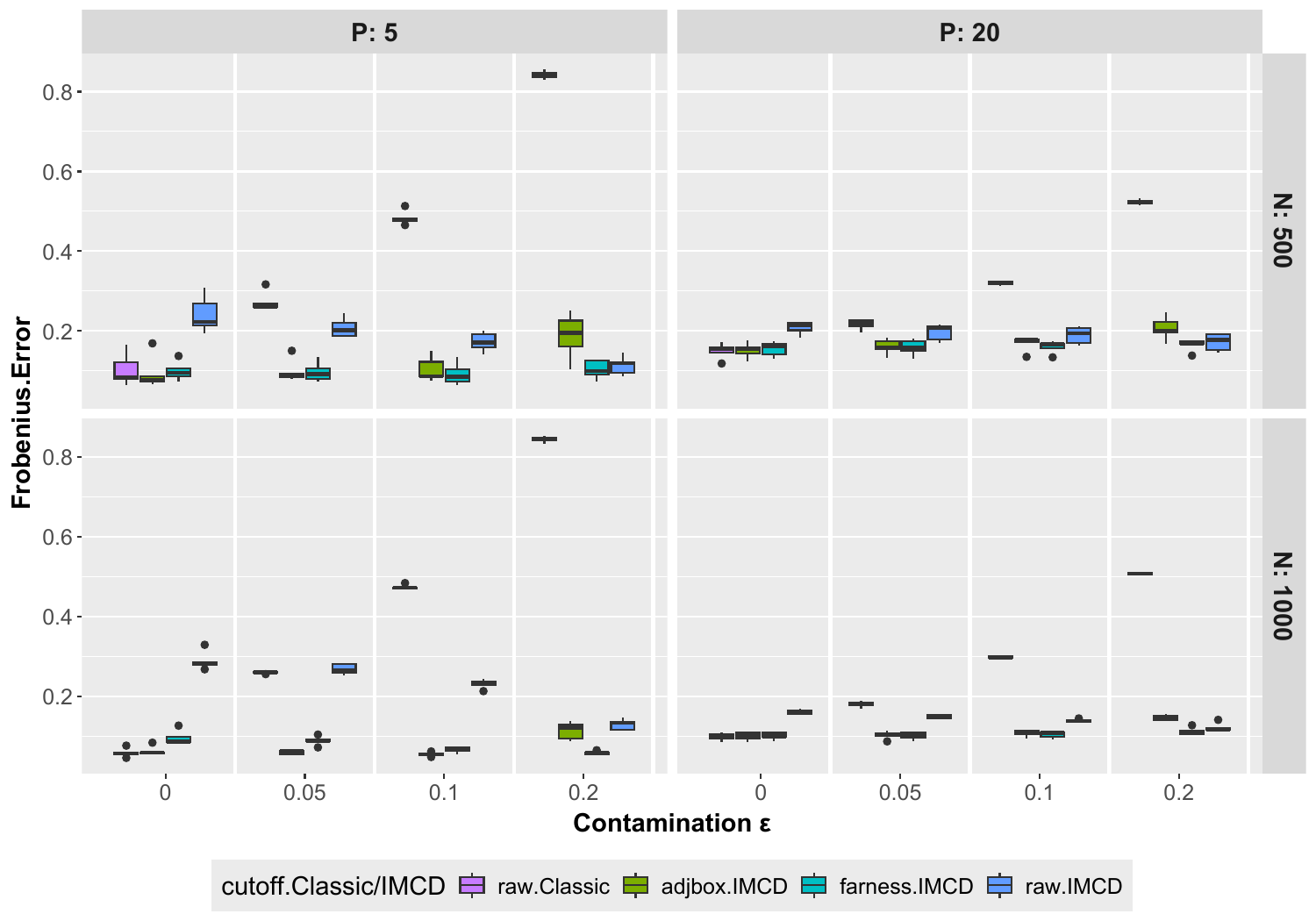}
        \caption{Scenario 1.} 
    \end{subfigure}
    \hfill
    \begin{subfigure}[b]{0.49\textwidth}
        \centering
        \includegraphics[width=\textwidth]{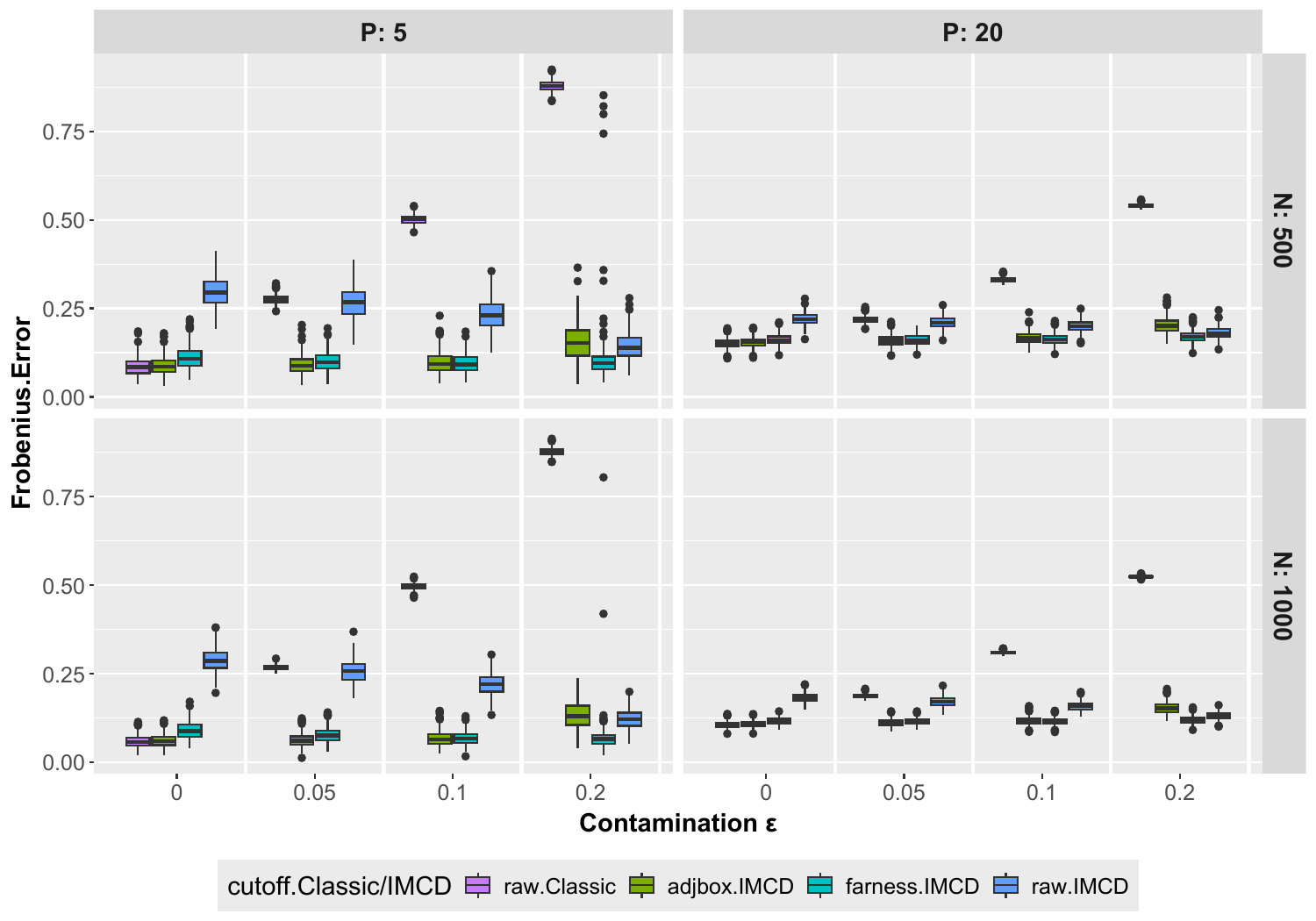}
        \caption{Scenario 4.} 
    \end{subfigure}
    \begin{subfigure}[b]{0.49\textwidth}
        \centering
        \includegraphics[width=\textwidth]{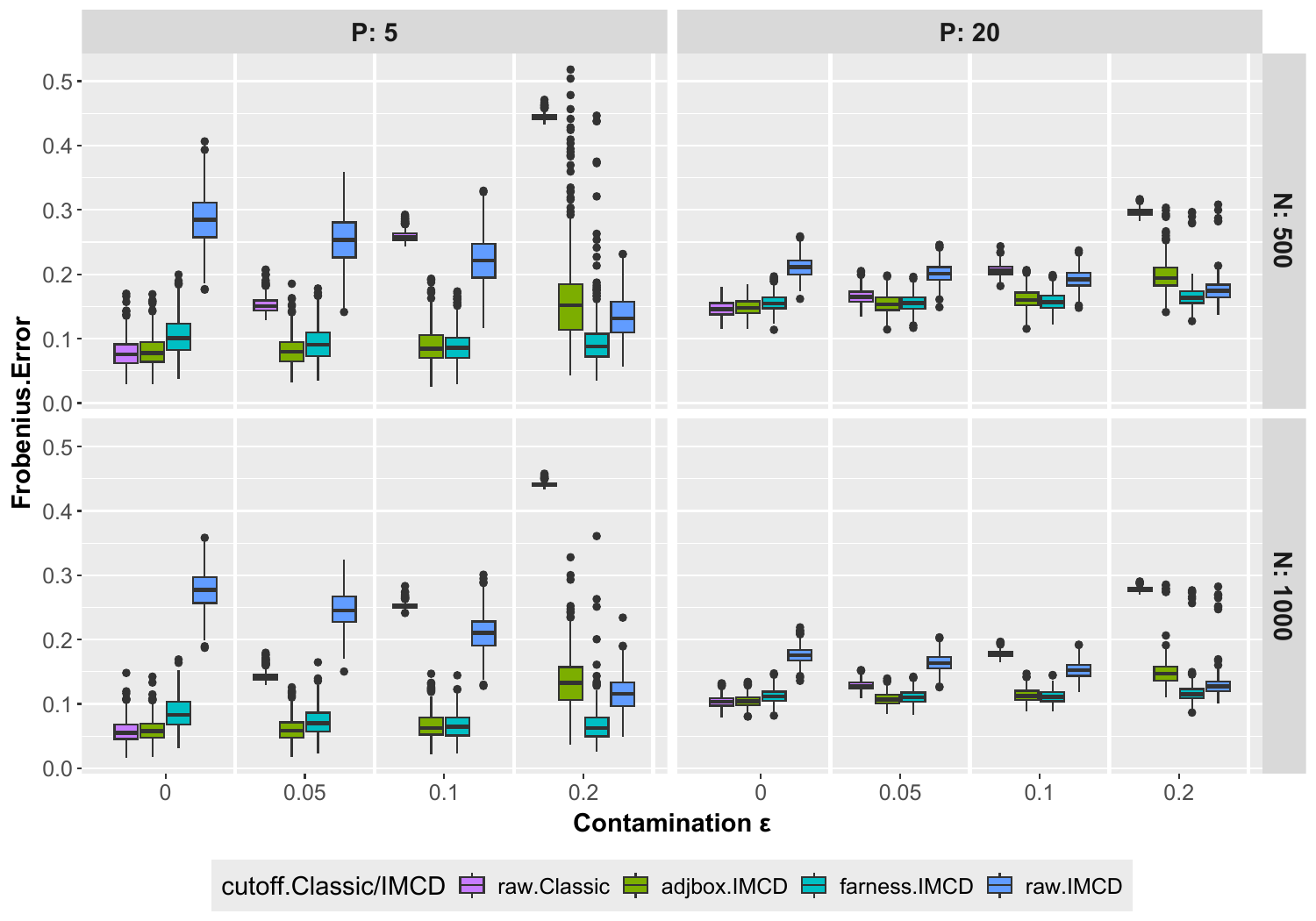}
        \caption{Scenario 2.} 
    \end{subfigure}
    \hfill
    \begin{subfigure}[b]{0.49\textwidth}
        \centering
        \includegraphics[width=\textwidth]{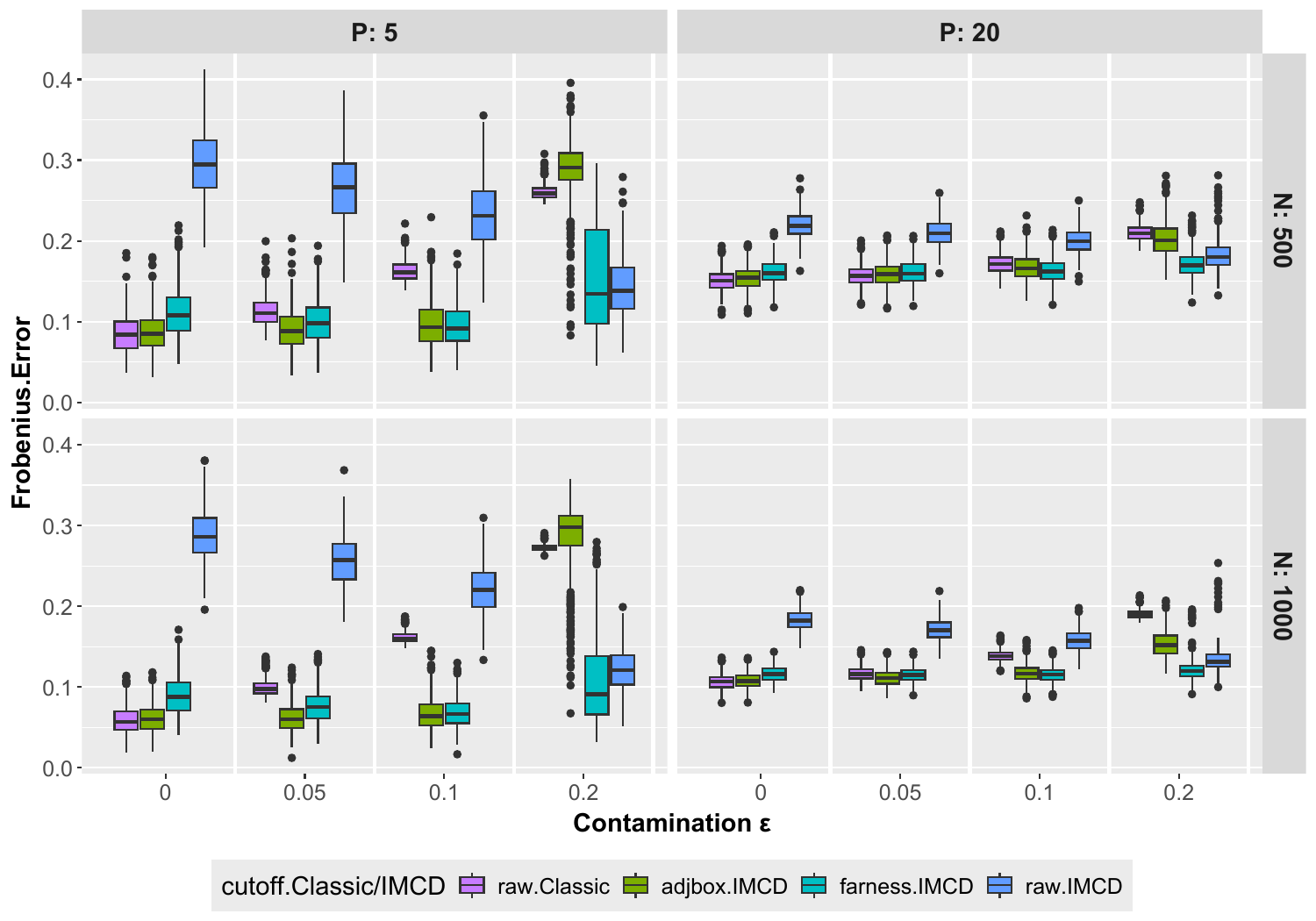}
        \caption{Scenario 5.} 
    \end{subfigure}
    \caption{Boxplots of the relative Frobenius error obtained for scenarios 1, 2, 4, and 5, and the different levels of contamination ($\epsilon$), number of variables ($P$), and sample size ($N$). For each case, we have four covariance matrix estimators: classic without reweighting (raw.Classic), IMCD with adjusted boxplot (adjbox.IMCD) and farness (farness.IMCD) reweighting, and IMCD without reweighting (raw.IMCD).}
    \label{fig:frobenius.error}
\end{figure}

Comparing now the four covariance matrix estimators, we can see that, when we do not have contamination, the classic estimator is the best one. This is to be expected since all the observations are regular ones and are used to estimate the covariance, resulting in a larger sample. As the contamination increases, we see that the classic estimator's error increases, getting further from the ground truth. Looking now to the IMCD estimators, we can see why the reweighting step is so important. The raw IMCD estimator consistently presents worse results than its reweighted counterparts, and in some cases even worse than the classic estimator. As for the reweighted IMCD estimators, they present good results, with the error very close to $0$ in multiple cases. The IMCD estimator with adjusted boxplot reweighting appears to breakdown when the contamination reaches $20\%$, which could be due to it starting to integrate outliers into the distribution. In general, the best estimator is the IMCD with farness reweighting.

In \autoref{fig:recall1}, we present the boxplots of the recall of class 1 (outliers) obtained for scenarios 1, 2, 4, and 5. For this metric, the higher the value, the better. We see similar patterns as in \autoref{fig:frobenius.error} from scenario to scenario. Scenarios 1 and 4 present similar results, although scenario 4 is slightly worse, which could once again be due to the asymmetry of the latent variables. Scenarios 2 and 5 once more show worse results, with scenario 5 being the worst. The same reasons apply here. Additionally, the results show that the classic estimator with adjusted boxplot cutoff has a low recall for every case, being unable to detect any outliers once the contamination is larger than $5\%$. The Mallows distance shows even worse results, basically being unable to detect any outliers even with $5\%$ contamination. As for the IMCD estimators, there seems to be no significant difference between reweighting and cutoff using the adjusted boxplot and using farness. For these two estimators, we have perfect recall for almost every case. The greatest exception is with scenario 4 and $P=5$, both IMCD estimators seem to completely break down for a contamination of $20\%$. This is the scenario where it is hardest to identify outliers, since the $\delta$ parameter is smaller and influences the detection of outliers in the ranges.

\begin{figure}[ht]
    \centering
    \begin{subfigure}[b]{0.49\textwidth}
        \centering
        \includegraphics[width=\textwidth]{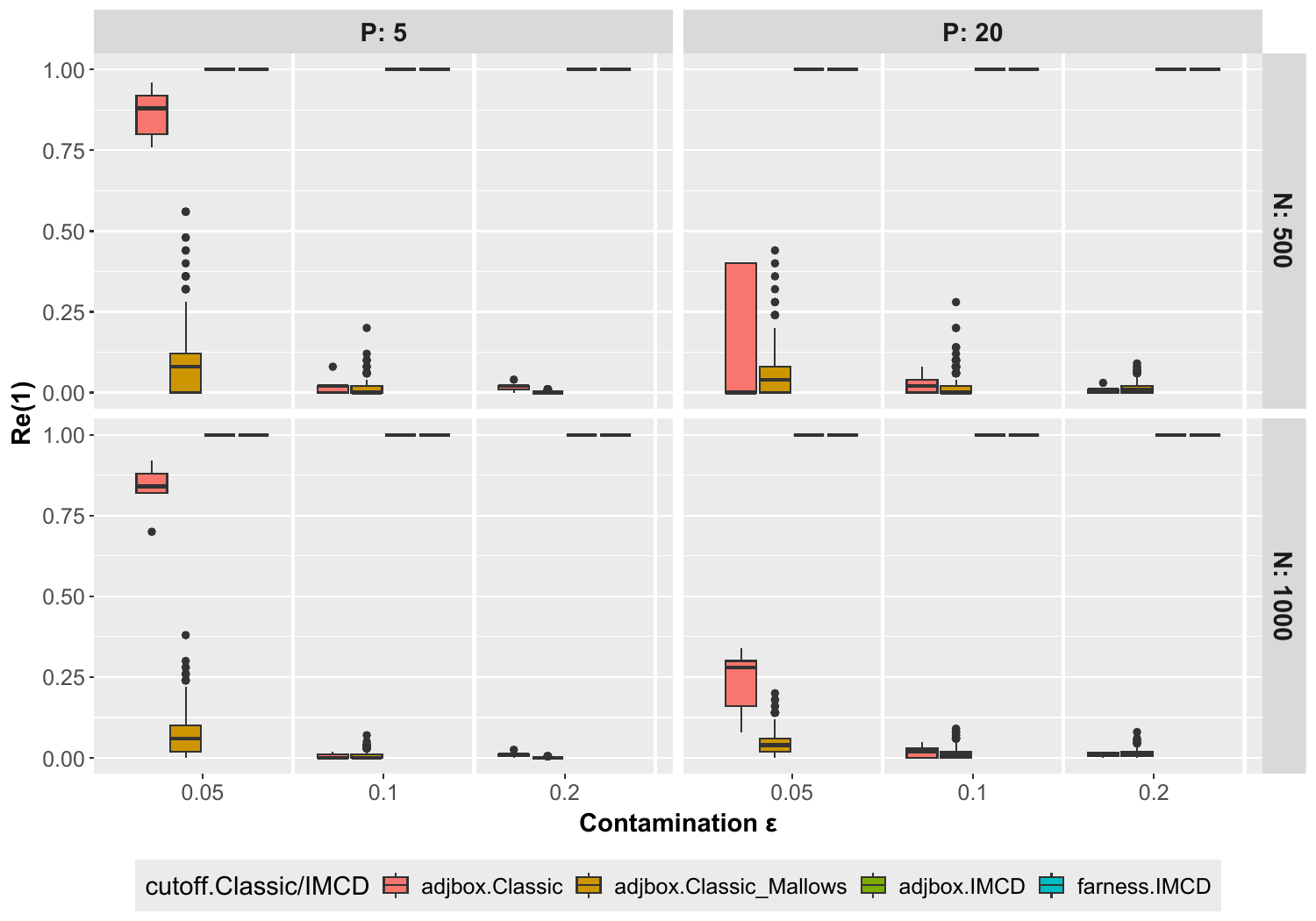}
        \caption{Scenario 1.} 
    \end{subfigure}
    \hfill
    \begin{subfigure}[b]{0.49\textwidth}
        \centering
        \includegraphics[width=\textwidth]{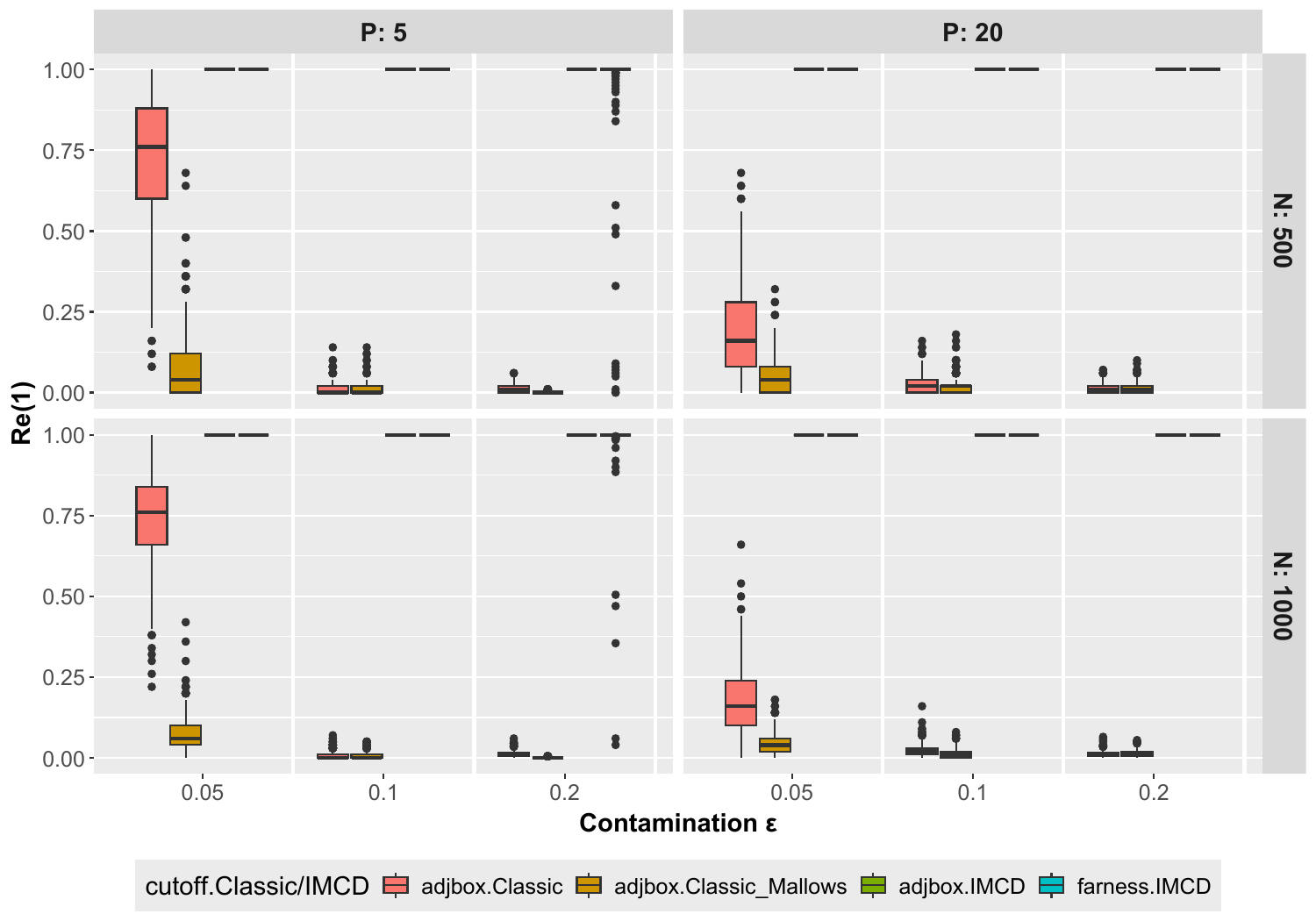}
        \caption{Scenario 4.} 
    \end{subfigure}
    \begin{subfigure}[b]{0.49\textwidth}
        \centering
        \includegraphics[width=\textwidth]{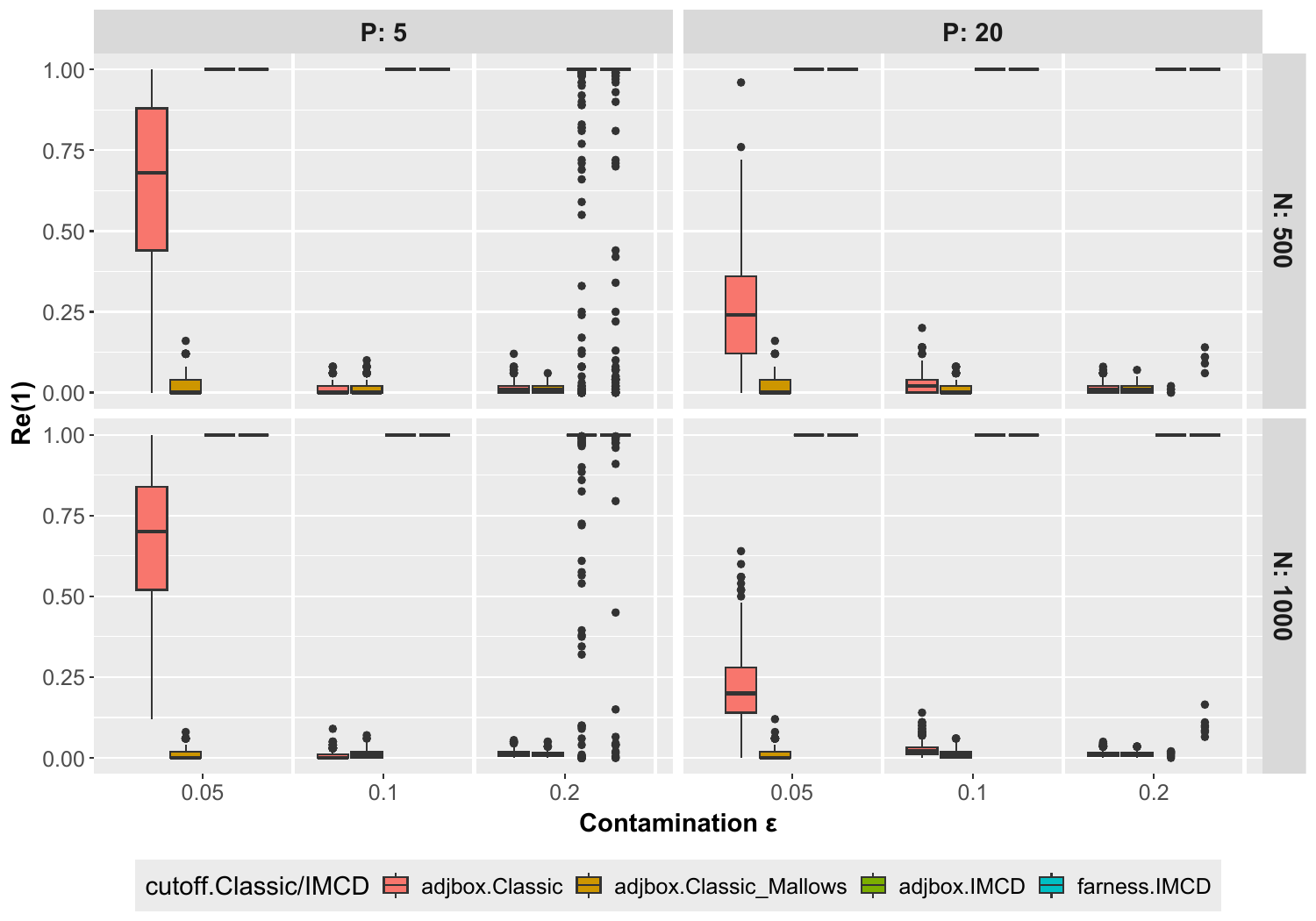}
        \caption{Scenario 2.} 
    \end{subfigure}
    \hfill
    \begin{subfigure}[b]{0.49\textwidth}
        \centering
        \includegraphics[width=\textwidth]{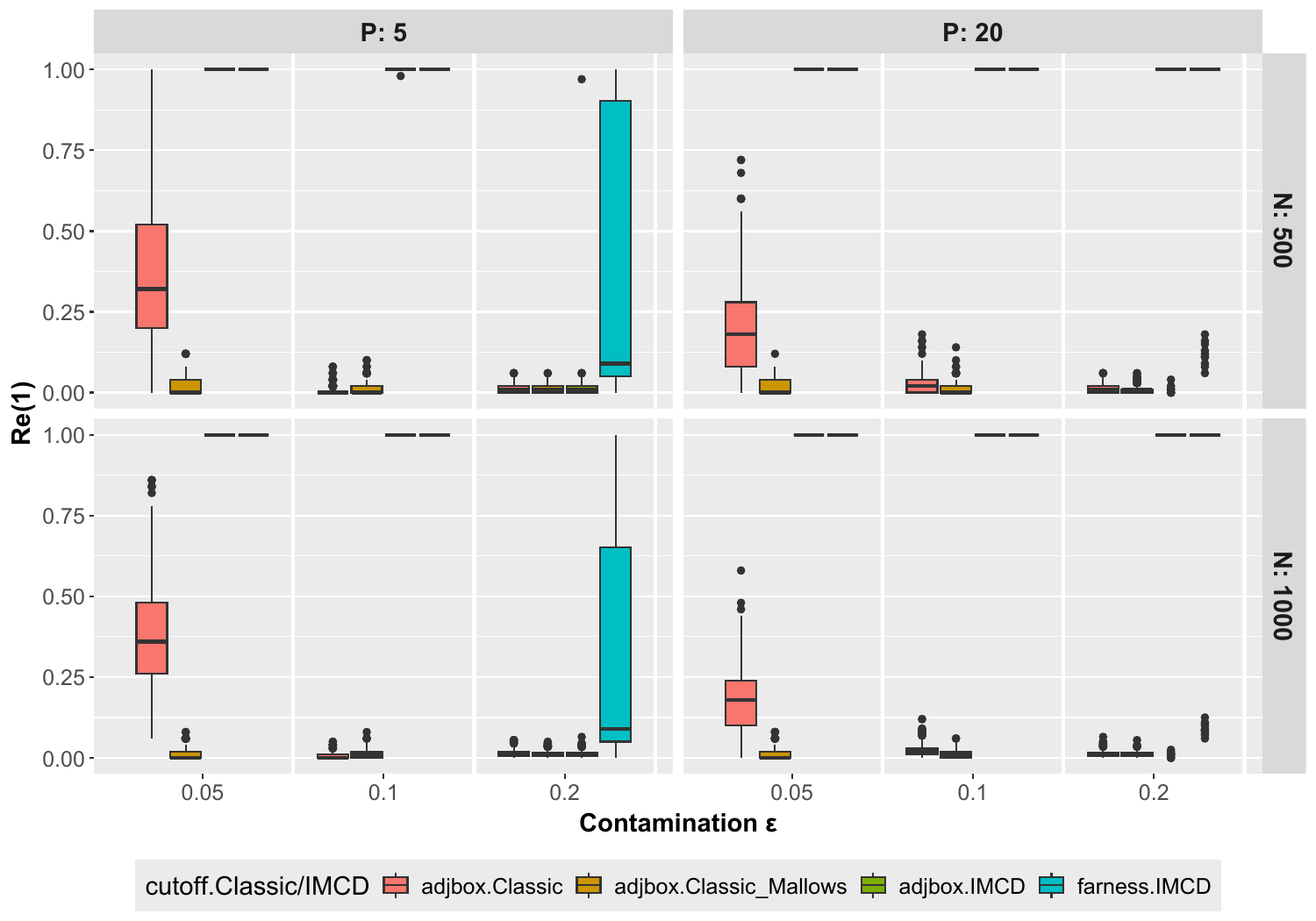}
        \caption{Scenario 5.} 
    \end{subfigure}
    \caption{Boxplots of the recall of class 1 (outliers) obtained for scenarios 1, 2, 4, and 5, and the different levels of contamination ($\epsilon$), number of variables ($P$), and sample size ($N$). For each case, we have four outlier detection methods: classic Interval-Mahalanobis distance with adjusted boxplot cutoff (adjbox.Classic), Mallows distance with adjusted boxplot cutoff (adjbox.Classic\_Mallows), robust Interval-Mahalanobis distance with adjusted boxplot (adjbox.IMCD), and farness (farness.IMCD) reweighting/cutoff.}
    \label{fig:recall1}
\end{figure}

\autoref{fig:precision1} displays the precision of class 1 (outliers) for scenarios 1, 2, 4, and 5. We note that the Mallows distance method has the worst performance in general, with the classic Interval-Mahalanobis distance being a close second. In addition, for the IMCD estimator with the adjusted boxplot reweighting and cutoff, the precision decreases as the contamination increases, while for the farness reweighting and cutoff, it increases. Given the results for the recall of class 1 discussed above, the methods are able to detect the outliers but are identifying false positives as well. Once again we see that the contamination in the ranges results in lower precision than the contamination in the centers. Similarly, the asymmetry of the latent variables also leads to slightly lower precision.

\begin{figure}[ht]
    \centering
    \begin{subfigure}[b]{0.49\textwidth}
        \centering
        \includegraphics[width=\textwidth]{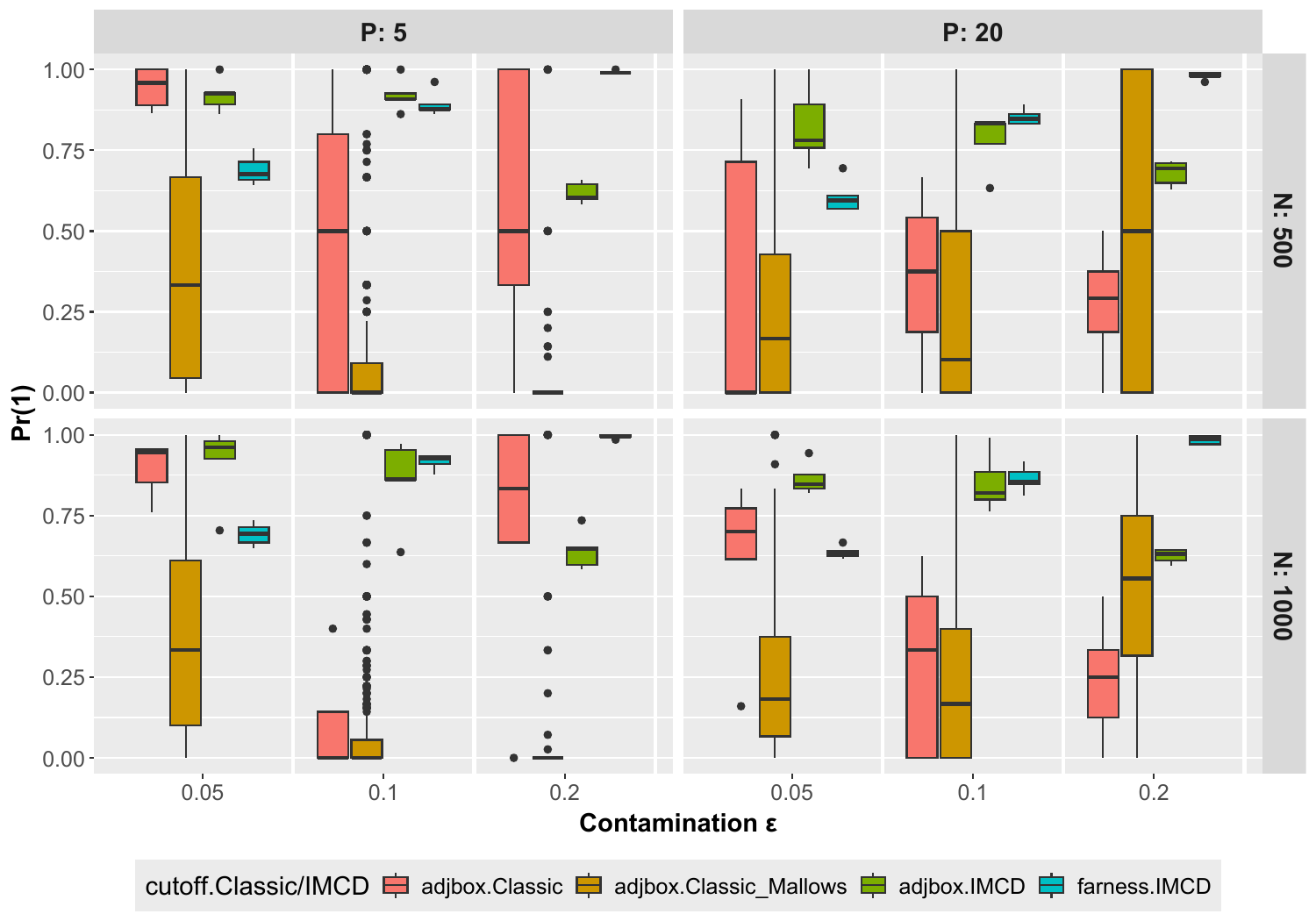}
        \caption{Scenario 1.} 
    \end{subfigure}
    \hfill
    \begin{subfigure}[b]{0.49\textwidth}
        \centering
        \includegraphics[width=\textwidth]{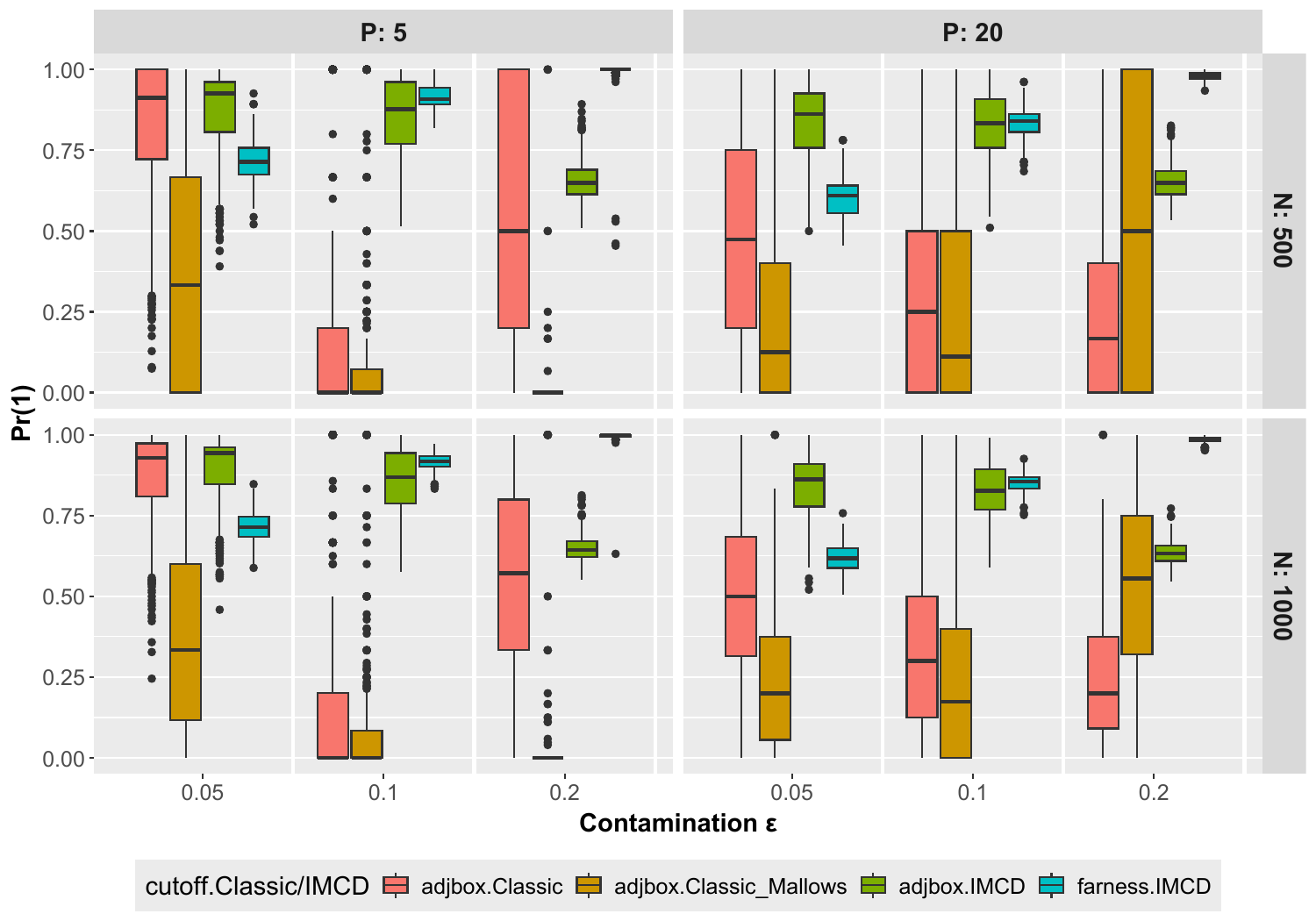}
        \caption{Scenario 4.} 
    \end{subfigure}
    \begin{subfigure}[b]{0.49\textwidth}
        \centering
        \includegraphics[width=\textwidth]{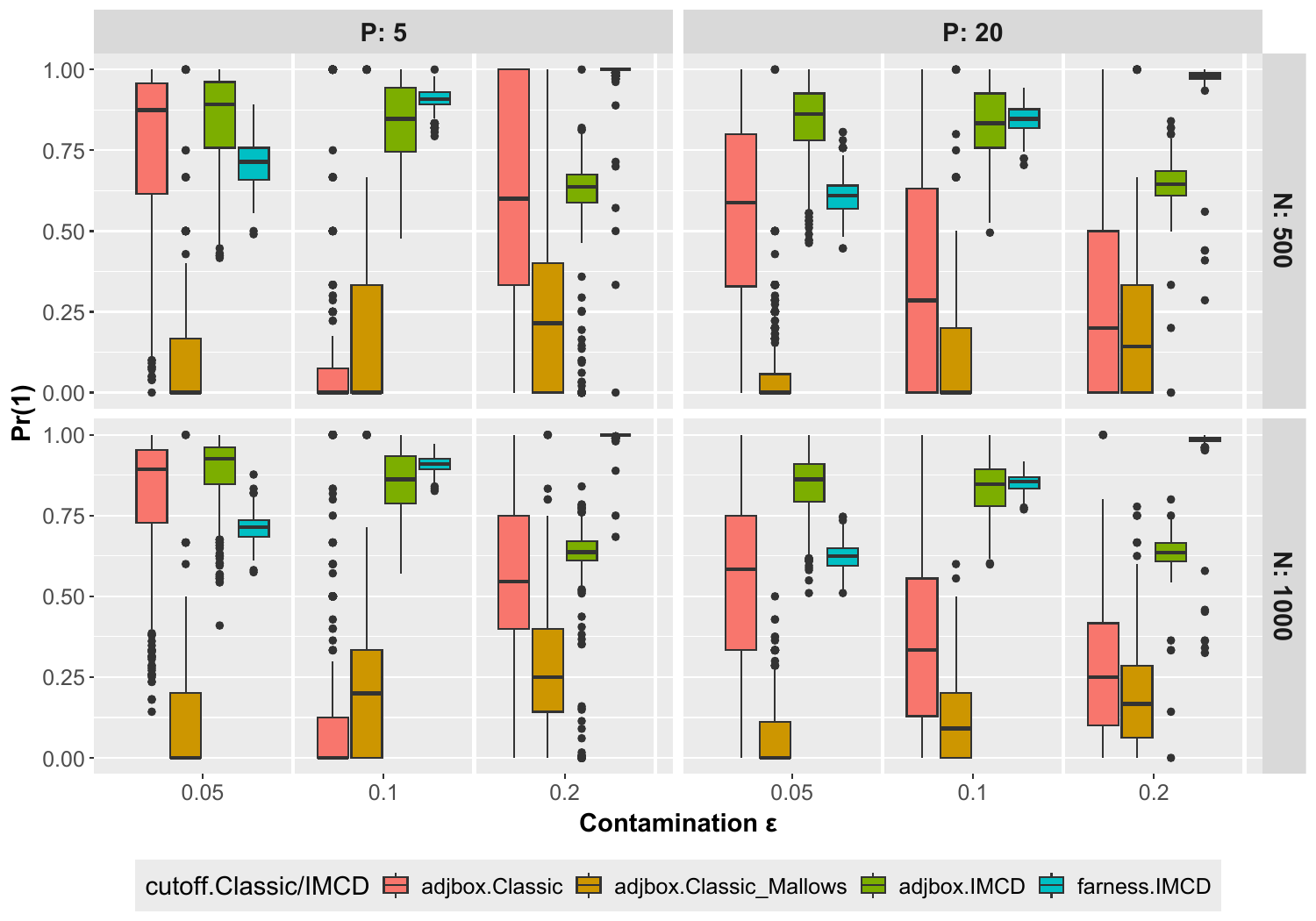}
        \caption{Scenario 2.} 
    \end{subfigure}
    \hfill
    \begin{subfigure}[b]{0.49\textwidth}
        \centering
        \includegraphics[width=\textwidth]{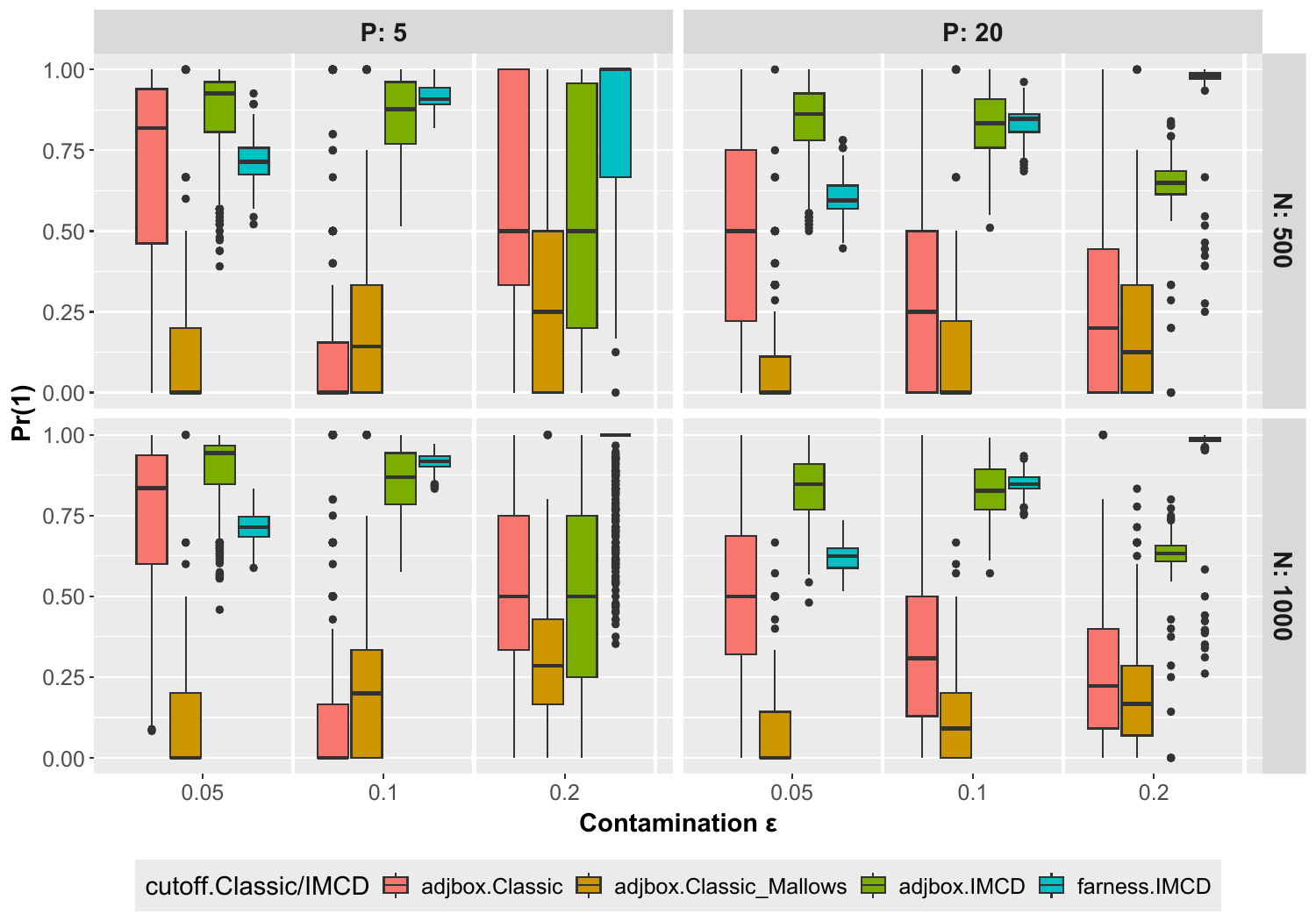}
        \caption{Scenario 5.} 
    \end{subfigure}
    \caption{Boxplots of the precision of class 1 (outliers) obtained for scenarios 1, 2, 4, and 5, and the different levels of contamination ($\epsilon$), number of variables ($P$), and sample size ($N$). For each case, we have four outlier detection methods: classic Interval-Mahalanobis distance with adjusted boxplot cutoff (adjbox.Classic), Mallows distance with adjusted boxplot cutoff (adjbox.Classic\_Mallows), robust Interval-Mahalanobis distance with adjusted boxplot (adjbox.IMCD), and farness (farness.IMCD) reweighting/cutoff.}
    \label{fig:precision1}
\end{figure}

The results for the remaining scenarios are presented in \ref{sec:appendix_simulations}, showing similar patterns as the ones discussed above. Overall, the IMCD estimator with farness reweighting and cutoff consistently outperforms the other methods across almost all scenarios, demonstrating its robustness and effectiveness in outlier detection for interval-valued data.

\section{Applications}
\label{sec:applications}

\subsection{Cars Dataset}
\label{sec:cars}
The Cars dataset consists of $27$ car models characterized by four interval-valued variables: \textit{Price}, \textit{Engine Capacity} (\textit{EngCap}), \textit{Top Speed}, and \textit{Acceleration}. Each car model belongs to one of four classes: \textit{Berlina}, \textit{Luxury}, \textit{Sportive}, and \textit{Utilitarian}. The dataset is available in the \texttt{MAINT.Data} R-package \citep{MAINT.Data}. Since this dataset was previously analyzed for outlier detection in \cite{DuarteSilva2018}, we applied a logarithmic transformation to the \textit{Price} (\textit{lnPrice}) variable to ensure comparability with that study.  As no microdata are available, and following the standard assumption in the literature, we adopted a continuous uniform distribution for the latent variables, corresponding to the symmetric and i.d. case with $\delta=1/12$.

We estimated the sample barycenter and the covariance matrix using the reweighted IMCD algorithm with a subset size of $\lfloor 0.75\times 27\rfloor=20$ and a farness cutoff of $0.9$. This cutoff was chosen to reflect the expected contamination level of about $10\%$ reported by \cite{DuarteSilva2018}. Then, using the robust distances of each observation from the barycenter and the farness cutoff of $0.9$, five observations were flagged as outliers: \textit{MercedesClasseS}, \textit{Ferrari}, \textit{Porsche}, \textit{MercedesSL}, and \textit{HondaNSK}. Notably, except for \textit{MercedesClasseS}, which belongs to the \textit{Sportive} class, the remaining four correspond to the \textit{Luxury} class.

\autoref{fig:cars_pairs} displays the correlation matrix obtained from the IMCD algorithm in the upper panel and scatter plots highlighting the outliers in the lower panel. The variable \textit{lnPrice} is strongly positively correlated with \textit{EngCap} and \textit{Top Speed}, as expected, while \textit{Acceleration} is negatively correlated with all other variables. These relationships are consistent with typical engineering and performance characteristics of car models.

In \autoref{fig:cars_dist_dist}, we compare the classical and robust squared Interval-Mahalanobis distances. For the classical version, we used the whole dataset to estimate the barycenter and covariance matrix, while for the robust version, we used the IMCD algorithm. \autoref{fig:cars_dist_dist} also shows the $0.9$ farness cutoff value for the robust method and the $1.5$ adjusted boxplot cutoff value for the classical method. In addition, the outliers are highlighted in red. While the robust approach clearly separates the five outliers from the remaining observations, the classical distance flags none. This discrepancy underscores the importance of robust methods.

\begin{figure}[ht]
    \centering
    \begin{subfigure}[t]{0.49\textwidth}
        \centering
        \includegraphics[width=\textwidth]{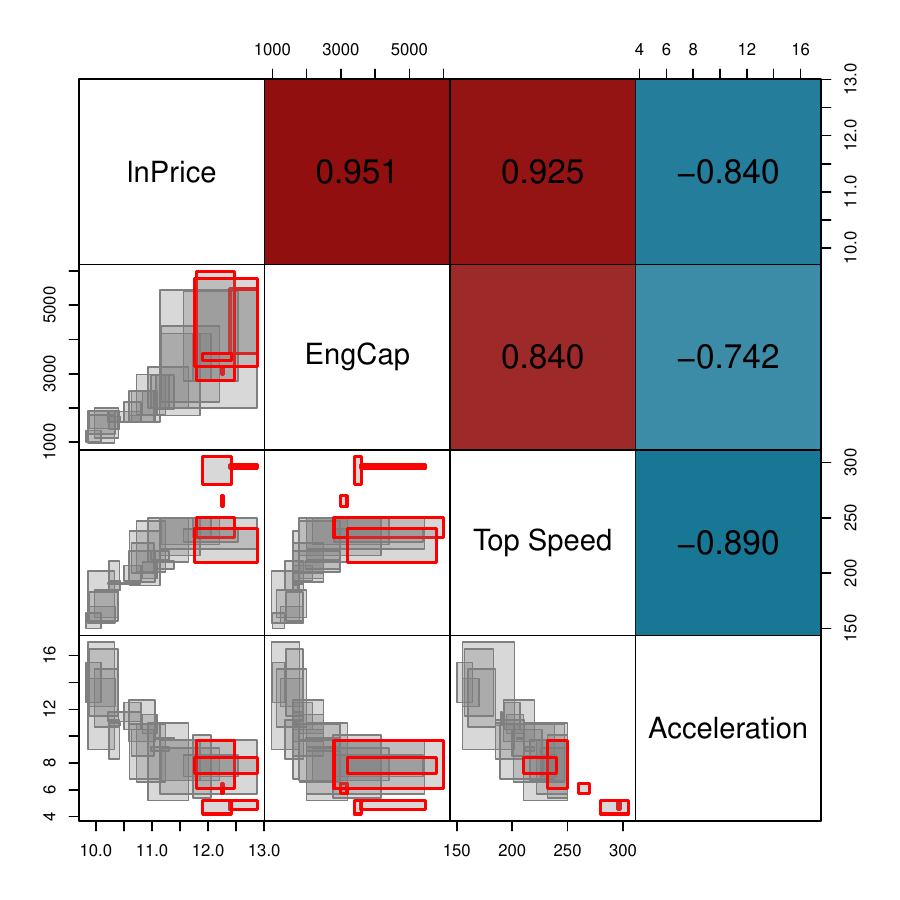}
        \caption{Pairs plot for the Cars dataset.}
        \label{fig:cars_pairs}
    \end{subfigure}
    \hfill
    \begin{subfigure}[t]{0.49\textwidth}
        \centering
        \includegraphics[width=\textwidth]{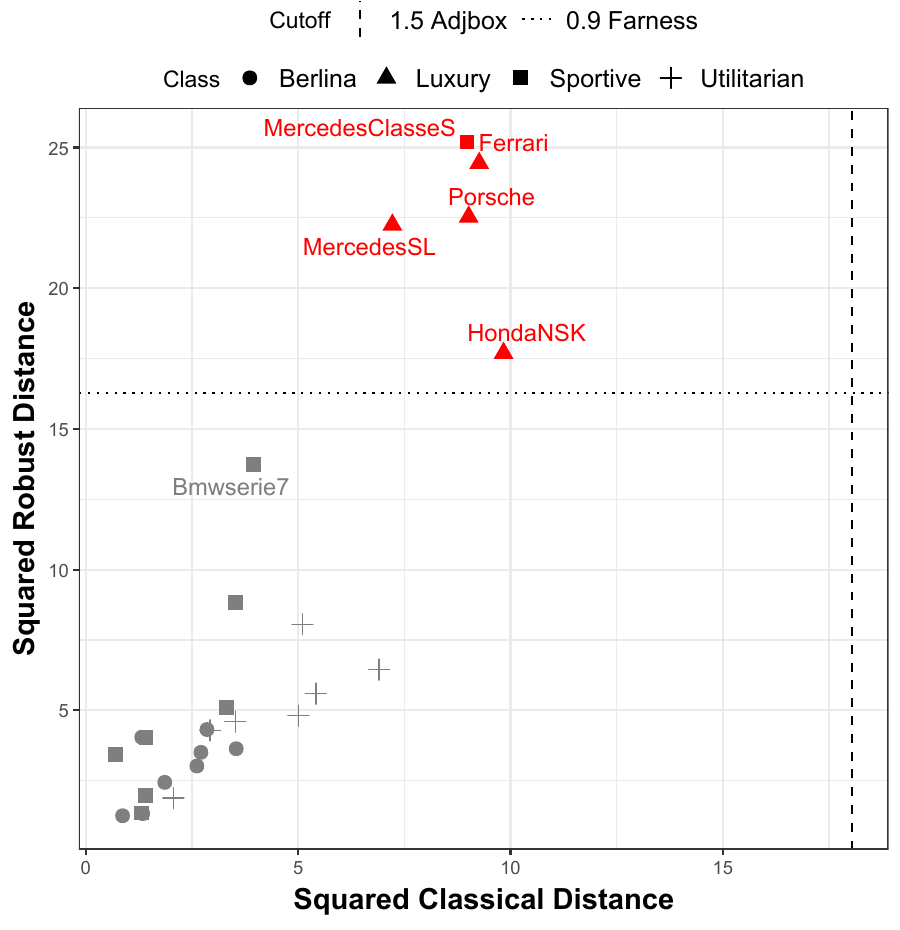}
        \caption{Distance-distance plot for the Cars dataset.}
        \label{fig:cars_dist_dist}
    \end{subfigure}
    \caption{Pairs plot (a) and distance-distance plot (b) of the classical versus robust squared Interval-Mahalanobis distances for the Cars dataset. In the pairs plot (a), the lower triangular shows scatter plots of the four variables, with the outlying observations highlighted in red, while the upper triangular shows the interval correlation matrix. In the distance-distance plot (b), the points' shape represents the car model class, the horizontal and vertical lines correspond to the $0.9$ farness cutoff value and the $1.5$ adjusted boxplot cutoff value, respectively, with the outliers marked in red.}
    \label{fig:cars1}
\end{figure}

Finally, when comparing our findings with those of \cite{DuarteSilva2018}, we observe substantial agreement between the two approaches. In particular, \textit{Ferrari}, \textit{Porsche}, and \textit{MercedesClasseS} are identified as outliers in both studies. \textit{HondaNSK} is also flagged in \cite{DuarteSilva2018}, but only with respect to \textit{lnPrice}, whereas our method detects it as a multivariate interval outlier. Our approach additionally identifies \textit{MercedesSL}, which was not detected in their analysis, while \textit{Skoda Octavia} (and \textit{Alfa 166}, flagged there in relation to \textit{Acceleration}) are not classified as outliers under our framework. These differences likely stem from the distinct modelling strategies adopted. Overall, this comparison suggests that explicitly accounting for the joint contribution of centers, ranges, and underlying latent distributions yields a fully multivariate assessment of outlyingness.

\subsection{Spotify Tracks Dataset}
The Spotify Tracks Dataset is publicly available on Kaggle \citep{kaggle.spotify2022}. The raw dataset contains $114\,000$ observations and $20$ audio and metadata variables retrieved from Spotify in October 2022. However, some tracks appear more than once or have multiple genres. To address this, we removed duplicates and retained the $11$ numerical variables: \textit{duration\_ms}, \textit{popularity}, \textit{danceability}, \textit{energy}, \textit{loudness}, \textit{speechiness}, \textit{acousticness}, \textit{instrumentalness}, \textit{liveness}, \textit{valence}, and \textit{tempo}. Then, using only the tracks with one genre associated, we computed the robust centroid using the MCD estimator for each genre. Next, for the tracks with more than one genre, we selected among those genres the one whose centroid was closest, using the conventional robust Mahalanobis distance. After this process, $81\,033$ tracks remained. We then applied some transformations: \textit{tempo} and \textit{loudness} were logarithmically transformed; \textit{popularity} was rescaled from $[0,100]$ to $[0,1]$; \textit{duration\_ms} was converted from milliseconds into minutes, becoming \textit{duration}. Finally, the data were aggregated by genre into intervals, taking as lower and upper bounds the $1\%$ and $99\%$ quantiles, respectively. This resulted in an interval-valued dataset with $111$ observations and $11$ variables. The number of tracks grouped into each genre ranges from $62$ to $1\,822$ with a median of $800$, where some tracks have missing values across variables.

Unlike in the Cars dataset, microdata are available, enabling estimation of the latent variables' parameters. In particular, $\boldsymbol{\mathfrak{E}}_{UU}$ was estimated via Kernel Density Estimation (KDE) using the R-package \texttt{kde1d} \citep{kde1d}. Then, we applied the proposed reweighted IMCD estimator with a subset size of $\lfloor 0.75\times111\rfloor=83$ and a farness cutoff of $0.95$. The estimated robust correlation matrix is shown in \autoref{fig:spotify_corr}. We observe medium to strong positive correlations among energy-related features (\textit{energy}, \textit{loudness}, and \textit{tempo}) and somewhat meaningful positive associations between \textit{danceability} and \textit{valence}, and between \textit{danceability} and \textit{loudness}. In contrast, \textit{acousticness} is moderately negatively correlated with \textit{energy}, while other features like \textit{speechiness} and \textit{liveness} show correlations close to zero with most features. These relationships are consistent with musical intuition, supporting the validity of the robust correlation estimates obtained from the IMCD algorithm.

\begin{figure}[ht]
    \centering
    \begin{subfigure}[t]{0.49\textwidth}
        \centering
        \includegraphics[width=\textwidth]{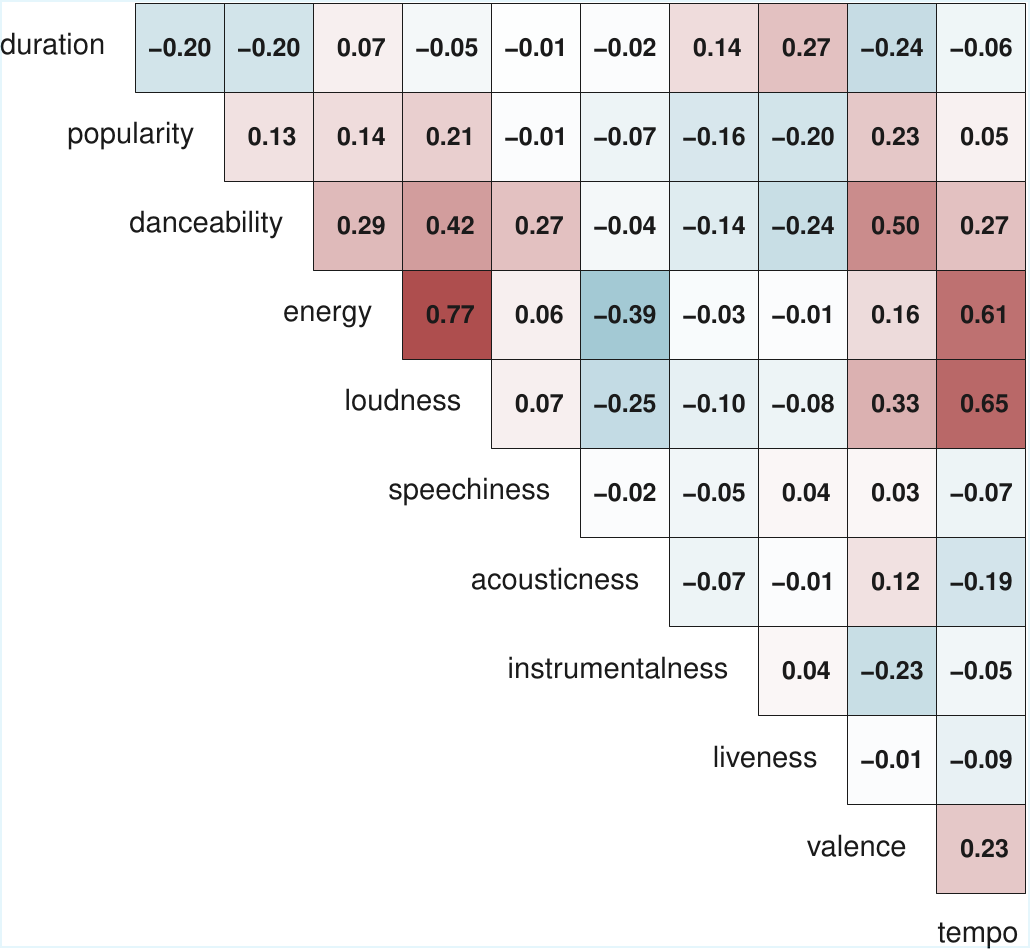}
        \caption{Interval correlation matrix plot.}
        \label{fig:spotify_corr}
    \end{subfigure}
    \hfill
    \begin{subfigure}[t]{0.49\textwidth}
        \centering
        \includegraphics[width=\textwidth]{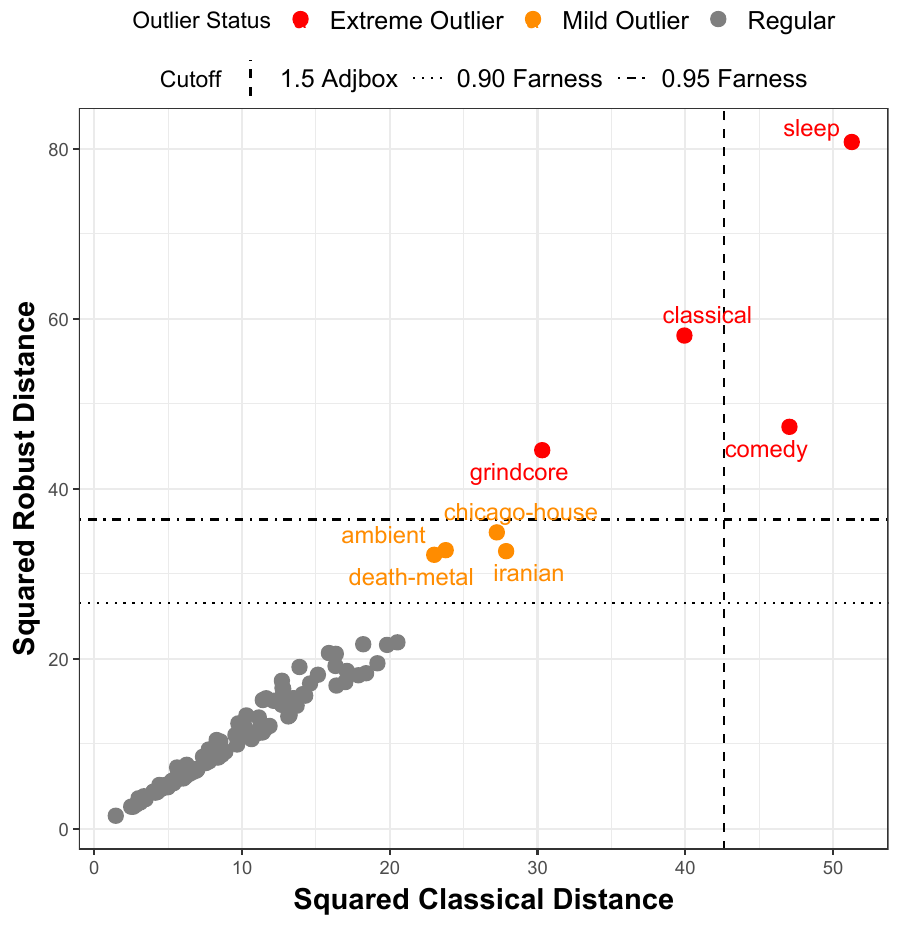}
        \caption{Distance-distance plot.}
        \label{fig:spotify_dist_dist}
    \end{subfigure}
    \caption{IMCD correlation estimate (a) and distance-distance (b) plot of the classical versus robust squared Interval-Mahalanobis distances for the Spotify dataset. In the distance-distance plot (b), the horizontal lines correspond to the $0.9$ and $0.95$ farness cutoff values, with the extreme outliers marked in red and the mild outliers in orange, while the vertical line represents the $1.5$ adjusted boxplot cutoff.}
    \label{fig:spotify1}
\end{figure}

Next, we applied the outlier detection rule using a farness cutoff of $0.95$ to identify extreme outliers and $0.9$ for mild outliers. The distance-distance plot in \autoref{fig:spotify_dist_dist} highlights these outlying genres, with extreme and mild outliers marked in red and orange, respectively. The classical method only detects \textit{sleep} and \textit{comedy} as outlying genres, whereas the robust method flags \textit{sleep}, \textit{comedy}, \textit{classical}, and \textit{grindcore} as extreme outliers, and \textit{chicago-house}, \textit{ambient}, \textit{iranian}, and \textit{death-metal} as mild ones. These results are musically interpretable: \textit{sleep} and \textit{classical} typically exhibit very low energy and loudness with high acousticness (and, for \textit{classical}, long duration), placing them far from most genres; \textit{comedy} commonly stands out due to unusually high speechiness and low instrumentalness; and \textit{grindcore} is often characterized by high energy and danceability. The mild outliers also present coherent deviations from the majority of genres and thus the dataset's barycenter. Compared with the classical estimator, the robust approach captures a broader and more meaningful set of atypical genres, highlighting its improved sensitivity to interval-valued audio features. Moreover, even though the intervals were trimmed to the $1\%$ and $99\%$ quantiles, the robust method still identifies outlying genres, demonstrating that interval-valued outliers go beyond extreme values in the microdata.

\section{Conclusion}
\label{sec:conclusion}
This work introduces a robust framework for analyzing interval-valued data in the presence of anomalous observations. Building on the expected value and covariance matrix definitions based on the Mallows distance, we propose an extension of the MCD estimator to interval-valued data that jointly accounts for centers, ranges, and the underlying microdata structure. From the resulting robust location and scatter estimates, we define a robust Interval-Mahalanobis distance and suggest fully non-parametric cutoff strategies for outlier detection, including adjusted boxplots and farness-based approaches.

Extensive simulations demonstrate that the IMCD estimator consistently outperforms its classical counterpart under contamination, yielding substantially lower estimation errors for the symbolic covariance matrix and markedly improved outlier detection rates. The reweighting step proves essential for recovering efficiency, with farness based reweighting showing particularly stable performance across scenarios. The two real world applications further illustrate the practical utility of the method, successfully identifying meaningful outliers in interval-valued datasets from the automotive and music domains.

In contrast to the existing approach to the MCD estimator for interval-valued data, the proposed method is specifically designed to operate within a unified framework that explicitly incorporates both macrodata (interval bounds) and microdata variability through the Mallows-distance-based covariance structure. This distinction is central to the methodological contribution of the paper: rather than adapting classical MCD ideas using only interval summaries such as bounds or centers, we construct a robust estimator that accounts for the internal distributional information encoded in the intervals. Within this framework, the proposed Interval-Mahalanobis distance provides a coherent tool for outlier detection that reflects all relevant sources of variability.

Several important directions remain for future research. From a theoretical perspective, key robustness properties of the estimator, such as its breakdown point and influence function, have not yet been formally established and warrant further investigation. In addition, as with the classical MCD estimator, the proposed method is subject to limitations in high-dimensional settings where the number of variables becomes large relative to the sample size. This issue may lead to instability or singularity of the covariance estimates. In the conventional multivariate setting, this limitation has been addressed through regularized versions of the MCD estimator \citep{regularizedMCD}, and extending such regularization strategies to the interval-valued framework represents a promising direction for future research. Furthermore, given the central role of the conventional MCD estimator in robust multivariate analysis, we expect the proposed IMCD estimator to serve a similar purpose as a building block for a wide range of robust methods for multivariate interval-valued data.

The R-package \texttt{AIDA}, including the proposed methods and applications, is available from CRAN at \url{https://cran.r-project.org/package=AIDA}. In particular, vignette ``IMCD estimator examples'' reproduces the results from Section \ref{sec:applications}. For more details, such as the code used to generate the simulation results and pre-process the datasets, the reader may refer to the GitHub repository at \url{https://github.com/catarinaploureiro/AIDA}.

\section*{Acknowledgments} 
\noindent The authors gratefully acknowledge Professor João Xavier for his valuable assistance with the optimization problem. Generative AI tools (GPT-4 and GPT-5) were used for language editing. This work was supported by FCT — Funda\c{c}\~ao para a Ci\^encia e Tecnologia, I.P. and the Recovery and Resilience Plan (PRR) through the grant \href{https://doi.org/10.54499/UI/BD/153720/2021}{UI/\discretionary{}{}{}BD/\discretionary{}{}{}153720/\discretionary{}{}{}2021}, and through the projects \href{https://doi.org/10.54499/UID/PRR/04621/2025}{UID/\discretionary{}{}{}04621/\discretionary{}{}{}2025}, \href{https://doi.org/10.54499/UID/50014/2025}{UID/\discretionary{}{}{}50014/\discretionary{}{}{}2025}, \href{https://doi.org/10.54499/UID/04459/2025}{UID/\discretionary{}{}{}04459/\discretionary{}{}{}2025}, and \href{https://doi.org/10.54499/UID/PRR/04459/2025}{UID/\discretionary{}{}{}PRR/\discretionary{}{}{}04459/\discretionary{}{}{}2025}.

\small{
\bibliographystyle{elsarticle-harv}
\bibliography{cas-refs}}

\normalsize

\bigskip
\bigskip
\bigskip

\appendix
\section{Proofs from Section \ref{sec:MCD}}
\label{sec:appendix_mcd}

\subsection{Proof of Proposition \ref{prop:cov_matrix_rewritten}}
\label{sec:proof_cov_matrix_rewritten}
Noticing that
\begin{equation*}
    \begin{bmatrix}
        \boldsymbol{I} & \boldsymbol{\Psi}/2
    \end{bmatrix}
    \begin{bmatrix}
        \boldsymbol{\Sigma}_{CC} & \boldsymbol{\Sigma}_{CR}\\
        \boldsymbol{\Sigma}_{RC} & \boldsymbol{\Sigma}_{RR}
    \end{bmatrix}
    \begin{bmatrix}
        \boldsymbol{I} \\
        \boldsymbol{\Psi}/2
    \end{bmatrix}=
    \boldsymbol{\Sigma}_{CC}+\frac{1}{2}\boldsymbol{\Sigma}_{CR}\boldsymbol{\Psi}+\frac{1}{2}\boldsymbol{\Psi}\boldsymbol{\Sigma}_{RC}+\frac{1}{4}\boldsymbol{\Psi}\boldsymbol{\Sigma}_{RR}\boldsymbol{\Psi},
\end{equation*}
and setting $\boldsymbol{\Lambda}=\begin{bmatrix}
            \boldsymbol{I} & \boldsymbol{\Psi}/2
        \end{bmatrix}$, we have that
\begin{equation}
    \label{eq:proof1}
    \boldsymbol{\Sigma}_B=\boldsymbol{\Lambda}\boldsymbol{\Sigma}_{2p}\boldsymbol{\Lambda}^\top+
    \frac{1}{4}\boldsymbol{\mathfrak{E}}_{UU}\bullet\boldsymbol{\Sigma}_{RR}-\frac{1}{4}\boldsymbol{\Psi}\boldsymbol{\Sigma}_{RR}\boldsymbol{\Psi}.
\end{equation}
It is also the case that
\begin{align*}
    [\boldsymbol{\mathfrak{E}}_{UU}\bullet\boldsymbol{\Sigma}_{RR}-\boldsymbol{\Psi}\boldsymbol{\Sigma}_{RR}\boldsymbol{\Psi}]_{j\ell}&=
    \begin{cases}
        \mathrm{Var}(U_j)\mathrm{Var}(R_j), &\mathrm{if} \quad j=\ell,\\
        \left[\mathcal{E}(U_j,U_\ell)-\mathbb{E}(U_j)\mathbb{E}(U_\ell)\right]\mathrm{Cov}(R_j,R_\ell),&\mathrm{if} \quad j\neq \ell,
    \end{cases}\\
    &=\mathrm{Cov}(F_{U_j}^{-1}(T), F_{U_\ell}^{-1}(T))\mathrm{Cov}(R_j,R_\ell),
\end{align*}
where $T\sim\mathrm{Unif}(0,1)$. For $[\boldsymbol{\Xi}]_{j\ell}=\mathrm{Cov}(F_{U_j}^{-1}(T), F_{U_\ell}^{-1}(T))=\widetilde{\mathrm{Cov}}(U_j,U_\ell)$, $i,\ell=1,\dots,p$, with $\widetilde{\mathrm{Cov}}(U_j,U_\ell)$ as in \eqref{eq:cov_xi}, we can conclude that 
\begin{equation}
    \label{eq:proof2}
    \boldsymbol{\mathfrak{E}}_{UU}\bullet\boldsymbol{\Sigma}_{RR}-\boldsymbol{\Psi}\boldsymbol{\Sigma}_{RR}\boldsymbol{\Psi}=\boldsymbol{\Xi}\bullet\boldsymbol{\Sigma}_{RR}.
\end{equation}
It now follows from \eqref{eq:proof1} and \eqref{eq:proof2} that \eqref{eq:cov_matrix_rewritten} holds. This concludes the proof.
\qed

\subsection{Proof of Proposition \ref{prop:concavity}}
\label{sec:proof_matrix_concave}
We begin by proving that $\boldsymbol{S}_B(\boldsymbol{z})$ defined in \eqref{eq:cov_matrix_z} is matrix concave. That is, for any $\boldsymbol{z},\boldsymbol{w}\in[0,1]^n$ and $\alpha\in[0,1]$, we have $\boldsymbol{S}_B((1-\alpha)\boldsymbol{z}+\alpha\boldsymbol{w})\succcurlyeq(1-\alpha)\boldsymbol{S}_B(\boldsymbol{z})+\alpha\boldsymbol{S}_B(\boldsymbol{w})$. Here, for symmetric matrices $\boldsymbol{A},\boldsymbol{B}\in\mathbb{R}^{p\times p}$, $\boldsymbol{A}\succcurlyeq\boldsymbol{B}$ denotes generalized matrix inequality, i.e., $\boldsymbol{A}-\boldsymbol{B}$ is positive semidefinite. Taking \eqref{eq:cov_matrix_rewritten}, \eqref{eq:barycenter_z}, and \eqref{eq:cov_matrix_z} into account, we can express $\boldsymbol{S}_B(\boldsymbol{z})$ as
\begin{equation}
    \label{eq:S_B}
    \boldsymbol{S}_B(\boldsymbol{z})=\boldsymbol{\Lambda}\boldsymbol{S}_{2p}(\boldsymbol{z})\boldsymbol{\Lambda}^\top+\frac{1}{4}\boldsymbol{\Xi}\bullet\boldsymbol{S}_{RR}(\boldsymbol{z}),
\end{equation}
where 
\begin{equation}
    \label{eq:S_2p}
    \boldsymbol{S}_{2p}(\boldsymbol{z})=\frac{1}{m}\sum_{i=1}^nz_i\boldsymbol{x}_i\boldsymbol{x}_i^\top-\frac{1}{m^2}\boldsymbol{X}^{\top}\boldsymbol{z}\boldsymbol{z}^\top\boldsymbol{X},
\end{equation}
and
\begin{equation}
    \label{eq:S_RR}
    \boldsymbol{S}_{RR}(\boldsymbol{z})=\frac{1}{m}\sum_{i=1}^nz_i\boldsymbol{r}_i\boldsymbol{r}_i^\top-\frac{1}{m^2}\boldsymbol{R}^{\top}\boldsymbol{z}\boldsymbol{z}^\top\boldsymbol{R}.
\end{equation}

First, we can prove that, for any $\boldsymbol{z},\boldsymbol{w}\in[0,1]^n$ and $\alpha\in[0,1]$, $\boldsymbol{S}_{2p}(\boldsymbol{z})$ is matrix concave. Let us assume this property is true, then we have
\begin{gather*}
    \boldsymbol{S}_{2p}((1-\alpha)\boldsymbol{z}+\alpha\boldsymbol{w})\succcurlyeq(1-\alpha)\boldsymbol{S}_{2p}(\boldsymbol{z})+\alpha\boldsymbol{S}_{2p}(\boldsymbol{w})\Leftrightarrow\\
    \frac{1}{m}\sum_{i=1}^n\left((1-\alpha)z_i+\alpha w_i\right)\boldsymbol{x}_i\boldsymbol{x}_i^\top-\frac{1}{m^2}\boldsymbol{X}^{\top}\left((1-\alpha)\boldsymbol{z}+\alpha\boldsymbol{w}\right)\left((1-\alpha)\boldsymbol{z}+\alpha\boldsymbol{w}\right)^\top\boldsymbol{X}\qquad\qquad\qquad\\
    \qquad\quad\succcurlyeq(1-\alpha)\left[\frac{1}{m}\sum_{i=1}^nz_i\boldsymbol{x}_i\boldsymbol{x}_i^\top-\frac{1}{m^2}\boldsymbol{X}^{\top}\boldsymbol{z}\boldsymbol{z}^\top\boldsymbol{X}\right]+\alpha\left[\frac{1}{m}\sum_{i=1}^nw_i\boldsymbol{x}_i\boldsymbol{x}_i^\top-\frac{1}{m^2}\boldsymbol{X}^{\top}\boldsymbol{w}\boldsymbol{w}^\top\boldsymbol{X}\right]\Leftrightarrow\\
    \boldsymbol{X}^{\top}\left((1-\alpha)\boldsymbol{z}+\alpha\boldsymbol{w}\right)\left((1-\alpha)\boldsymbol{z}+\alpha\boldsymbol{w}\right)^\top\boldsymbol{X}\preccurlyeq(1-\alpha)\boldsymbol{X}^{\top}\boldsymbol{z}\boldsymbol{z}^\top\boldsymbol{X}+\alpha\boldsymbol{X}^{\top}\boldsymbol{w}\boldsymbol{w}^\top\boldsymbol{X}\Leftrightarrow\\
    (1-2\alpha+\alpha^2)\boldsymbol{X}^{\top}\boldsymbol{z}\boldsymbol{z}^\top\boldsymbol{X}+\alpha(1-\alpha)\boldsymbol{X}^{\top}\boldsymbol{z}\boldsymbol{w}^\top\boldsymbol{X}+\alpha(1-\alpha)\boldsymbol{X}^{\top}\boldsymbol{w}\boldsymbol{z}^\top\boldsymbol{X}+\alpha^2\boldsymbol{X}^{\top}\boldsymbol{w}\boldsymbol{w}^\top\boldsymbol{X}\qquad\\
    \qquad\qquad\qquad\qquad\qquad\qquad\qquad\qquad\qquad\qquad\qquad\qquad\preccurlyeq(1-\alpha)\boldsymbol{X}^{\top}\boldsymbol{z}\boldsymbol{z}^\top\boldsymbol{X}+\alpha\boldsymbol{X}^{\top}\boldsymbol{w}\boldsymbol{w}^\top\boldsymbol{X}\Leftrightarrow\\
    \boldsymbol{0}\preccurlyeq\boldsymbol{X}^{\top}\boldsymbol{z}\boldsymbol{z}^\top\boldsymbol{X}-\boldsymbol{X}^{\top}\boldsymbol{z}\boldsymbol{w}^\top\boldsymbol{X}-\boldsymbol{X}^{\top}\boldsymbol{w}\boldsymbol{z}^\top\boldsymbol{X}+\boldsymbol{X}^{\top}\boldsymbol{w}\boldsymbol{w}^\top\boldsymbol{X}\Leftrightarrow\\
    \boldsymbol{0}\preccurlyeq(\boldsymbol{X}^{\top}\boldsymbol{z}-\boldsymbol{X}^{\top}\boldsymbol{w})(\boldsymbol{X}^{\top}\boldsymbol{z}-\boldsymbol{X}^{\top}\boldsymbol{w})^\top
\end{gather*}
which always holds, since $(\boldsymbol{X}^{\top}\boldsymbol{z}-\boldsymbol{X}^{\top}\boldsymbol{w})(\boldsymbol{X}^{\top}\boldsymbol{z}-\boldsymbol{X}^{\top}\boldsymbol{w})^\top$ is positive semidefinite. Then, $\boldsymbol{\Lambda}\boldsymbol{S}_{2p}(\boldsymbol{z})\boldsymbol{\Lambda}^\top$ is matrix concave. A similar reasoning shows that $\boldsymbol{S}_{RR}(\boldsymbol{z})$ is also matrix concave. In addition, we know that $\boldsymbol{\Xi}$ defined in \eqref{eq:cov_xi} is positive semidefinite, since it is a covariance matrix of the quantile functions $F_{U_j}^{-1}(T)$, $T\sim\mathrm{Unif}(0,1)$. Consequently, given that the Schur product and the addition operation preserve positive semidefiniteness, we can conclude that $\boldsymbol{S}_B(\boldsymbol{z})$ is also matrix concave.

Additionally, it is known that the function $h(\boldsymbol{A})=\log\det(\boldsymbol{A})$ is concave and non-decreasing \citep{convex_book}. Hence, the Composition Theorem \citep{convex_book} applies: a non-decreasing concave function composed with a matrix concave function is also concave. Therefore, we can conclude that the objective function $g(\boldsymbol{z})=\log\det\left(\boldsymbol{S}_B(\boldsymbol{z})\right)$ is concave. Its concavity holds on the domain of $\boldsymbol{z}$ such that $\boldsymbol{S}_B(\boldsymbol{z})$ is positive definite, ensuring that the logarithm of the determinant is well-defined.
\qed

\subsection{Proof of Proposition \ref{prop:gradient}}
\label{sec:proof_gradient}
As will be seen, the formula for the gradient of $g(\boldsymbol{z})$ defined in \eqref{eq:gradient} follows from
\begin{equation}
    \label{eq:derivative_logdet}
    \frac{\partial\log\det(\boldsymbol{A})}{\partial x}=\mathrm{tr}\left(\boldsymbol{A}^{-1}\frac{\partial\boldsymbol{A}}{\partial x}\right),
\end{equation}
(see, e.g., \cite{Harville1998}) and properties of the trace. Firstly, we show that, for $\boldsymbol{c}=(c_1,\dots,c_p)^\top\in\mathbb{R}^p$, $\boldsymbol{A}=[a_{j\ell}]\in\mathbb{R}^{p\times p}$, and $\boldsymbol{B}=[b_{j\ell}]\in\mathbb{R}^{p\times p}$ symmetric, we have
\begin{equation}
    \label{eq:trace_equality}
    \mathrm{tr}(\boldsymbol{A}(\boldsymbol{B}\bullet(\boldsymbol{cc}^\top)))=\mathrm{tr}(\boldsymbol{c}^\top(\boldsymbol{A}\bullet\boldsymbol{B})\boldsymbol{c})=\mathrm{tr}(\boldsymbol{c}^\top(\boldsymbol{B}\bullet\boldsymbol{A})\boldsymbol{c}).
\end{equation}
To prove that $\mathrm{tr}(\boldsymbol{A}(\boldsymbol{B}\bullet(\boldsymbol{cc}^\top)))=\mathrm{tr}(\boldsymbol{c}^\top(\boldsymbol{A}\bullet\boldsymbol{B})\boldsymbol{c})$, we can write
\begin{align*}
    \mathrm{tr}(\boldsymbol{A}(\boldsymbol{B}\bullet(\boldsymbol{cc}^\top)))&=\mathrm{tr}(\boldsymbol{A}(\boldsymbol{B}\bullet[c_jc_\ell]))=\mathrm{tr}(\boldsymbol{A}[b_{j\ell}c_jc_\ell])\\
    &=\mathrm{tr}\left(\mathrm{diag}\left(\sum_{\ell=1}^pa_{1\ell}b_{\ell 1}c_\ell c_1,\dots,\sum_{\ell=1}^pa_{p\ell}b_{\ell p}c_\ell c_p\right)\right),
\end{align*}
where $[\cdot]$ denotes the matrix with $(j,\ell)$-entry given by the enclosed expression, and, since $\boldsymbol{B}$ is a symmetric matrix,
\begin{equation*}
    \mathrm{tr}(\boldsymbol{A}(\boldsymbol{B}\bullet(\boldsymbol{cc}^\top)))=
    \mathrm{tr}\left(\mathrm{diag}\left(c_1\sum_{\ell=1}^pa_{1\ell}b_{1\ell}c_\ell,\dots,c_p\sum_{\ell=1}^pa_{p\ell}b_{p\ell}c_\ell\right)\right)=\sum_{j=1}^pc_j\sum_{\ell=1}^pa_{j\ell}b_{j\ell}c_\ell.
\end{equation*}
However,
\begin{align*}
    \mathrm{tr}(\boldsymbol{c}^\top(\boldsymbol{A}\bullet\boldsymbol{B})\boldsymbol{c})&=\boldsymbol{c}^\top(\boldsymbol{A}\bullet\boldsymbol{B})\boldsymbol{c}=\boldsymbol{c}^\top[a_{j\ell}b_{j\ell}]\boldsymbol{c}\\
    &=\boldsymbol{c}^\top\left(\sum_{\ell=1}^pa_{1\ell}b_{1\ell}c_\ell,\dots,\sum_{\ell=1}^pa_{p\ell}b_{p\ell}c_\ell\right)^\top=\sum_{j=1}^pc_j\sum_{\ell=1}^pa_{j\ell}b_{j\ell}c_\ell,
\end{align*}
which shows that the first equality in \eqref{eq:trace_equality} holds. The second equality in \eqref{eq:trace_equality} follows from the commutative property of the Schur product. 

We now prove \eqref{eq:gradient}. From \eqref{eq:derivative_logdet} and \eqref{eq:S_B}, we have
\begin{align*}
    \frac{\partial g(\boldsymbol{z})}{\partial z_i}&=\mathrm{tr}\left(\boldsymbol{S}_B(\boldsymbol{z})^{-1}\frac{\partial\boldsymbol{S}_B(\boldsymbol{z})}{\partial z_i}\right)\\
    &=\mathrm{tr}\left(\boldsymbol{S}_B(\boldsymbol{z})^{-1}\left(\boldsymbol{\Lambda}\left(\frac{\partial\boldsymbol{S}_{2p}(\boldsymbol{z})}{\partial z_i}\right)\boldsymbol{\Lambda}^\top+\frac{1}{4}\boldsymbol{\Xi}\bullet\left(\frac{\partial\boldsymbol{S}_{RR}(\boldsymbol{z})}{\partial z_i}\right)\right)\right)\\
    &=\frac{1}{m}\mathrm{tr}\Bigg(\boldsymbol{S}_B(\boldsymbol{z})^{-1}\boldsymbol{\Lambda}\boldsymbol{x}_i\boldsymbol{x}_i^\top\boldsymbol{\Lambda}^\top-\boldsymbol{S}_B(\boldsymbol{z})^{-1}\boldsymbol{\Lambda}\boldsymbol{x}_i\overline{\boldsymbol{x}}_B(\boldsymbol{z})^\top\boldsymbol{\Lambda}^\top\\
    &\qquad\qquad-\boldsymbol{S}_B(\boldsymbol{z})^{-1}\boldsymbol{\Lambda}\overline{\boldsymbol{x}}_B(\boldsymbol{z})\boldsymbol{x}_i^\top\boldsymbol{\Lambda}^\top+\frac{1}{4}\boldsymbol{S}_B(\boldsymbol{z})^{-1}\left(\boldsymbol{\Xi}\bullet\left(\boldsymbol{r}_i\boldsymbol{r}_i^\top\right)\right)\\
    &\qquad\qquad-\frac{1}{4}\boldsymbol{S}_B(\boldsymbol{z})^{-1}\left(\boldsymbol{\Xi}\bullet\left(\boldsymbol{r}_i\overline{\boldsymbol{r}}(\boldsymbol{z})^\top\right)\right)-\frac{1}{4}\boldsymbol{S}_B(\boldsymbol{z})^{-1}\left(\boldsymbol{\Xi}\bullet\left(\overline{\boldsymbol{r}}(\boldsymbol{z})\boldsymbol{r}_i^\top\right)\right)\Bigg).\\
\end{align*}
Finally, by the cyclic property of the trace and \eqref{eq:trace_equality}, we can write
\begin{align*}
    \frac{\partial g(\boldsymbol{z})}{\partial z_i}&=\frac{1}{m}\Bigg(\boldsymbol{x}_i^\top\boldsymbol{\Lambda}^\top\boldsymbol{S}_B(\boldsymbol{z})^{-1}\boldsymbol{\Lambda}\boldsymbol{x}_i-\overline{\boldsymbol{x}}_B(\boldsymbol{z})^\top\boldsymbol{\Lambda}^\top\boldsymbol{S}_B(\boldsymbol{z})^{-1}\boldsymbol{\Lambda}\boldsymbol{x}_i\\
    &\qquad\quad-\boldsymbol{x}_i^\top\boldsymbol{\Lambda}^\top\boldsymbol{S}_B(\boldsymbol{z})^{-1}\boldsymbol{\Lambda}\overline{\boldsymbol{x}}_B(\boldsymbol{z})+\frac{1}{4}\boldsymbol{r}_i^\top\left(\boldsymbol{\Xi}\bullet\boldsymbol{S}_B(\boldsymbol{z})^{-1}\right)\boldsymbol{r}_i\\
    &\qquad\quad-\frac{1}{4}\overline{\boldsymbol{r}}(\boldsymbol{z})^\top\left(\boldsymbol{\Xi}\bullet\boldsymbol{S}_B(\boldsymbol{z})^{-1}\right)\boldsymbol{r}_i-\frac{1}{4}\boldsymbol{r}_i^\top\left(\boldsymbol{\Xi}\bullet\boldsymbol{S}_B(\boldsymbol{z})^{-1}\right)\overline{\boldsymbol{r}}(\boldsymbol{z})\Bigg)\\
    &=\frac{1}{m}\Bigg((\boldsymbol{x}_i-\overline{\boldsymbol{x}}_B(\boldsymbol{z}))^\top\boldsymbol{\Lambda}^\top\boldsymbol{S}_B(\boldsymbol{z})^{-1}\boldsymbol{\Lambda}(\boldsymbol{x}_i-\overline{\boldsymbol{x}}_B(\boldsymbol{z}))-\overline{\boldsymbol{x}}_B(\boldsymbol{z})^\top\boldsymbol{\Lambda}^\top\boldsymbol{S}_B(\boldsymbol{z})^{-1}\boldsymbol{\Lambda}\overline{\boldsymbol{x}}_B(\boldsymbol{z})\\
    &\qquad\quad+\frac{1}{4}(\boldsymbol{r}_i-\overline{\boldsymbol{r}}(\boldsymbol{z}))^\top\left(\boldsymbol{\Xi}\bullet\boldsymbol{S}_B(\boldsymbol{z})^{-1}\right)(\boldsymbol{r}_i-\overline{\boldsymbol{r}}(\boldsymbol{z}))-\frac{1}{4}\overline{\boldsymbol{r}}(\boldsymbol{z})^\top\left(\boldsymbol{\Xi}\bullet\boldsymbol{S}_B(\boldsymbol{z})^{-1}\right)\overline{\boldsymbol{r}}(\boldsymbol{z})\Bigg).
\end{align*}
\qed

\section{IMCD Estimator Algorithm Pseudocode}
\label{sec:appendix_mcd_algorithm}

\begin{algorithm}[H]
\caption{Minorization Step}
\label{alg:minorization}
\KwIn{Interval-valued observations $\{\boldsymbol{x}_1,\dots,\boldsymbol{x}_n\}$, subset size $m$, weight vector $\boldsymbol{z}$}
\KwOut{Updated $\boldsymbol{z}$}
    
    Compute distances between each observation $\boldsymbol{x}_i$ and the current barycenter estimate:
    \begin{align*}
        d_i^2(\boldsymbol{z})&=\left(\boldsymbol{x}_i-\overline{\boldsymbol{x}}_B(\boldsymbol{z})\right)^\top\boldsymbol{\Lambda}^\top\boldsymbol{S}_B(\boldsymbol{z})^{-1}\boldsymbol{\Lambda}\left(\boldsymbol{x}_i-\overline{\boldsymbol{x}}_B(\boldsymbol{z})\right)\\
        &\quad+\frac{1}{4}\left(\boldsymbol{r}_i-\overline{\boldsymbol{r}}(\boldsymbol{z})\right)^\top\left(\boldsymbol{\Xi}\bullet\boldsymbol{S}_B(\boldsymbol{z})^{-1}\right)\left(\boldsymbol{r}_i-\overline{\boldsymbol{r}}(\boldsymbol{z})\right)
    \end{align*}

    Sort distances in ascending order: $$d_{(1)}^2(\boldsymbol{z})\leqslant \dots\leqslant d_{(m)}^2(\boldsymbol{z})\leqslant d_{(m+1)}^2(\boldsymbol{z})\leqslant\dots\leqslant d_{(n)}^2(\boldsymbol{z})$$
    
    Select $m$ observations with smallest $d_i^2$ to update $\boldsymbol{z}$ as:
    $$z_i \leftarrow  \begin{cases}
        1, \quad \mathrm{if} \ d_i^2(\boldsymbol{z}^{(t)})\leqslant d_{(m)}^2(\boldsymbol{z}^{(t)}),\\
        0, \quad \mathrm{otherwise.}
    \end{cases}$$
\end{algorithm}

\begin{algorithm}[H]
\caption{IMCD Estimator (adapted from \cite{fastMCD})}
\label{alg:IMCD}

\KwIn{Interval-valued observations $\{\boldsymbol{x}_1,\dots,\boldsymbol{x}_n\}$, subset size $m$}
\KwOut{Raw and reweighted IMCD estimates}

\uIf{$m=n$}{
  $\boldsymbol{z}_{\mathrm{final}} \leftarrow (1,1,\dots,1)^\top$\;
}
\uElseIf{$n \le 600$}{

  Draw 500 random $(p+1)$-subsets with nonsingular covariance\;

  \ForEach{subset}{
    Expand to size $m$ using the $m$ smallest $d_i^2(\boldsymbol{z})$\;
    Apply the Minorization Step twice\;
    Compute $\log \det \boldsymbol{S}_B(\boldsymbol{z})$\;
  }

  Retain 10 subsets with smallest $\log \det \boldsymbol{S}_B(\boldsymbol{z})$\;

  \ForEach{of these 10 subsets}{
    Iterate Minorization Step until convergence\;
    Recompute $\log \det \boldsymbol{S}_B(\boldsymbol{z})$\;
  }

  $\boldsymbol{z}_{\mathrm{final}} \leftarrow$ subset minimizing $\log \det \boldsymbol{S}_B(\boldsymbol{z})$\;
}
\Else{

  Draw a random subset of size 
  $n_{\mathrm{merge}}=\min(n,1500)$\;

  Partition this reduced dataset into 
  $k=\min(5,\lceil n/300\rceil)$ disjoint $n_{sub}$-subsets\;

  Let $m_{\text{sub}}=\lfloor n_{\text{sub}} m/n \rfloor$ and $m_{\mathrm{merge}}=\lfloor n_{\mathrm{merge}} m/n \rfloor$\;

  \ForEach{partition}{
    Draw $(500/k)$ random $(p+1)$-subsets with nonsingular covariance\;

    \ForEach{subset}{
      Expand to size $m_{\text{sub}}$ using the $m_{\text{sub}}$ smallest $d_i^2(\boldsymbol{z})$\;
      Apply Minorization Step twice\;
      Compute $\log \det \boldsymbol{S}_B(\boldsymbol{z})$\;
    }

    Retain 10 subsets with smallest $\log \det \boldsymbol{S}_B(\boldsymbol{z})$ within this partition\;
  }

  Embed each of the $10k$ retained subsets into a vector 
    of length $n_{\mathrm{merge}}$\;
  
  \ForEach{of these $10k$ subsets}{
    Apply the Minorization Step twice on the reduced dataset with $m_{\mathrm{merge}}$\;
    Compute $\log \det \boldsymbol{S}_B(\boldsymbol{z})$\;
  }

  Retain 10 subsets with smallest $\log \det \boldsymbol{S}_B(\boldsymbol{z})$\;

  Map retained subsets back to the full dataset of size $n$\;

  \ForEach{of these 10 subsets}{
    Iterate Minorization Step on the full dataset until convergence\;
    Recompute $\log \det \boldsymbol{S}_B(\boldsymbol{z})$\;
  }

  $\boldsymbol{z}_{\mathrm{final}} \leftarrow$ subset minimizing $\log \det \boldsymbol{S}_B(\boldsymbol{z})$\;
}

Compute  raw IMCD estimates based on 
$\boldsymbol{z}_{\mathrm{final}}$: $\overline{\boldsymbol{x}}_{\mathrm{rawIMCD}}$ and $\boldsymbol{S}_{\mathrm{rawIMCD}}$\;

Perform one-step reweighting to obtain final IMCD estimates: $\overline{\boldsymbol{x}}_{\mathrm{IMCD}}$ and $\boldsymbol{S}_{\mathrm{IMCD}}$\;

\end{algorithm}

\section{Additional Simulation Results}
\label{sec:appendix_simulations}
In addition to the metrics presented in the main text, we report further results for both the covariance matrix estimation task and the outlier detection task.

\subsection{Covariance matrix estimation}
\label{sec:appendix_sim_matrix}
The results for the remaining scenarios of the relative Frobenius error are shown in \autoref{fig:frobenius.error_2}. For the matrix estimation scenarios, we also consider the Kullback-Leibler (KL) divergence and the angle error between the estimated and ground-truth covariance matrices:
\begin{itemize}
    \item \textbf{KL divergence}:
        The KL divergence between two $p$-dimensional Gaussian distributions 
        $\mathcal{N}(\boldsymbol{\mu}, \hat{\boldsymbol{\Sigma}})$ and 
        $\mathcal{N}(\boldsymbol{\mu}, \boldsymbol{\Sigma})$ is given by
        \begin{equation}
        \label{eq:kl_divergence}
            \mathrm{KL}(\hat{\boldsymbol{\Sigma}} \,\|\, \boldsymbol{\Sigma}) 
            = \frac{1}{2} \left(
            \mathrm{tr}(\boldsymbol{\Sigma}^{-1} \hat{\boldsymbol{\Sigma}})
            + \log\!\left( \frac{\det(\boldsymbol{\Sigma})}{\det(\hat{\boldsymbol{\Sigma}})} \right)
            - p
            \right),
        \end{equation}
        where $\hat{\boldsymbol{\Sigma}}$ and $\boldsymbol{\Sigma}$ are the estimated and ground truth covariance matrices, respectively. Although KL divergence assumes multivariate Gaussian distributions \citep{kldivergence2023} — making it not directly applicable in our setting — it still provides a useful numerical comparison.
    \item \textbf{Angle error}:
        The angle error quantifies the discrepancy between the eigenvalue vectors of the estimated and ground-truth covariance matrices:
        \begin{equation}
        \label{eq:angle_error}
            1 - \frac{\hat{\boldsymbol{a}}^\top \boldsymbol{a}}
            {\sqrt{\hat{\boldsymbol{a}}^\top \hat{\boldsymbol{a}}} \,
            \sqrt{\boldsymbol{a}^\top \boldsymbol{a}}},
        \end{equation}
        where $\hat{\boldsymbol{a}}$ and $\boldsymbol{a}$ contain the eigenvalues of
        $\hat{\boldsymbol{\Sigma}}$ and $\boldsymbol{\Sigma}$, respectively.
\end{itemize}

The results for the KL divergence and the angle error are presented in \autoref{fig:kl.divergence} and \autoref{fig:angle.error}, respectively.

\subsection{Outlier detection}
\label{sec:appendix_sim_outlier}
The results for the remaining scenarios of the recall and precision of class 1 are shown in \autoref{fig:recall1_2} and \autoref{fig:precision1_2}, respectively. For completeness, we also report the precision and recall of the regular class (0), along with overall accuracy, F$_1$-score and G-mean. These results are presented in \autoref{fig:recall0}--\ref{fig:gmean}, with the metrics defined as follows:
\begin{itemize}
    \item \textbf{Precision} and \textbf{recall} of the regular class:
        \begin{equation}
            \label{eq:pr_re_0}
            Pr(0) = \frac{\mathrm{TN}}{\mathrm{TN} + \mathrm{FN}},
            \quad
            Re(0) = \frac{\mathrm{TN}}{\mathrm{TN} + \mathrm{FP}},
        \end{equation}
        where TN is the number of true negatives, FN is the number of false negatives, and FP is the number of false positives.
    \item \textbf{Accuracy}:
        \begin{equation}
            \label{eq:accuracy}
            Acc = \frac{\mathrm{TP} + \mathrm{TN}}{\mathrm{TP} + \mathrm{TN} + \mathrm{FP} + \mathrm{FN}},
        \end{equation}
        where TP is the number of true positives.
    \item \textbf{F$_1$-score}, the harmonic mean of precision and recall for the outlier class: 
        \begin{equation}
            \label{eq:f1score}
            F_1(1) = \frac{2Pr(1)Re(1)}{Pr(1) + Re(1)}.
        \end{equation}
    \item \textbf{G-mean}, the geometric mean of recall for both classes:
        \begin{equation}
            \label{eq:gmean}
            Gmean = \sqrt{Re(0)Re(1)}
        \end{equation}
        The G-mean is often used in imbalanced learning problems \citep{Gmean}, of which outlier detection is an example.
\end{itemize}

\begin{figure}[ht]
    \centering
    \begin{subfigure}[b]{0.49\textwidth}
        \centering
        \includegraphics[width=\textwidth]{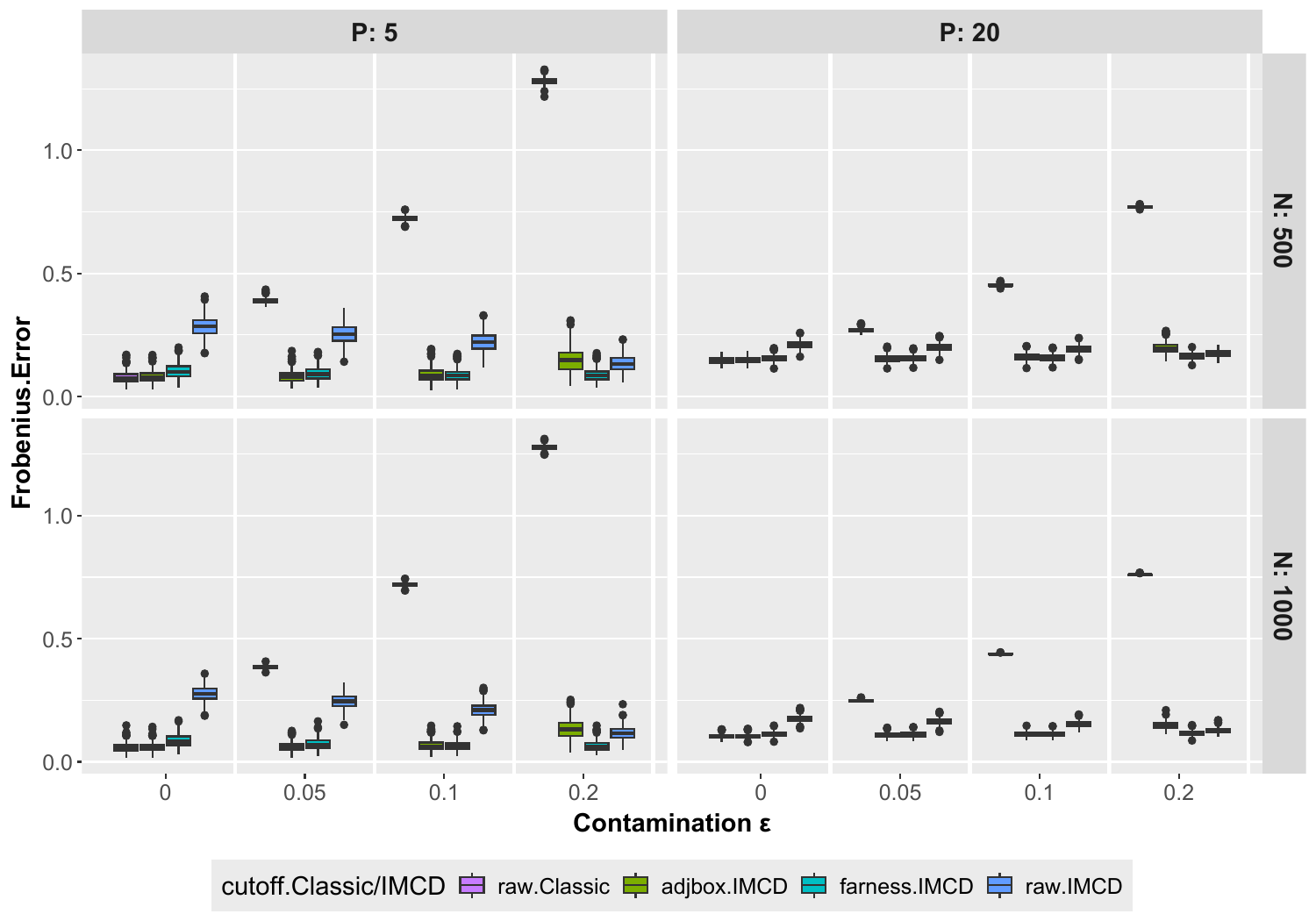}
        \caption{Scenario 3.} 
    \end{subfigure}
    \hfill
    \begin{subfigure}[b]{0.49\textwidth}
        \centering
        \includegraphics[width=\textwidth]{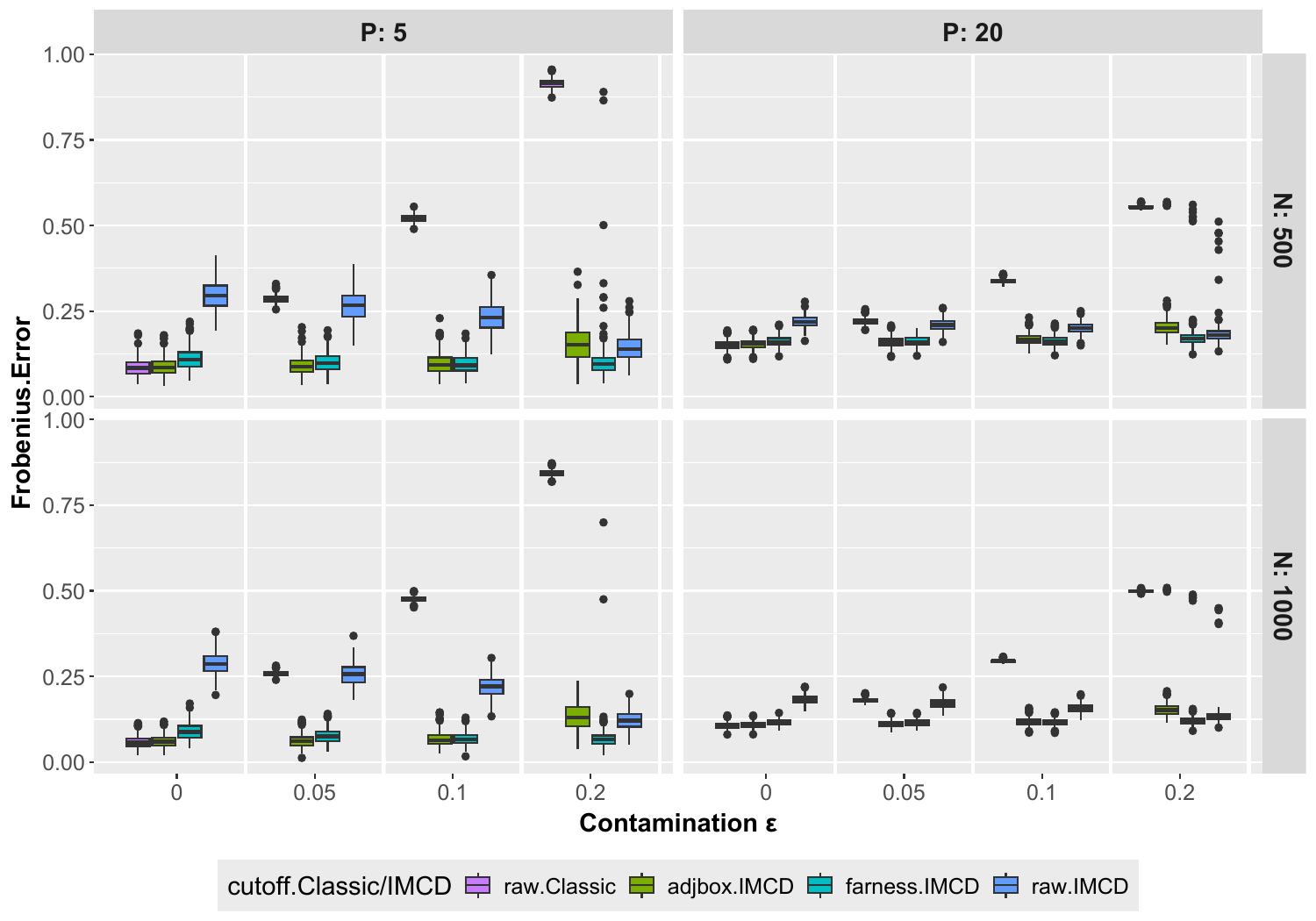}
        \caption{Scenario 6.} 
    \end{subfigure}
    \caption{Boxplots of the relative Frobenius error obtained for scenarios 3 and 6 and the different levels of contamination ($\epsilon$), number of variables ($P$), and sample size ($N$). For each case, we have four covariance matrix estimators: classic without reweighting (raw.Classic), IMCD with adjusted boxplot (adjbox.IMCD) and farness (farness.IMCD) reweighting, and IMCD without reweighting (raw.IMCD).}
    \label{fig:frobenius.error_2}
\end{figure}

\begin{figure}[ht]
    \centering
    \begin{subfigure}[b]{0.49\textwidth}
        \centering
        \includegraphics[width=\textwidth]{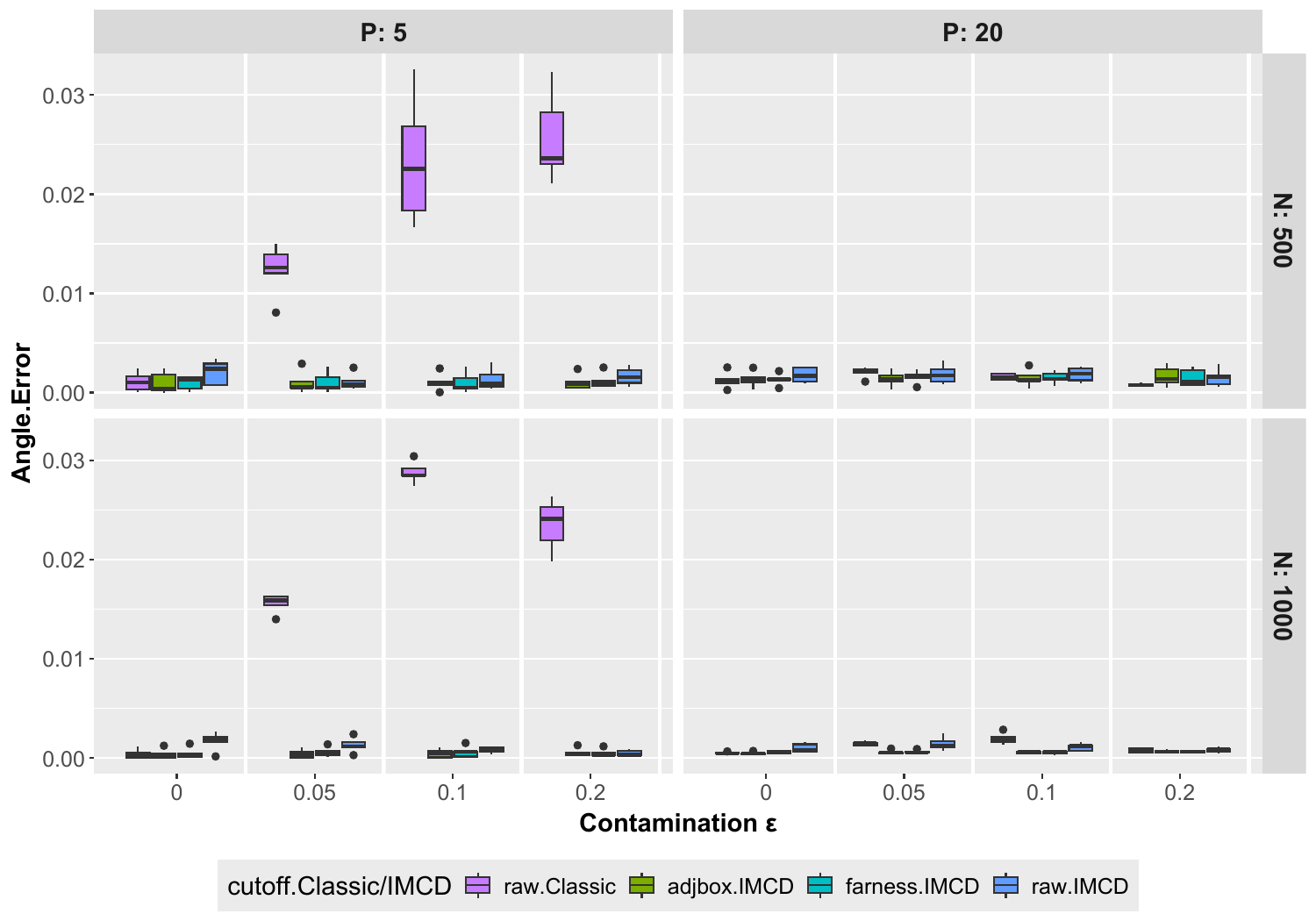}
        \caption{Scenario 1.} 
    \end{subfigure}
    \hfill
    \begin{subfigure}[b]{0.49\textwidth}
        \centering
        \includegraphics[width=\textwidth]{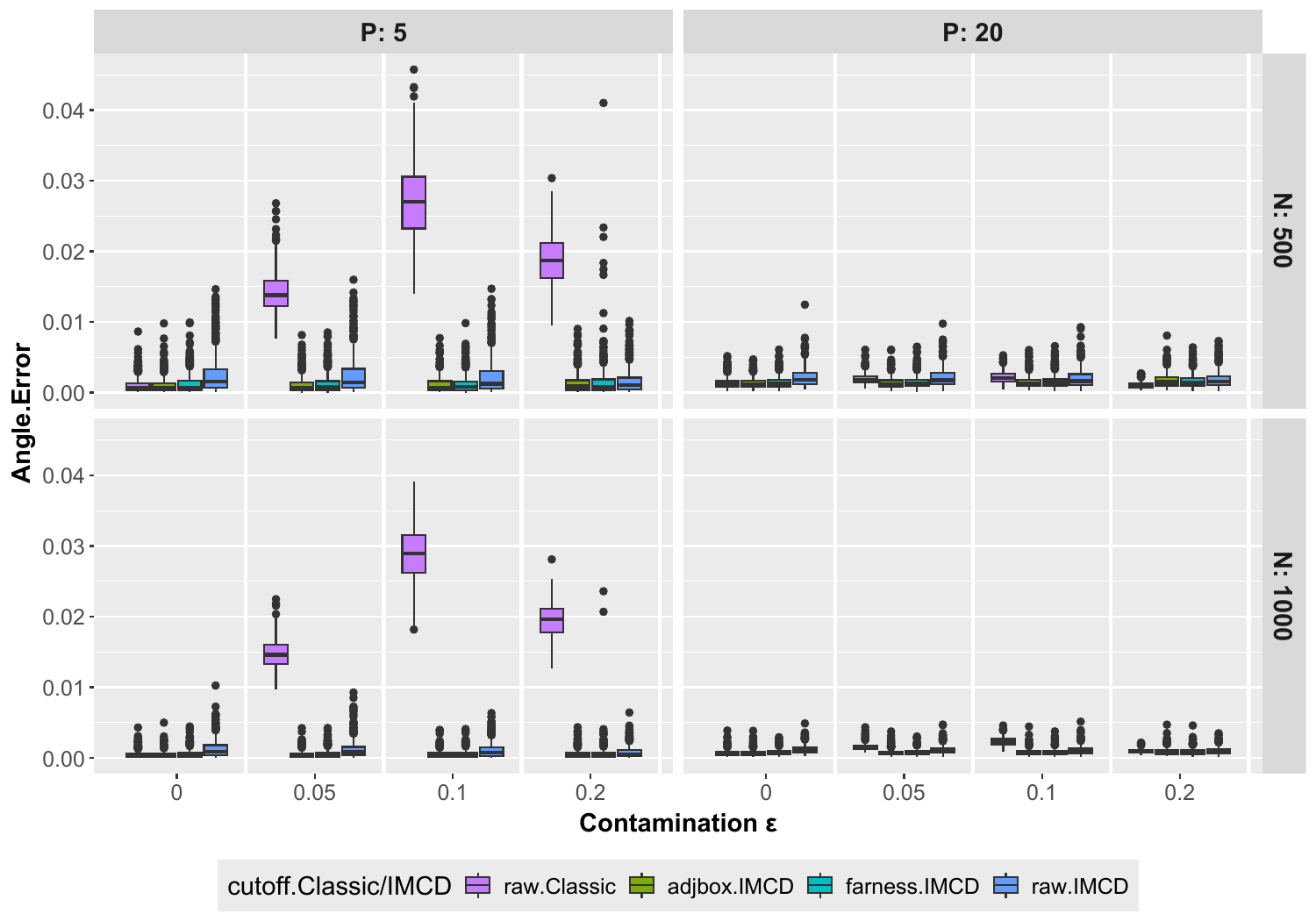}
        \caption{Scenario 4.} 
    \end{subfigure}
    \begin{subfigure}[b]{0.49\textwidth}
        \centering
        \includegraphics[width=\textwidth]{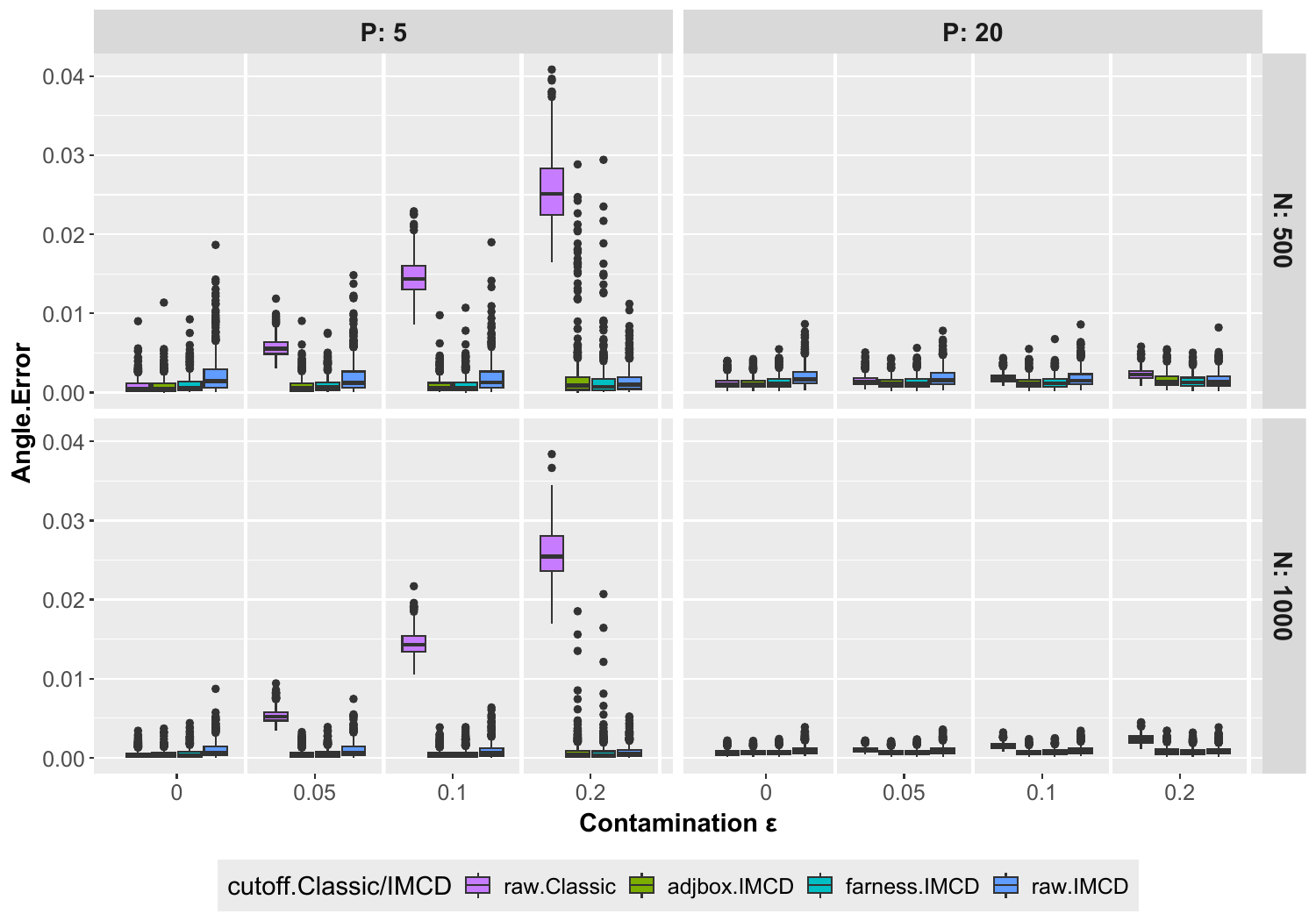}
        \caption{Scenario 2.} 
    \end{subfigure}
    \hfill
    \begin{subfigure}[b]{0.49\textwidth}
        \centering
        \includegraphics[width=\textwidth]{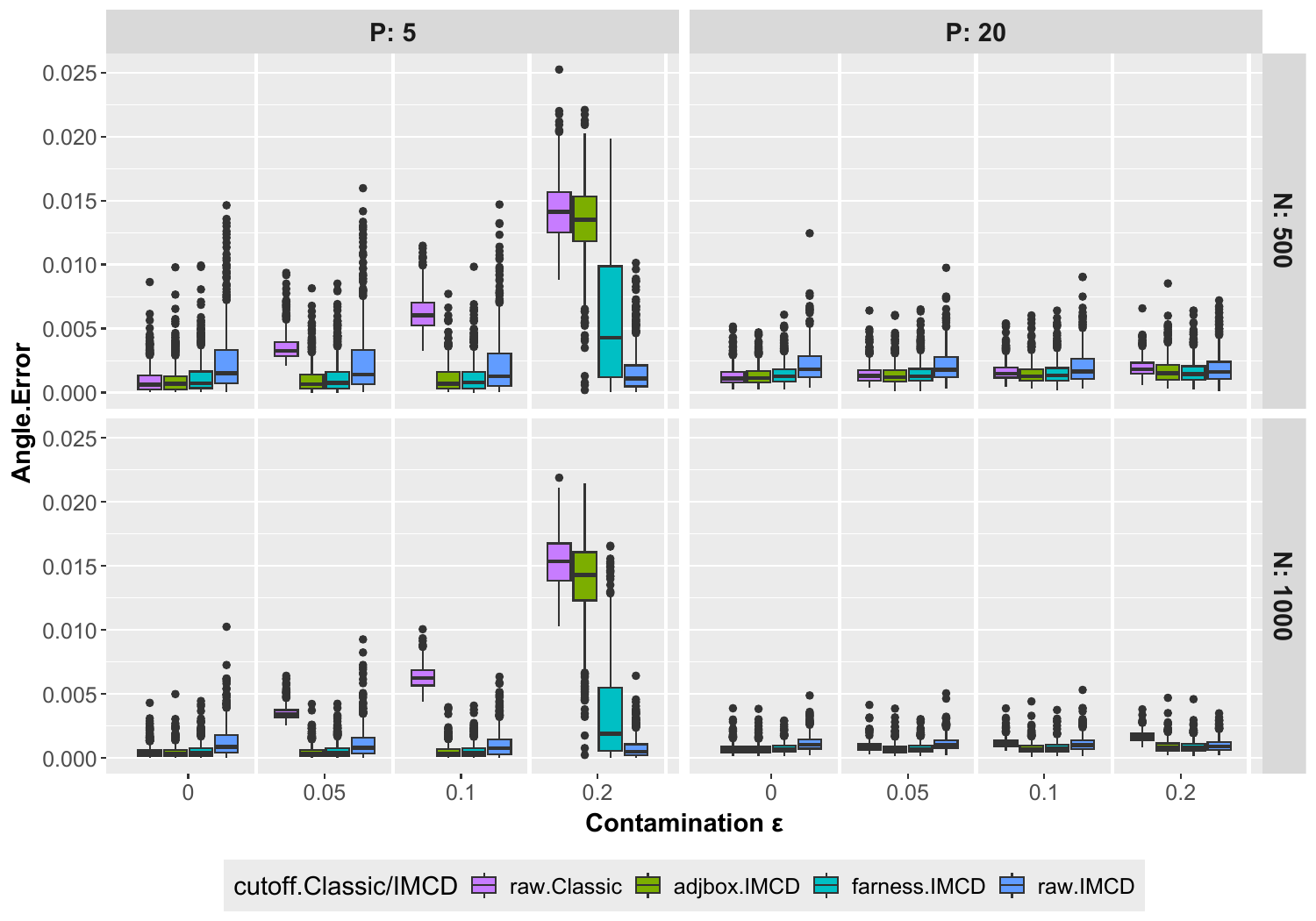}
        \caption{Scenario 5.} 
    \end{subfigure}
    \begin{subfigure}[b]{0.49\textwidth}
        \centering
        \includegraphics[width=\textwidth]{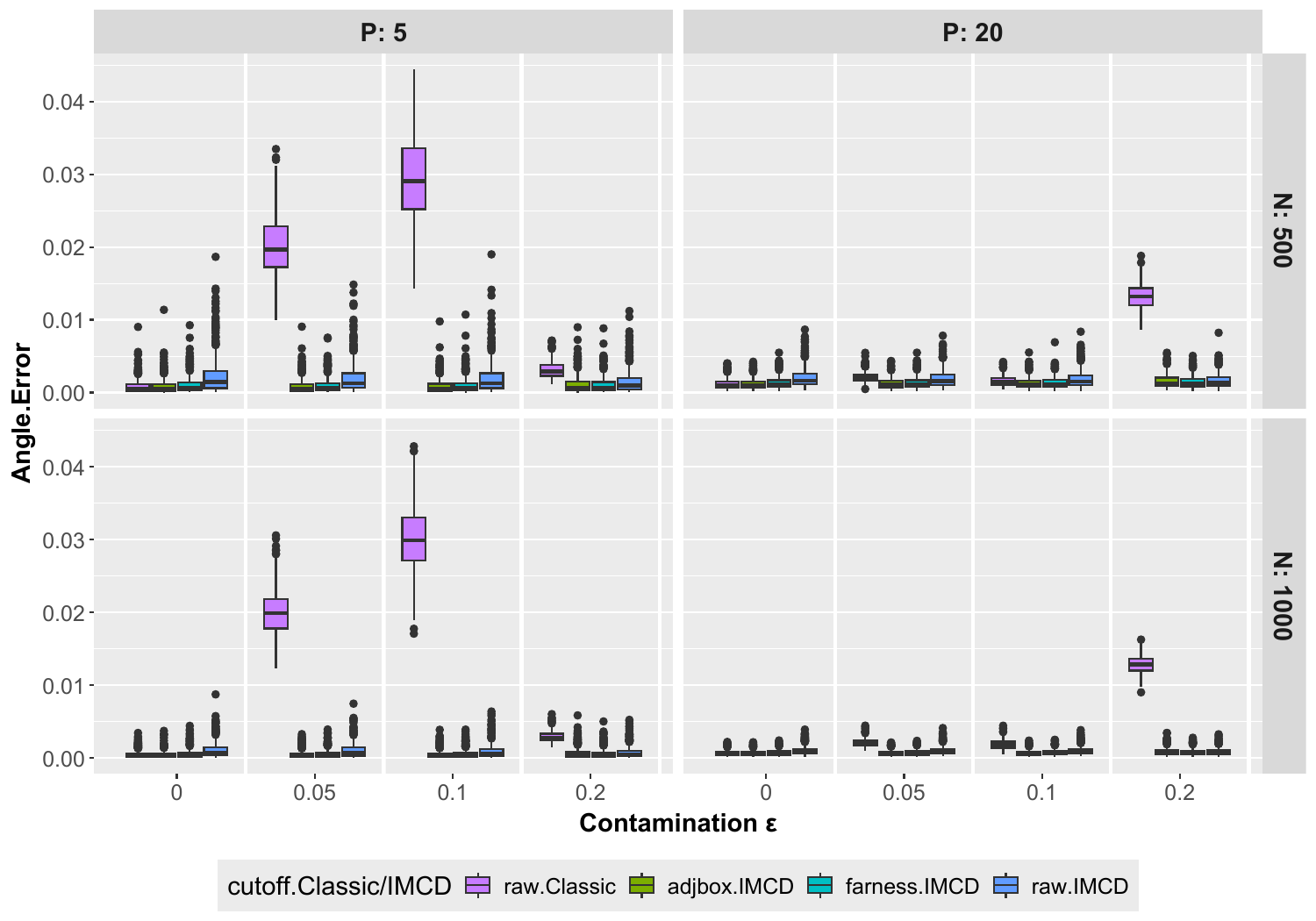}
        \caption{Scenario 3.} 
    \end{subfigure}
    \hfill
    \begin{subfigure}[b]{0.49\textwidth}
        \centering
        \includegraphics[width=\textwidth]{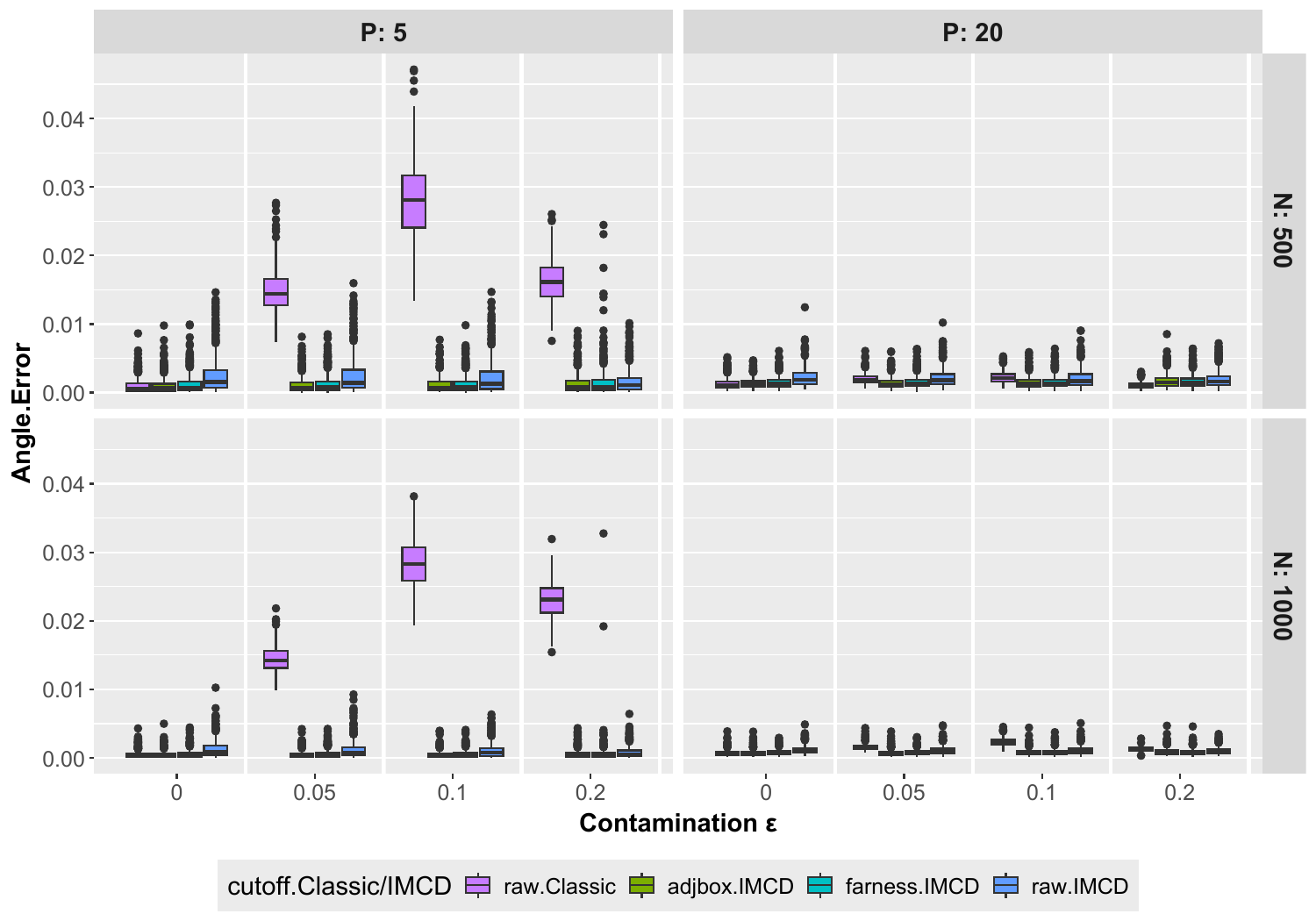}
        \caption{Scenario 6.} 
    \end{subfigure}
    \caption{Boxplots of the angle error \eqref{eq:angle_error} obtained for the six different scenarios, levels of contamination ($\epsilon$), number of variables ($P$), and sample size ($N$). For each case, we have four covariance matrix estimators: classic without reweighting (raw.Classic), IMCD with adjusted boxplot (adjbox.IMCD) and farness (farness.IMCD) reweighting, and IMCD without reweighting (raw.IMCD).}
    \label{fig:angle.error}
\end{figure}

\begin{figure}[ht]
    \centering
    \begin{subfigure}[b]{0.49\textwidth}
        \centering
        \includegraphics[width=\textwidth]{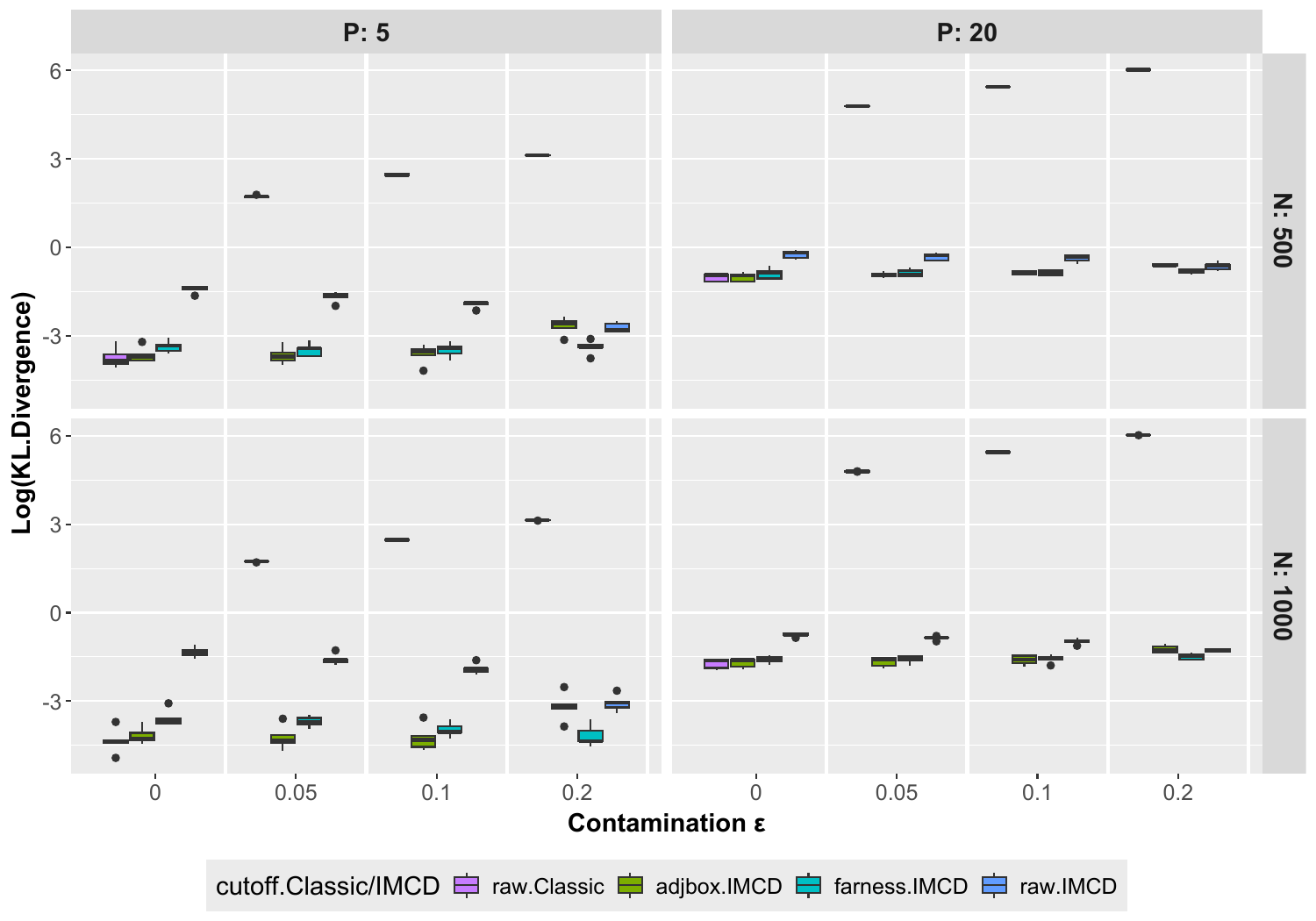}
        \caption{Scenario 1.} 
    \end{subfigure}
    \hfill
    \begin{subfigure}[b]{0.49\textwidth}
        \centering
        \includegraphics[width=\textwidth]{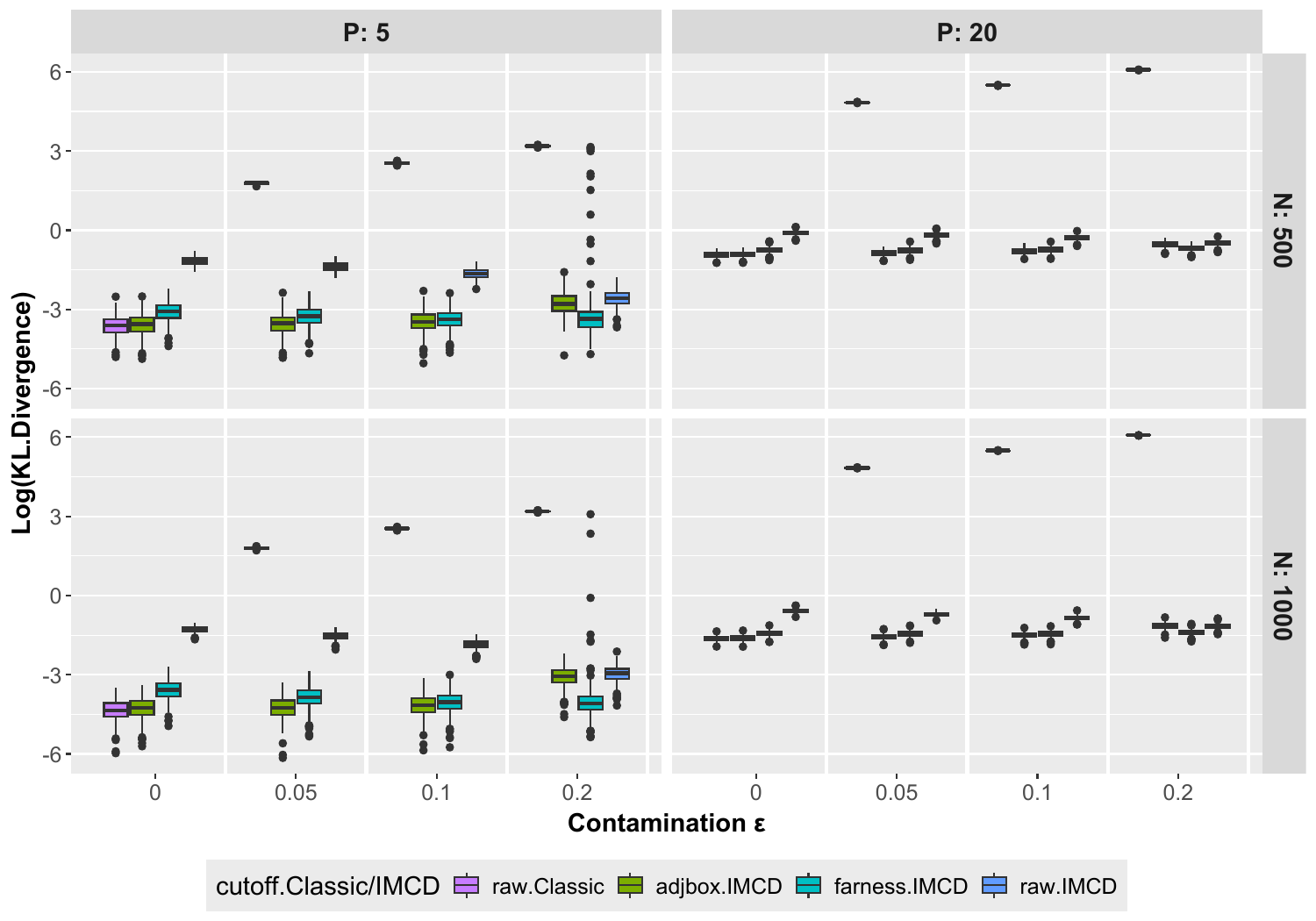}
        \caption{Scenario 4.} 
    \end{subfigure}
    \begin{subfigure}[b]{0.49\textwidth}
        \centering
        \includegraphics[width=\textwidth]{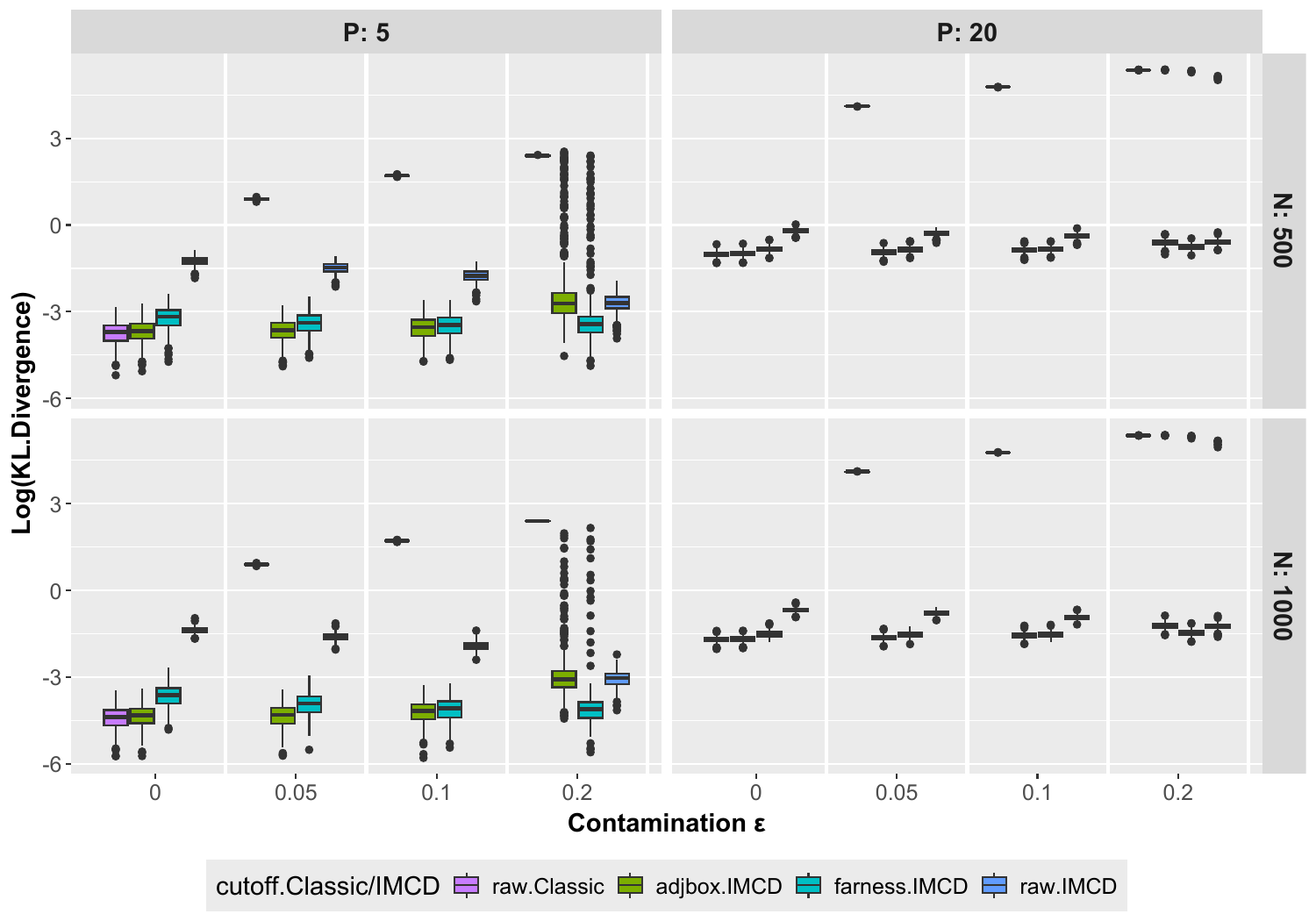}
        \caption{Scenario 2.} 
    \end{subfigure}
    \hfill
    \begin{subfigure}[b]{0.49\textwidth}
        \centering
        \includegraphics[width=\textwidth]{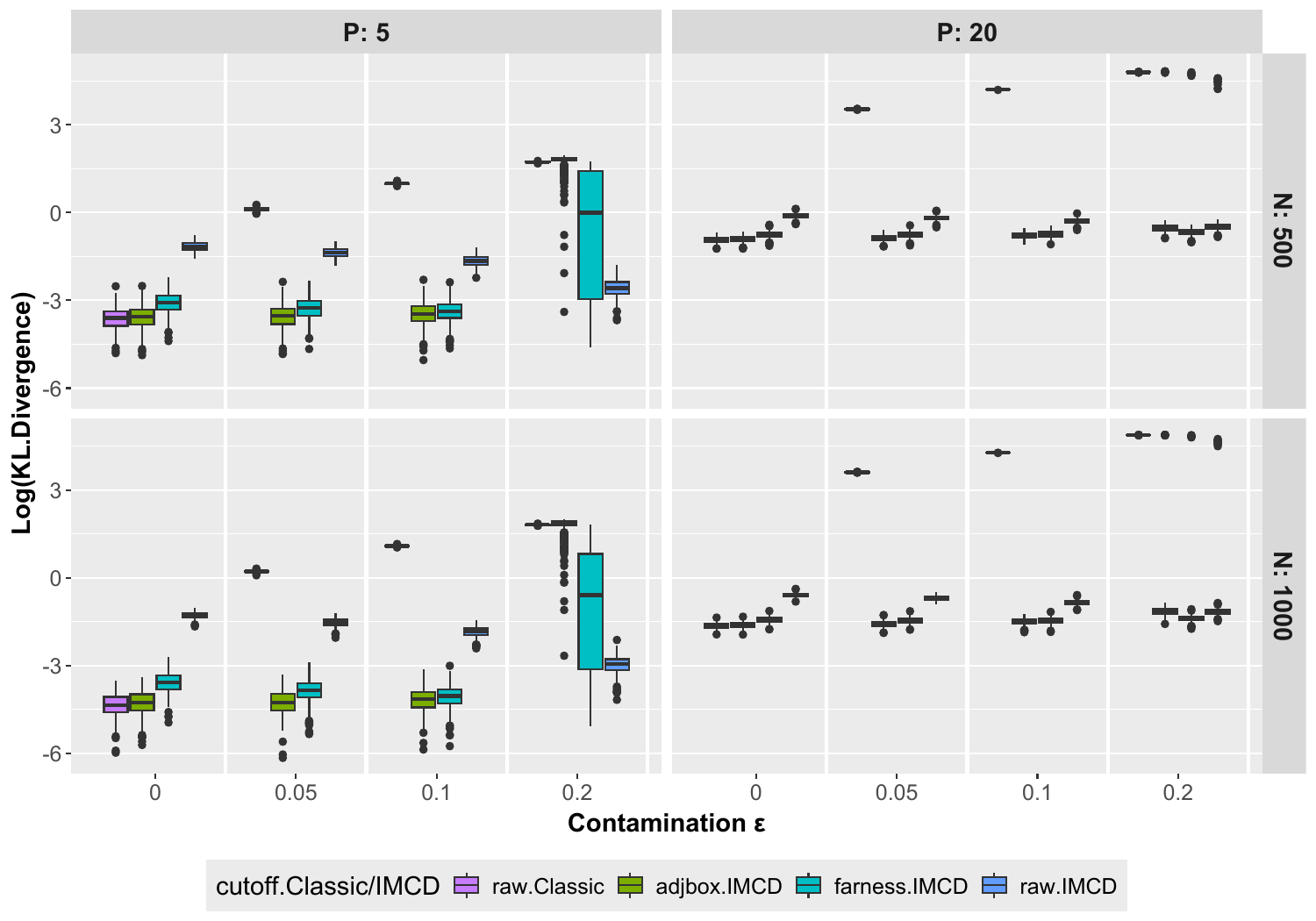}
        \caption{Scenario 5.} 
    \end{subfigure}
    \begin{subfigure}[b]{0.49\textwidth}
        \centering
        \includegraphics[width=\textwidth]{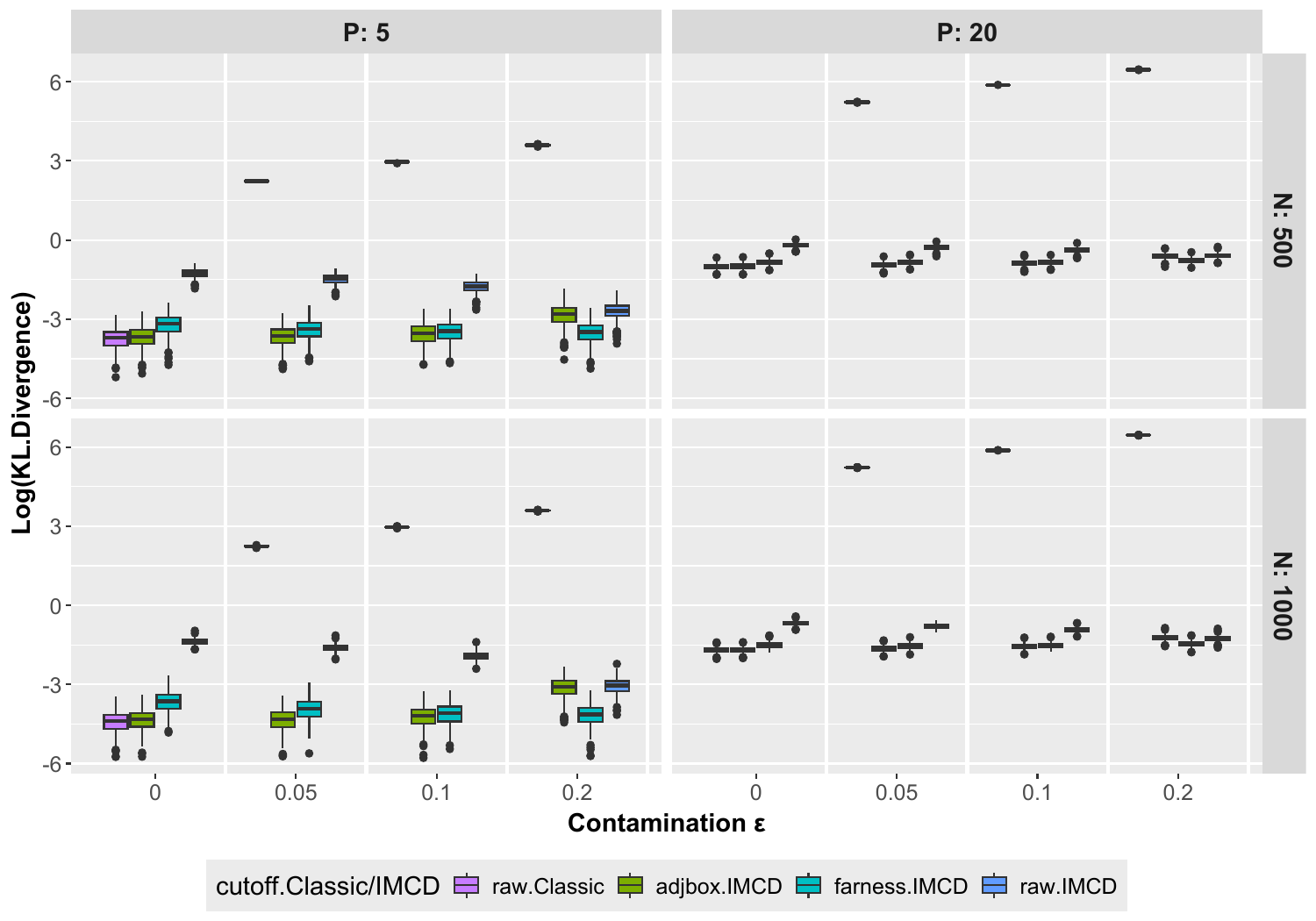}
        \caption{Scenario 3.} 
    \end{subfigure}
    \hfill
    \begin{subfigure}[b]{0.49\textwidth}
        \centering
        \includegraphics[width=\textwidth]{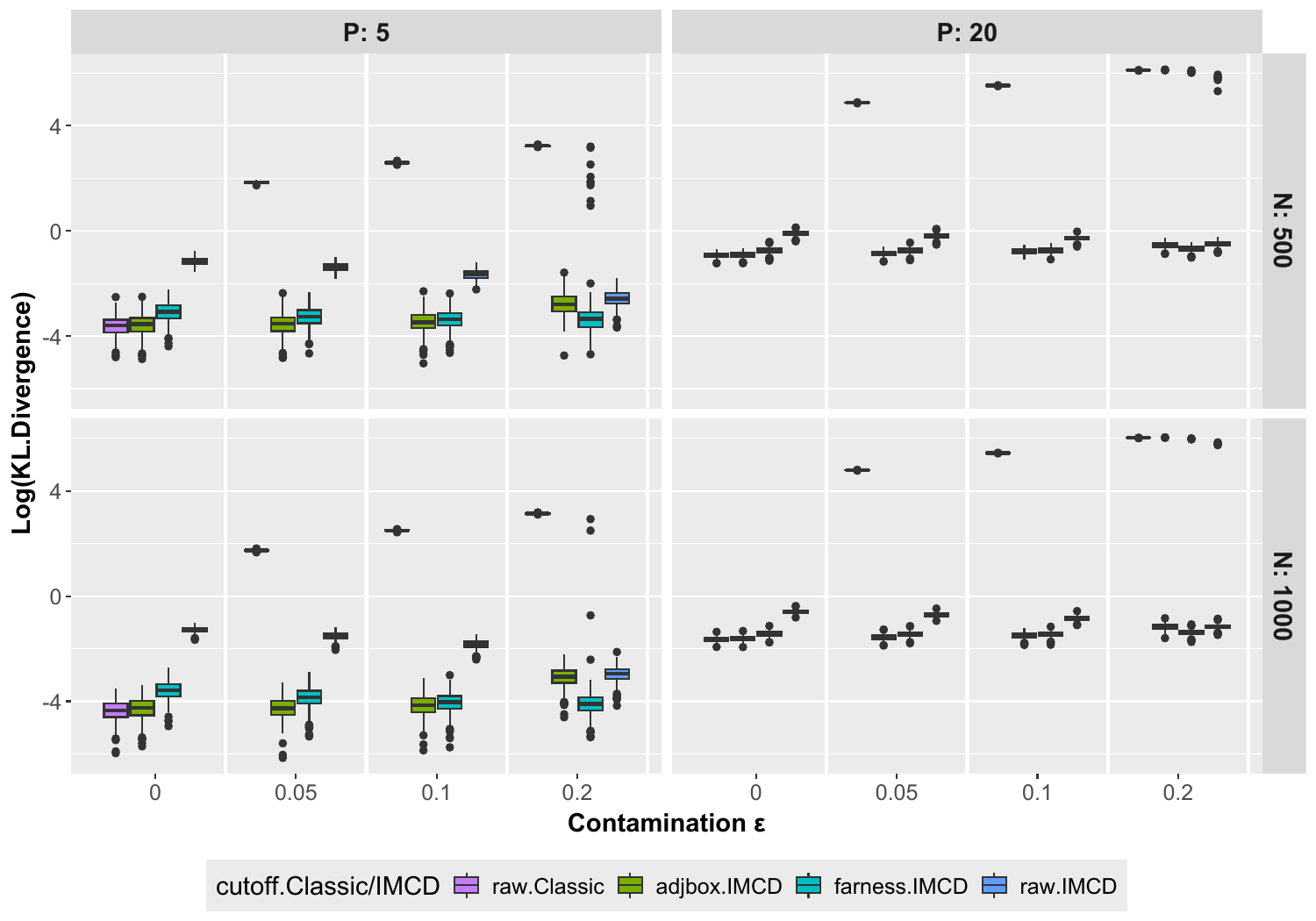}
        \caption{Scenario 6.} 
    \end{subfigure}
    \caption{Boxplots of the logarithm of the KL divergence \eqref{eq:kl_divergence} obtained for the six different scenarios, levels of contamination ($\epsilon$), number of variables ($P$), and sample size ($N$). For each case, we have four covariance matrix estimators: classic without reweighting (raw.Classic), IMCD with adjusted boxplot (adjbox.IMCD) and farness (farness.IMCD) reweighting, and IMCD without reweighting (raw.IMCD).}
    \label{fig:kl.divergence}
\end{figure}

\begin{figure}[ht]
    \centering
    \begin{subfigure}[b]{0.49\textwidth}
        \centering
        \includegraphics[width=\textwidth]{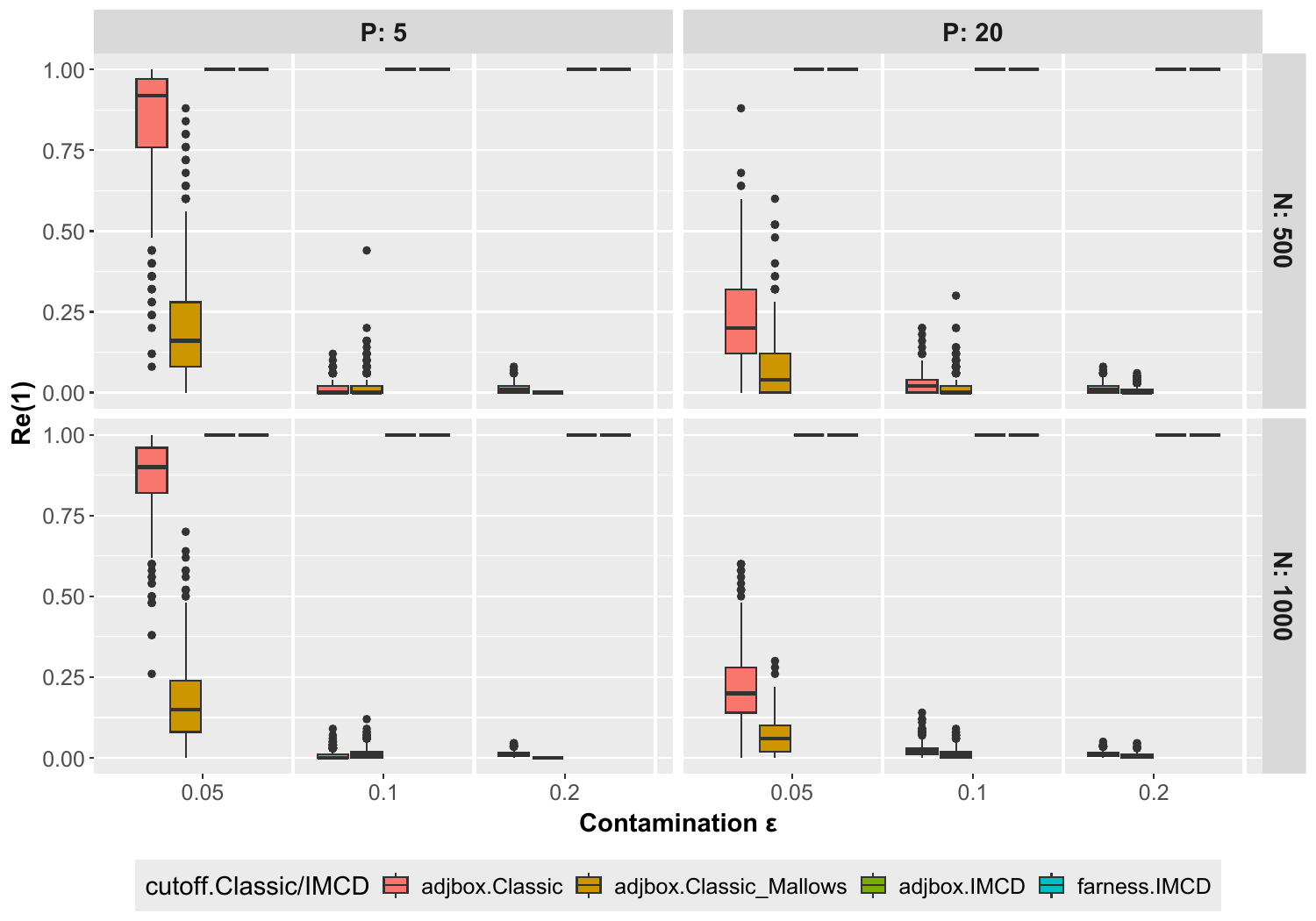}
        \caption{Scenario 3.} 
    \end{subfigure}
    \hfill
    \begin{subfigure}[b]{0.49\textwidth}
        \centering
        \includegraphics[width=\textwidth]{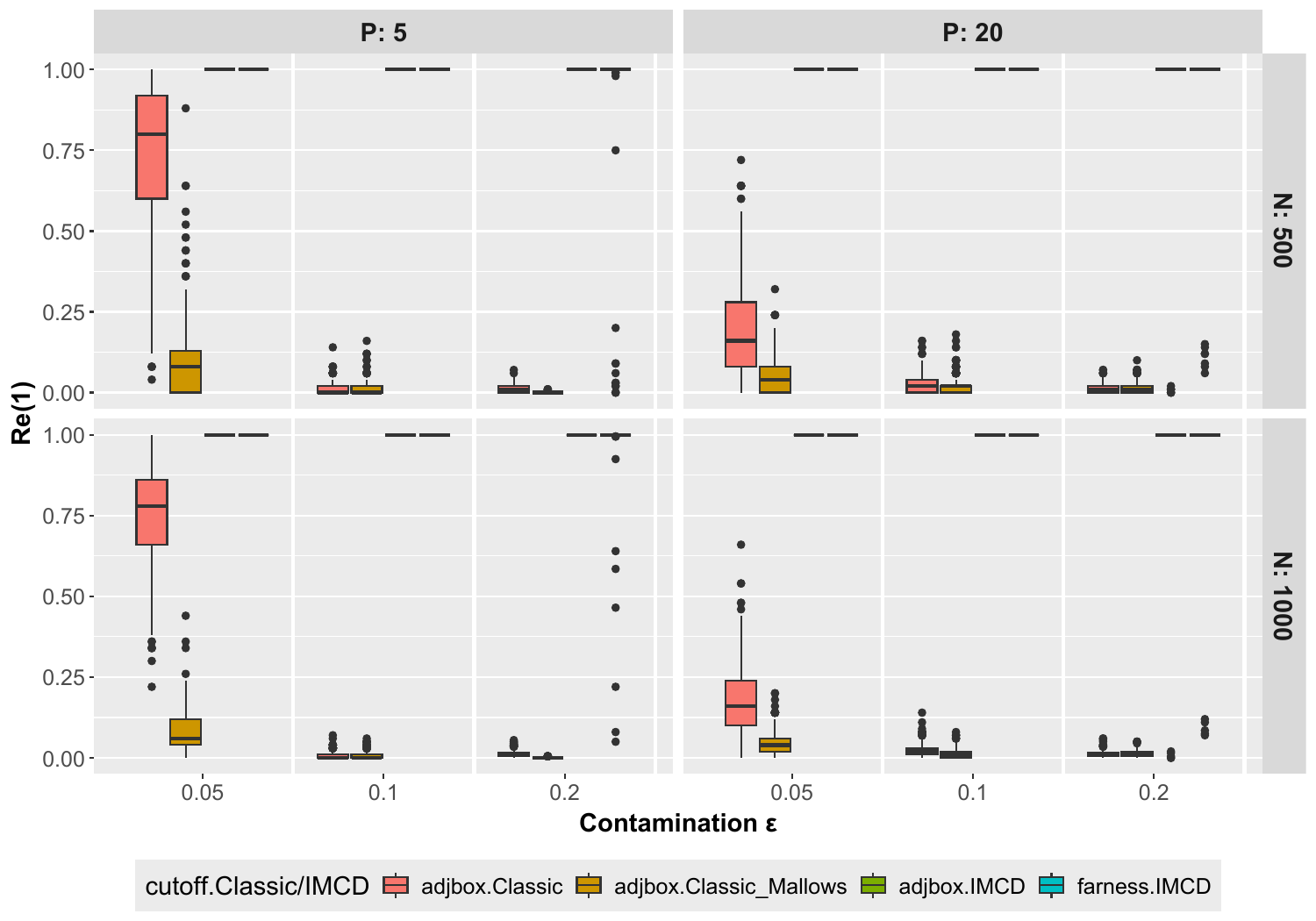}
        \caption{Scenario 6.} 
    \end{subfigure}
    \caption{Boxplots of the recall of class 1 (outliers) obtained for scenarios 3 and 6, the different levels of contamination ($\epsilon$), number of variables ($P$), and sample size ($N$). For each case, we have four outlier detection methods: classic Interval-Mahalanobis distance with adjusted boxplot cutoff (adjbox.Classic), Mallows distance with adjusted boxplot cutoff (adjbox.Classic\_Mallows), robust Interval-Mahalanobis distance with adjusted boxplot (adjbox.IMCD), and farness (farness.IMCD) reweighting/cutoff.}
    \label{fig:recall1_2}
\end{figure}

\begin{figure}[ht]
    \centering
    \begin{subfigure}[b]{0.49\textwidth}
        \centering
        \includegraphics[width=\textwidth]{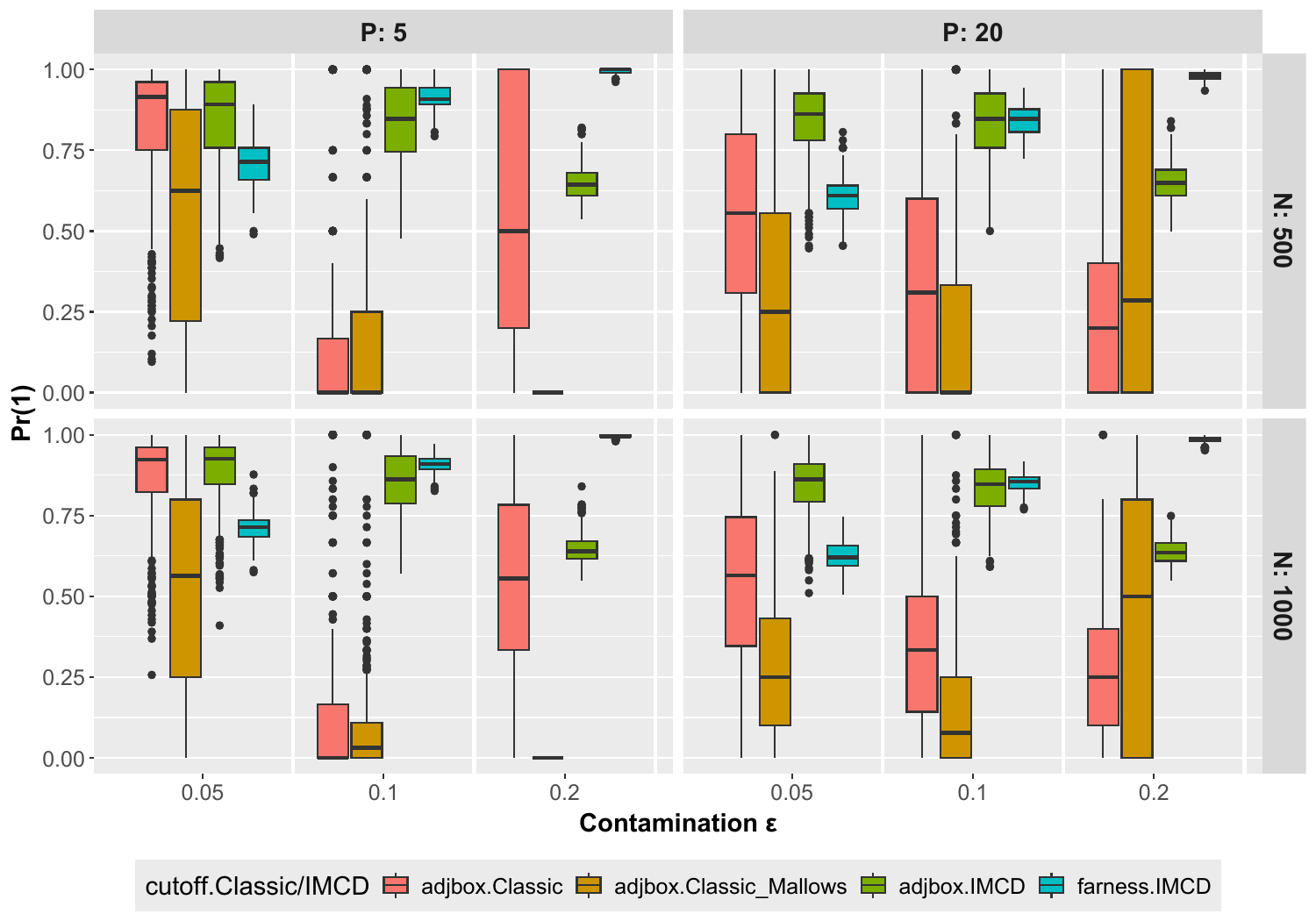}
        \caption{Scenario 3.} 
    \end{subfigure}
    \hfill
    \begin{subfigure}[b]{0.49\textwidth}
        \centering
        \includegraphics[width=\textwidth]{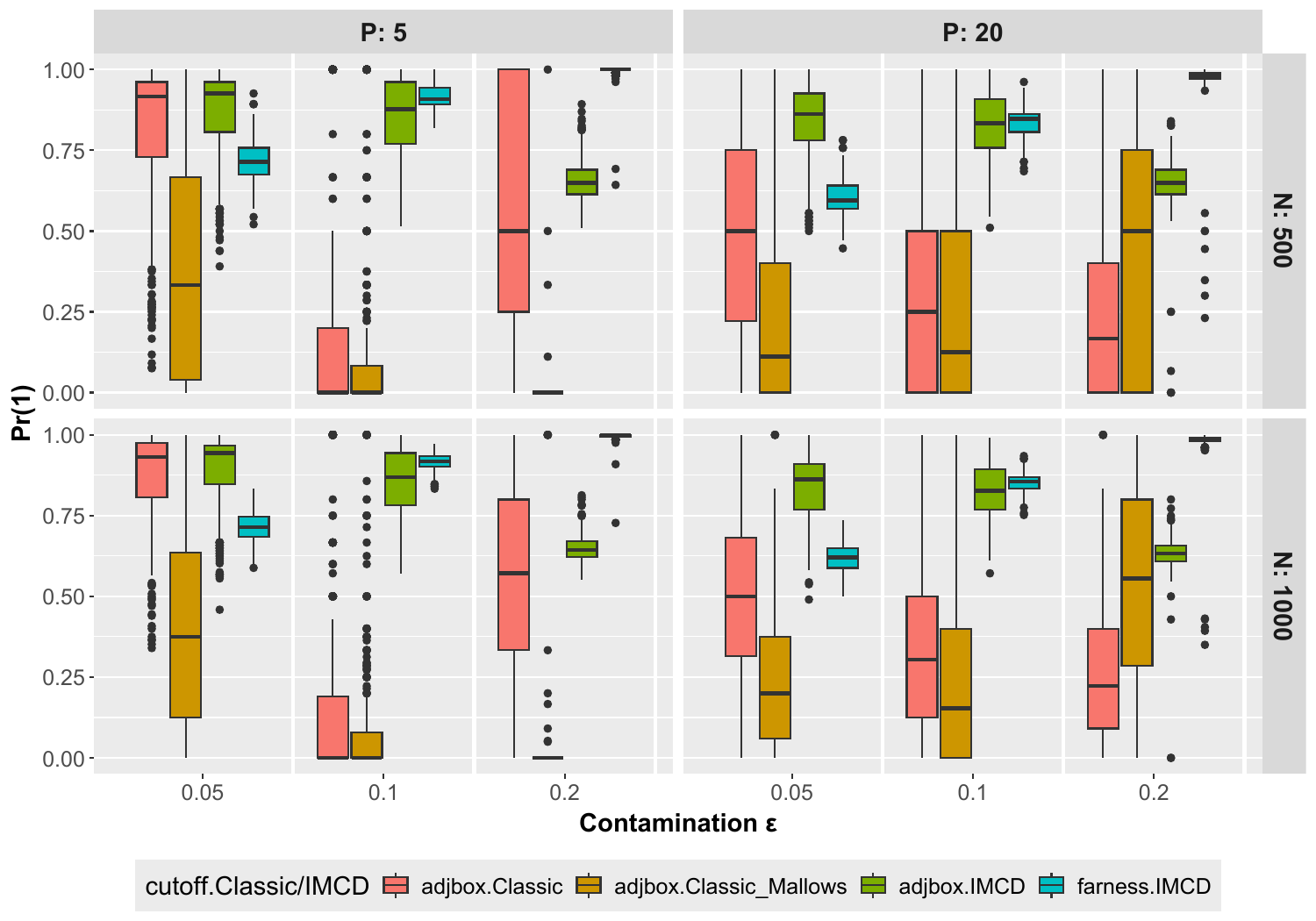}
        \caption{Scenario 6.} 
    \end{subfigure}
    \caption{Boxplots of the precision of class 1 (outliers) obtained for scenarios 3 and 6, the different levels of contamination ($\epsilon$), number of variables ($P$), and sample size ($N$). For each case, we have four outlier detection methods: classic Interval-Mahalanobis distance with adjusted boxplot cutoff (adjbox.Classic), Mallows distance with adjusted boxplot cutoff (adjbox.Classic\_Mallows), robust Interval-Mahalanobis distance with adjusted boxplot (adjbox.IMCD), and farness (farness.IMCD) reweighting/cutoff.}
    \label{fig:precision1_2}
\end{figure}

\begin{figure}[ht]
    \centering
    \begin{subfigure}[b]{0.49\textwidth}
        \centering
        \includegraphics[width=\textwidth]{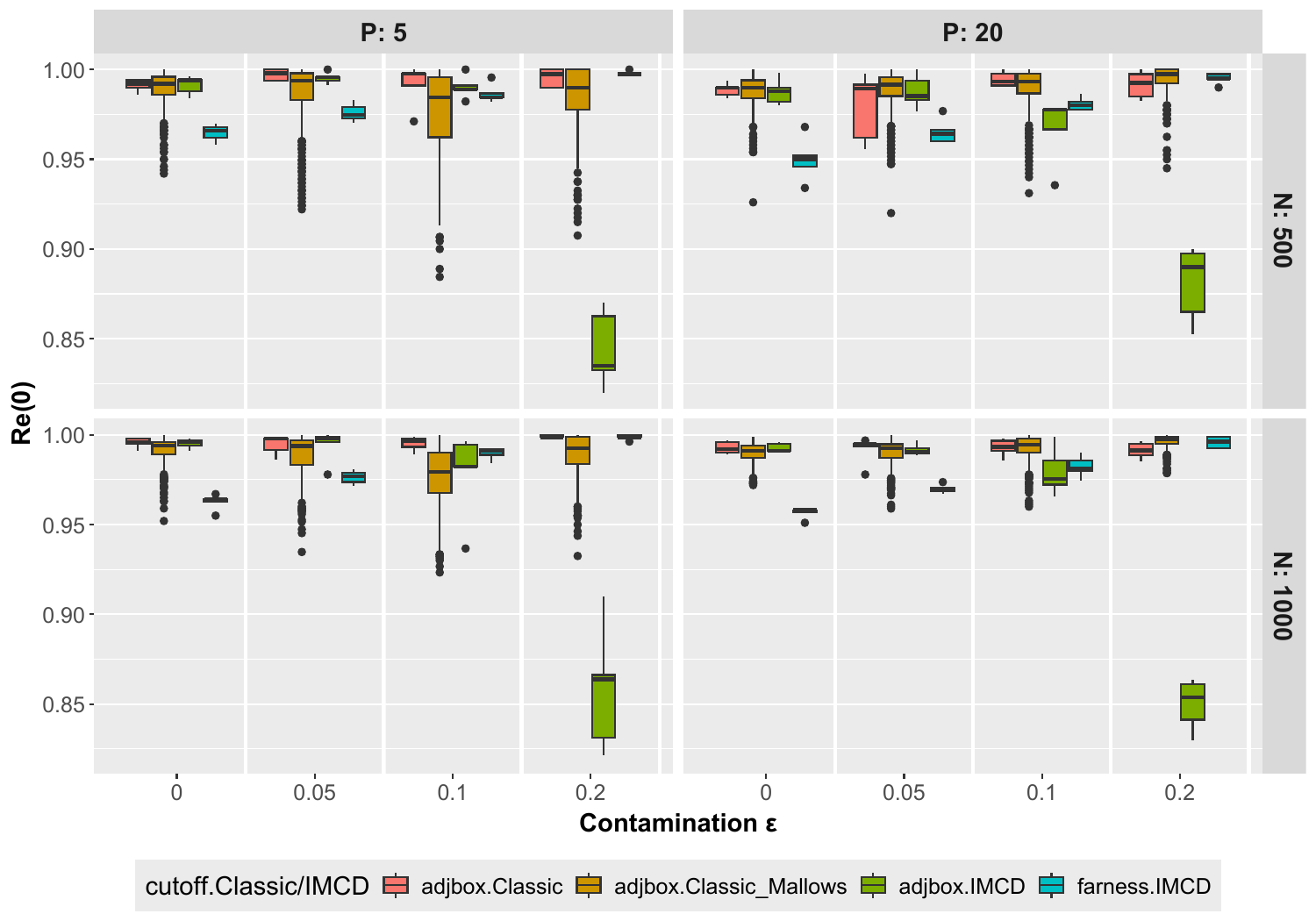}
        \caption{Scenario 1.} 
    \end{subfigure}
    \hfill
    \begin{subfigure}[b]{0.49\textwidth}
        \centering
        \includegraphics[width=\textwidth]{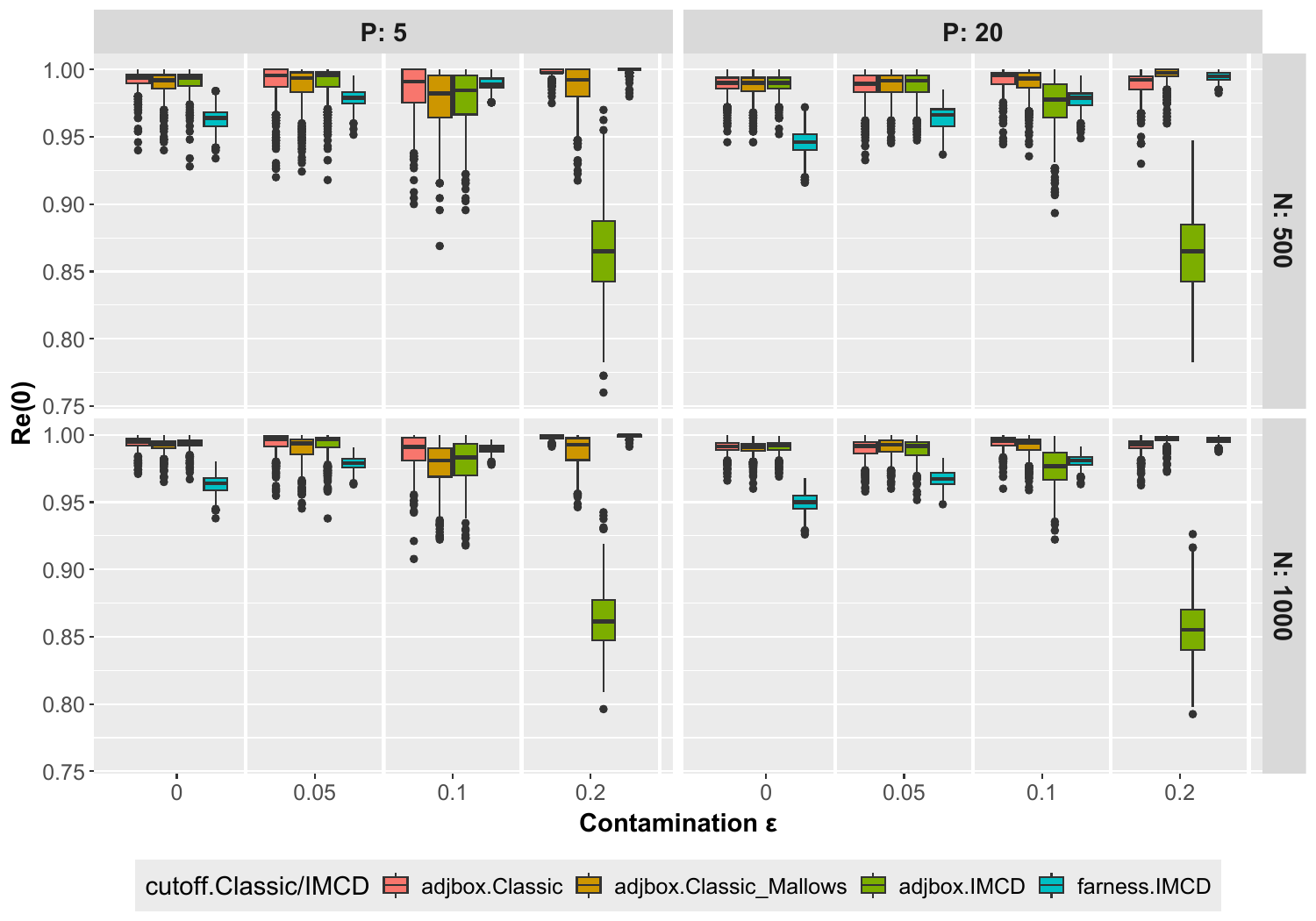}
        \caption{Scenario 4.} 
    \end{subfigure}
    \begin{subfigure}[b]{0.49\textwidth}
        \centering
        \includegraphics[width=\textwidth]{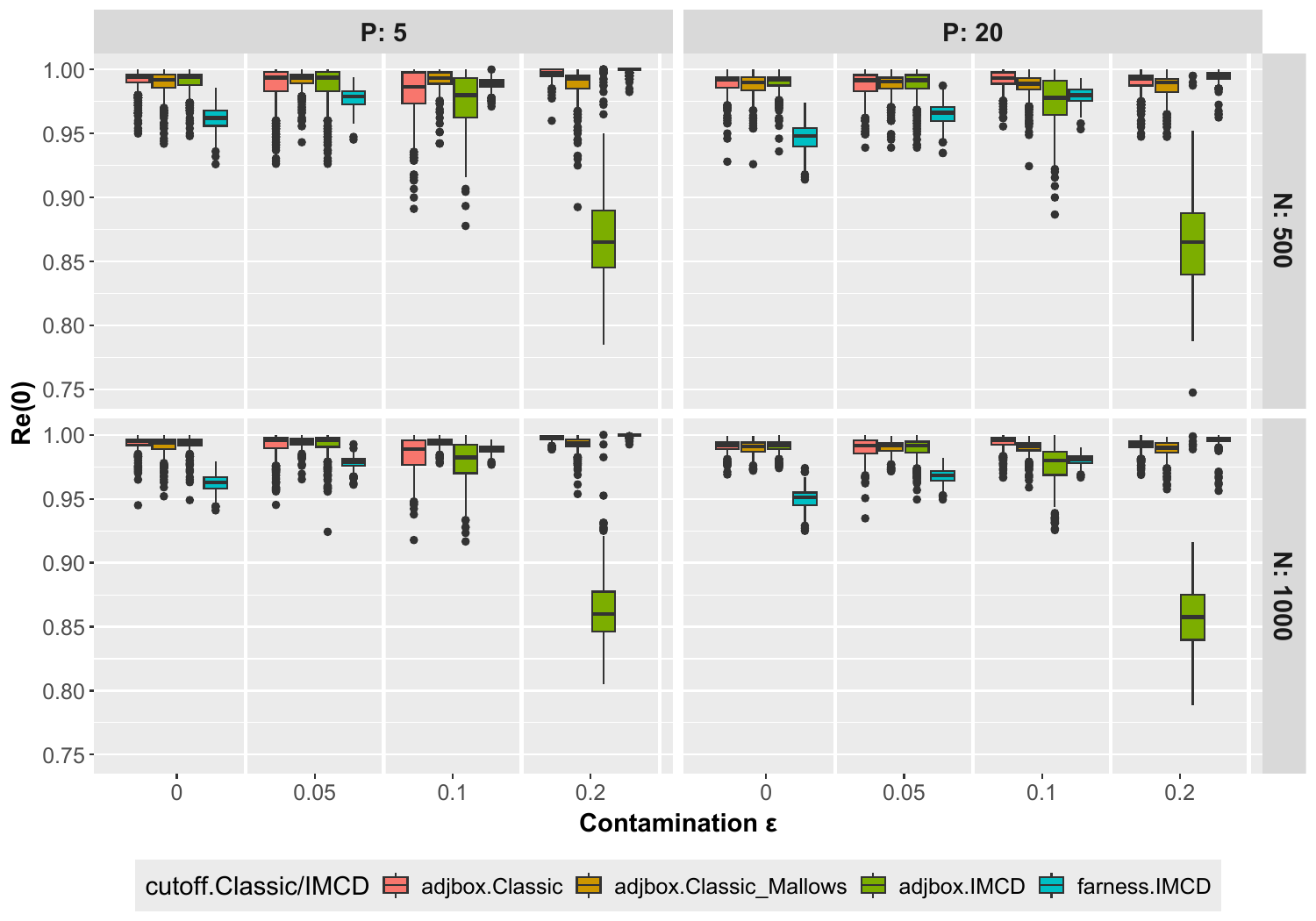}
        \caption{Scenario 2.} 
    \end{subfigure}
    \hfill
    \begin{subfigure}[b]{0.49\textwidth}
        \centering
        \includegraphics[width=\textwidth]{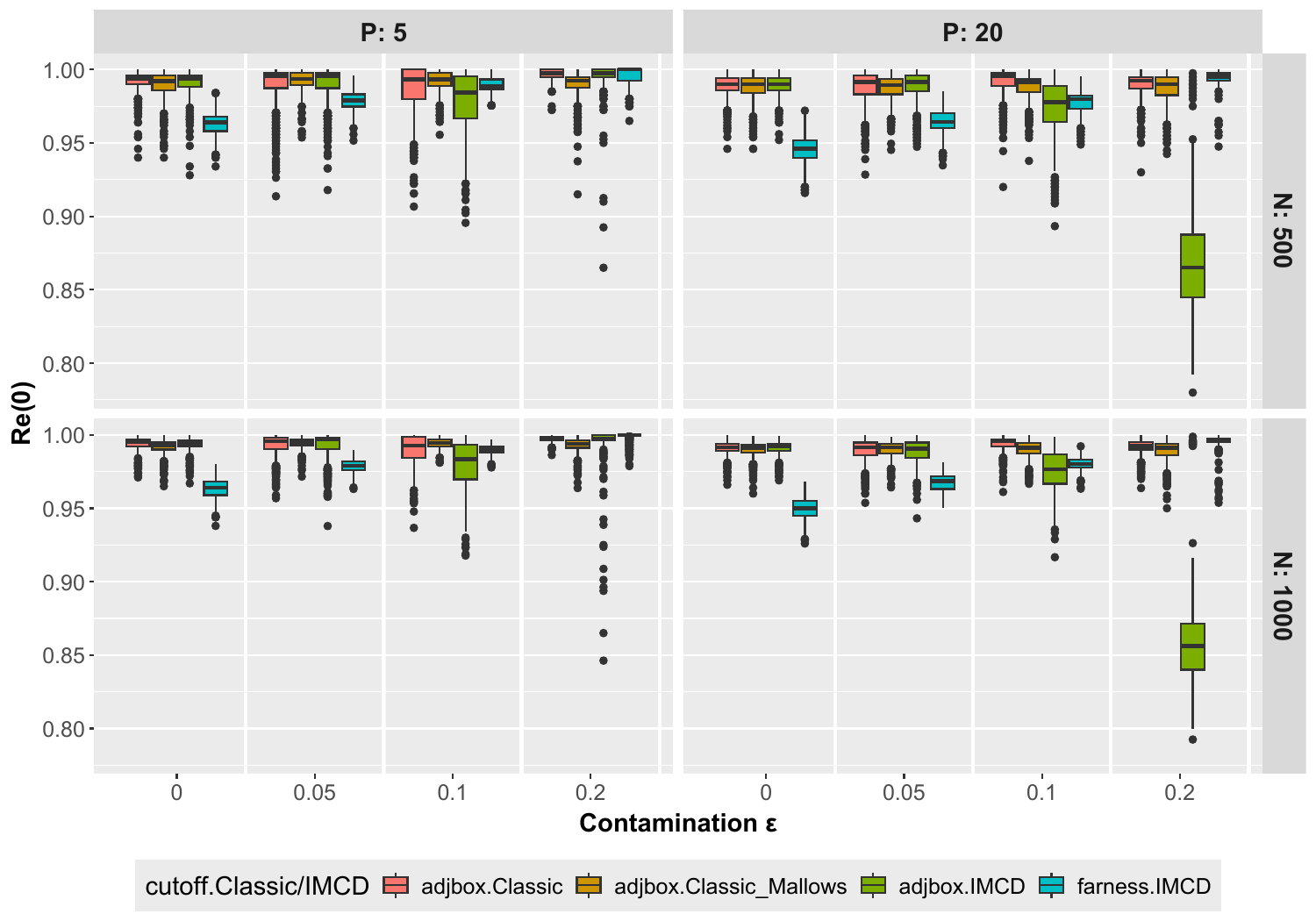}
        \caption{Scenario 5.} 
    \end{subfigure}
    \begin{subfigure}[b]{0.49\textwidth}
        \centering
        \includegraphics[width=\textwidth]{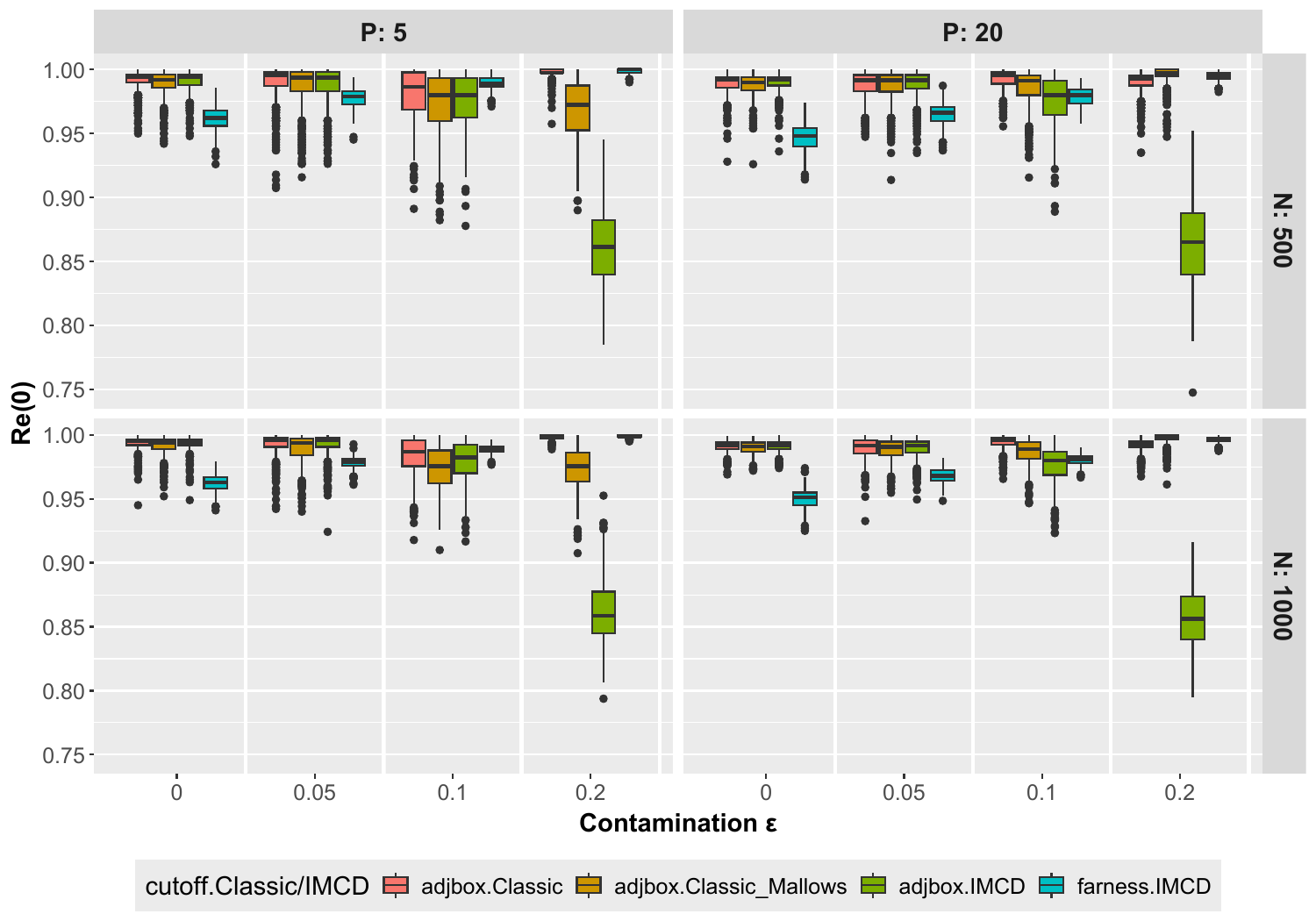}
        \caption{Scenario 3.} 
    \end{subfigure}
    \hfill
    \begin{subfigure}[b]{0.49\textwidth}
        \centering
        \includegraphics[width=\textwidth]{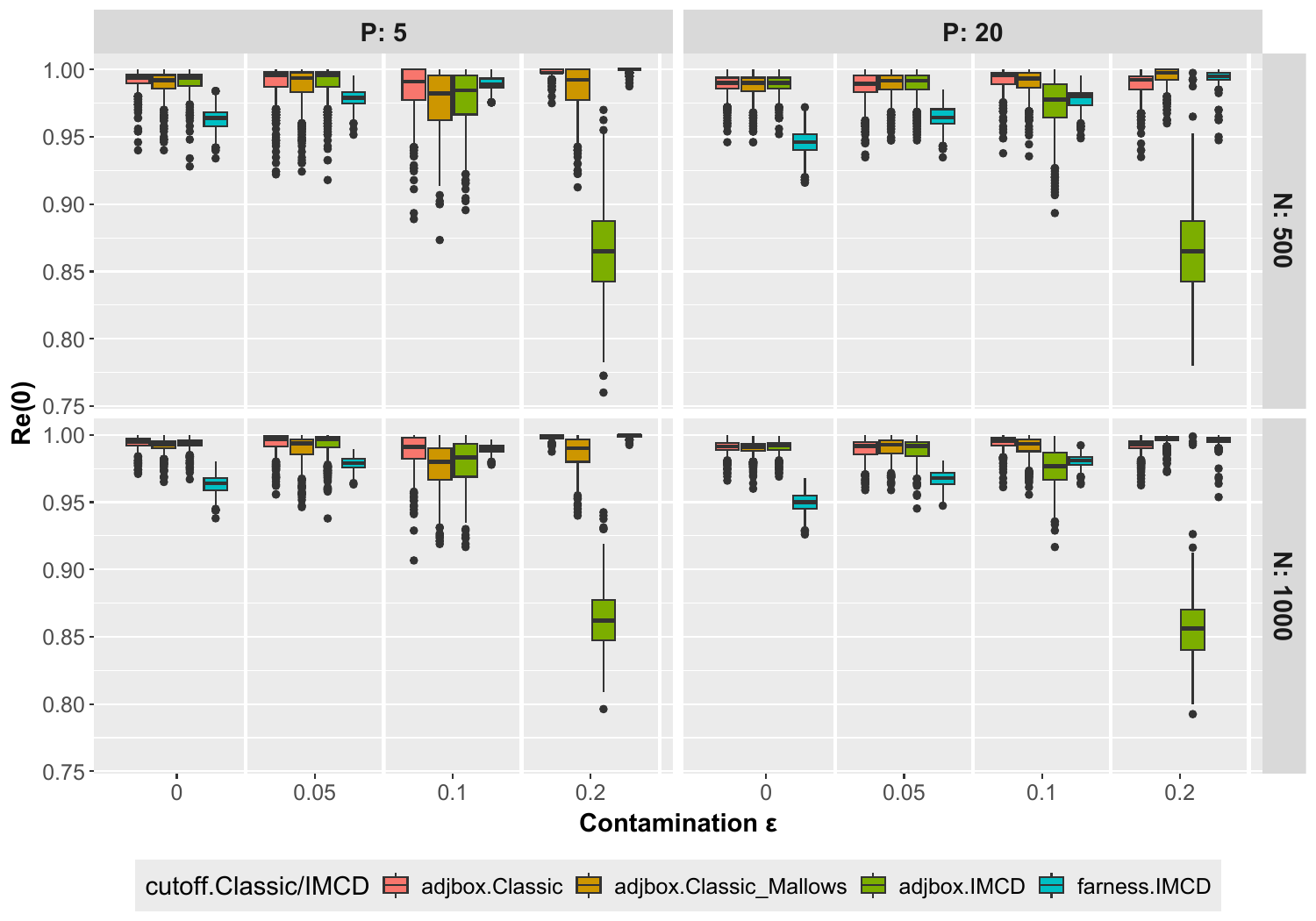}
        \caption{Scenario 6.} 
    \end{subfigure}
    \caption{Boxplots of the recall of class 0 (regular) \eqref{eq:pr_re_0} obtained for the six different scenarios, levels of contamination ($\epsilon$), number of variables ($P$), and sample size ($N$). For each case, we have four outlier detection methods: classic Interval-Mahalanobis distance with adjusted boxplot cutoff (adjbox.Classic), Mallows distance with adjusted boxplot cutoff (adjbox.Classic\_Mallows), robust Interval-Mahalanobis distance with adjusted boxplot (adjbox.IMCD), and farness (farness.IMCD) reweighting/cutoff.}
    \label{fig:recall0}
\end{figure}

\begin{figure}[ht]
    \centering
    \begin{subfigure}[b]{0.49\textwidth}
        \centering
        \includegraphics[width=\textwidth]{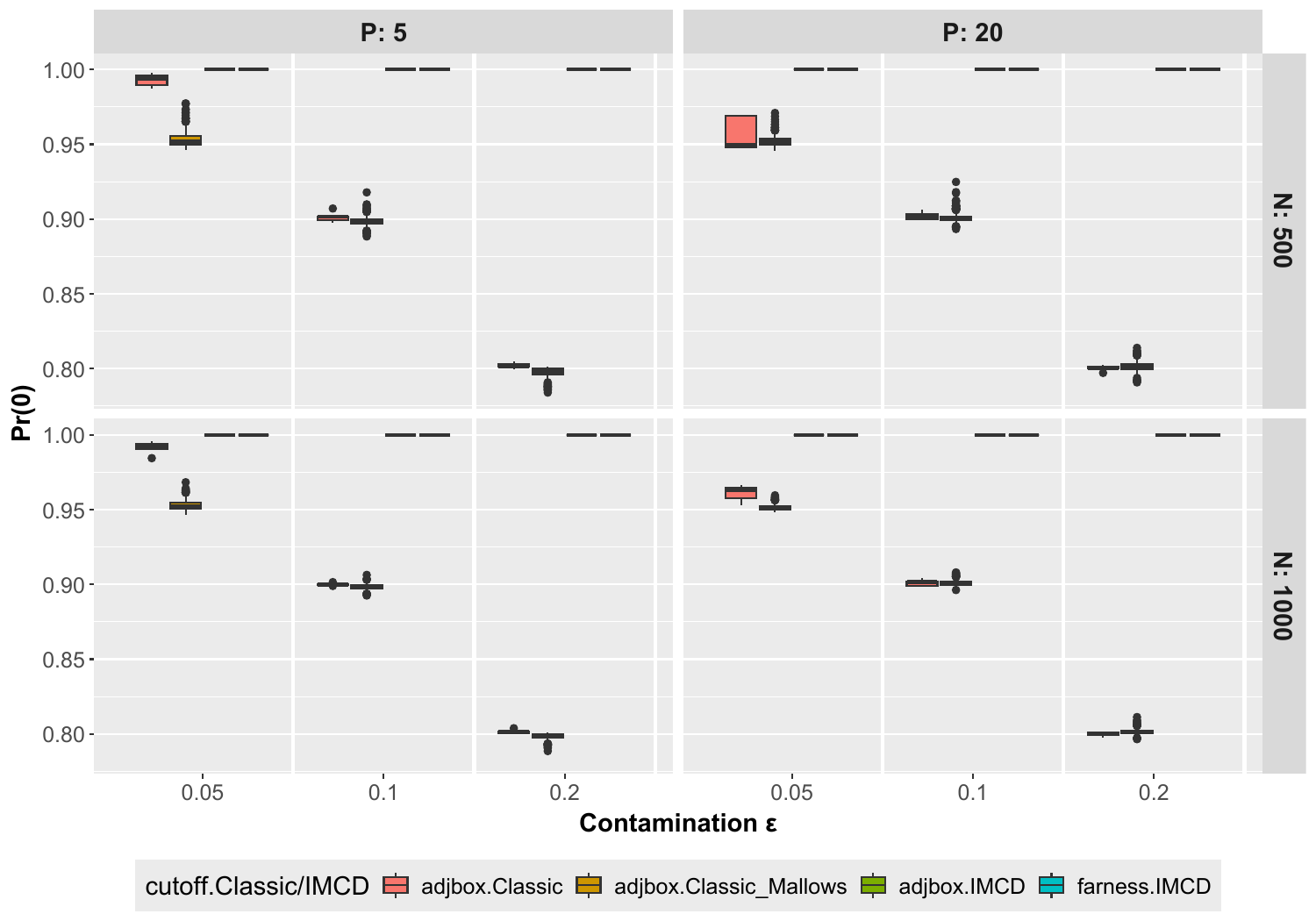}
        \caption{Scenario 1.} 
    \end{subfigure}
    \hfill
    \begin{subfigure}[b]{0.49\textwidth}
        \centering
        \includegraphics[width=\textwidth]{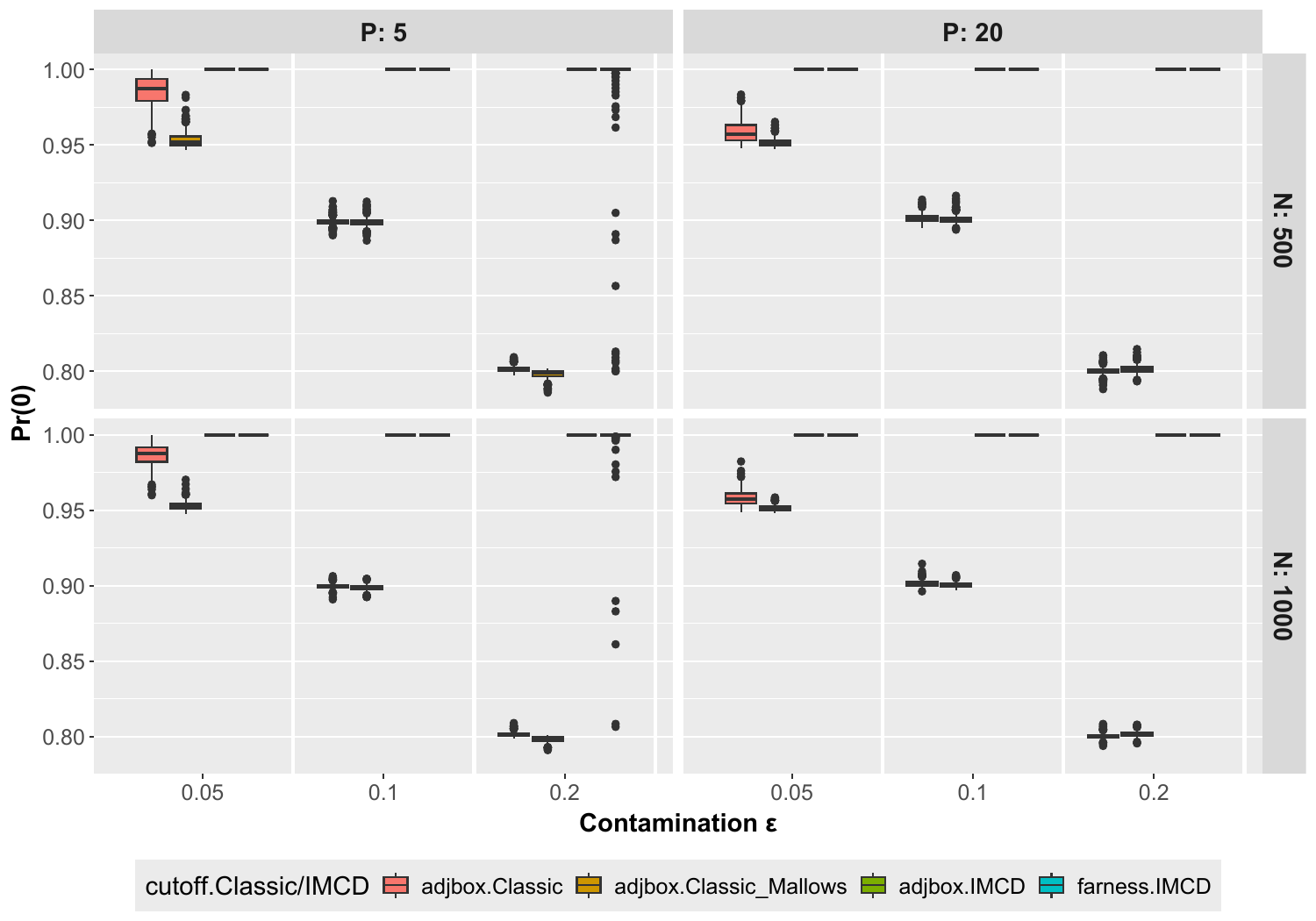}
        \caption{Scenario 4.} 
    \end{subfigure}
    \begin{subfigure}[b]{0.49\textwidth}
        \centering
        \includegraphics[width=\textwidth]{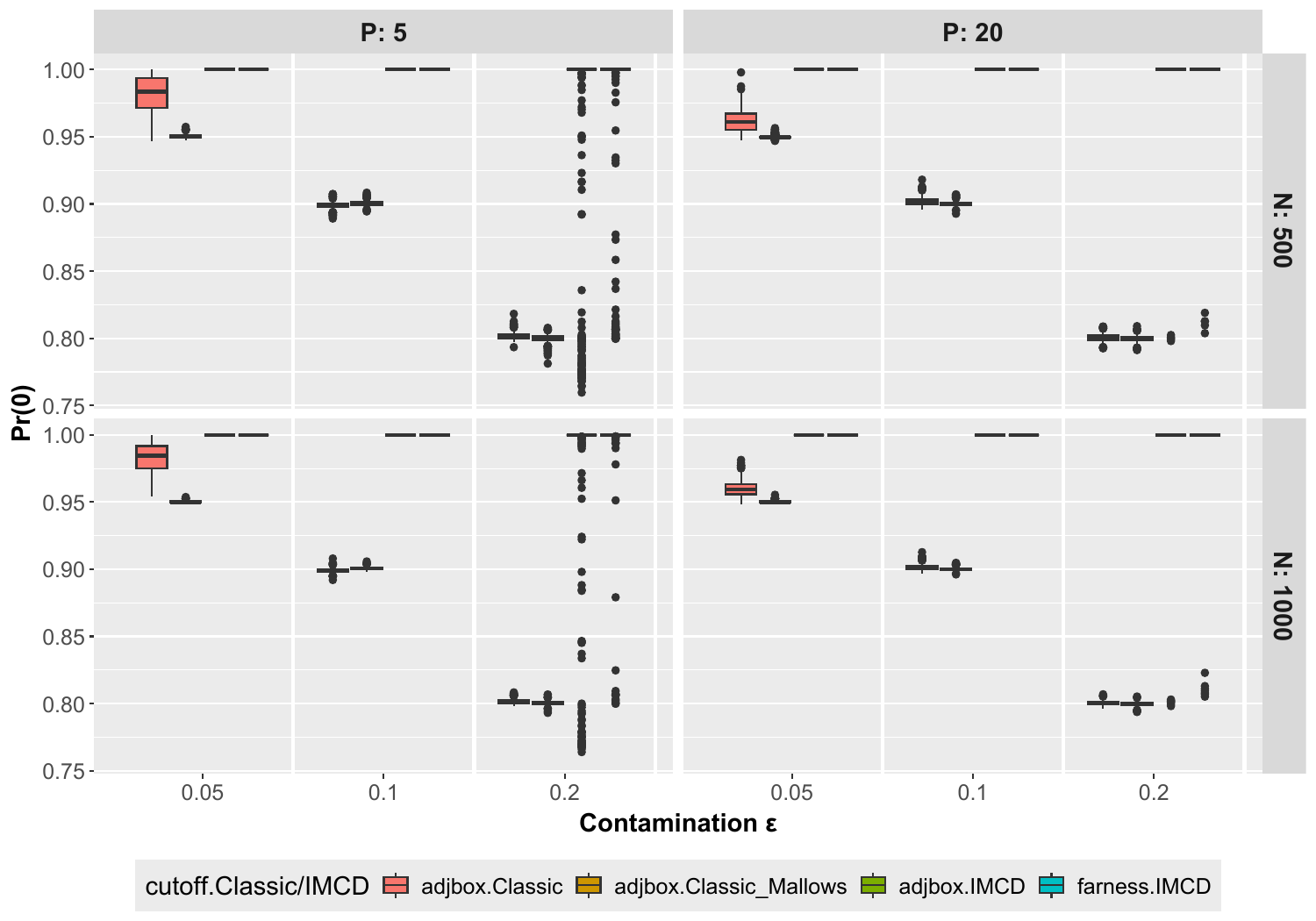}
        \caption{Scenario 2.} 
    \end{subfigure}
    \hfill
    \begin{subfigure}[b]{0.49\textwidth}
        \centering
        \includegraphics[width=\textwidth]{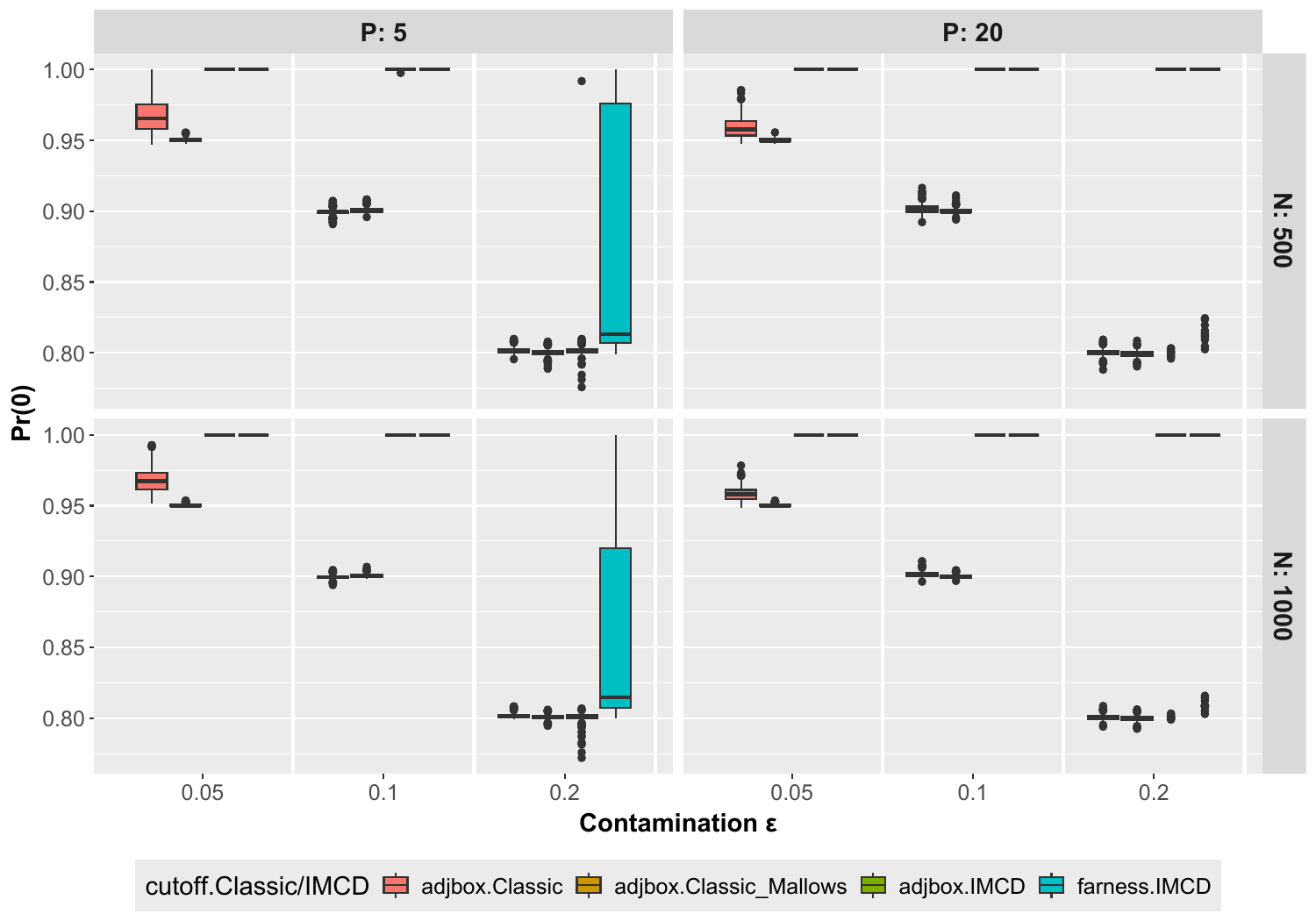}
        \caption{Scenario 5.} 
    \end{subfigure}
    \begin{subfigure}[b]{0.49\textwidth}
        \centering
        \includegraphics[width=\textwidth]{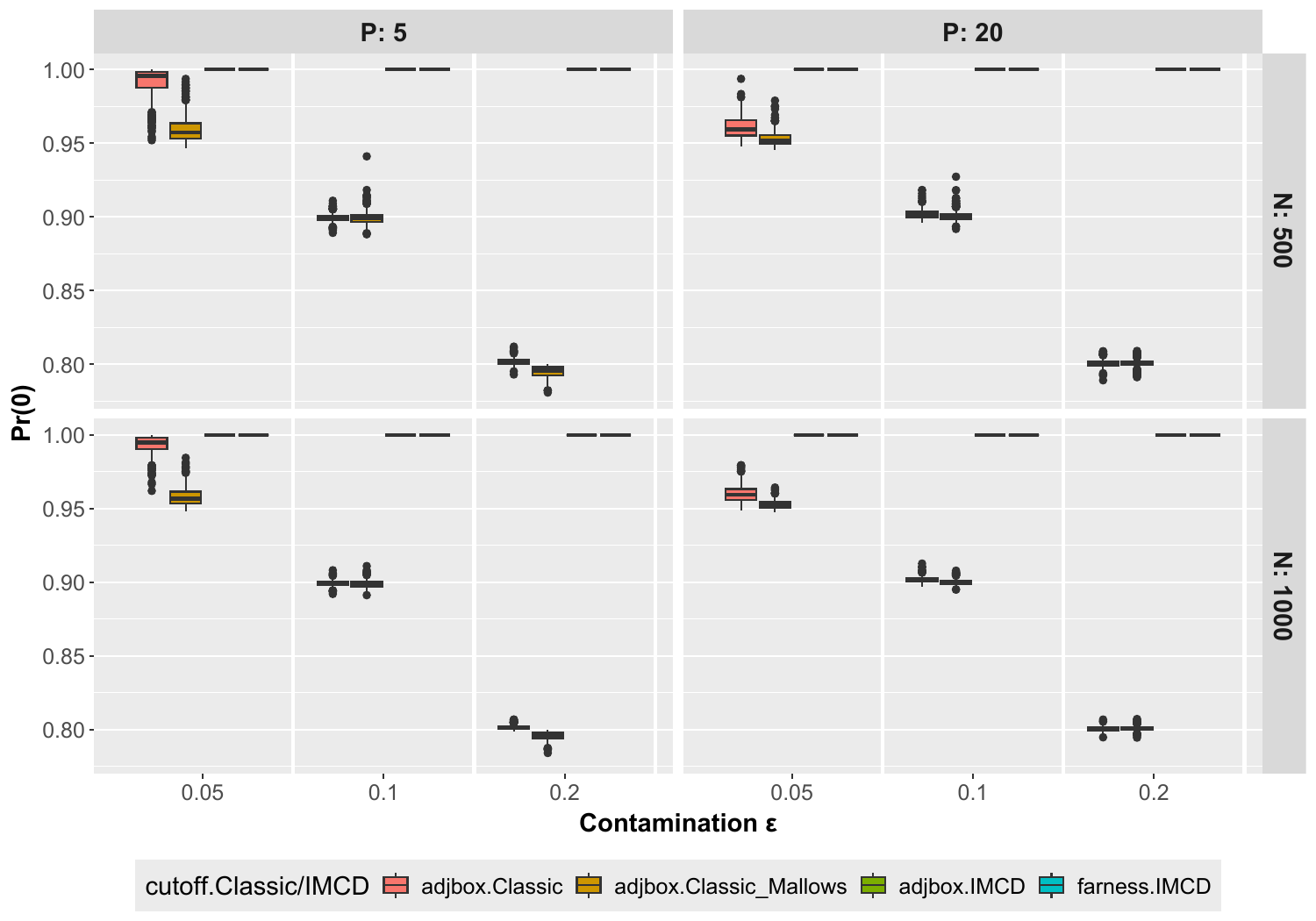}
        \caption{Scenario 3.} 
    \end{subfigure}
    \hfill
    \begin{subfigure}[b]{0.49\textwidth}
        \centering
        \includegraphics[width=\textwidth]{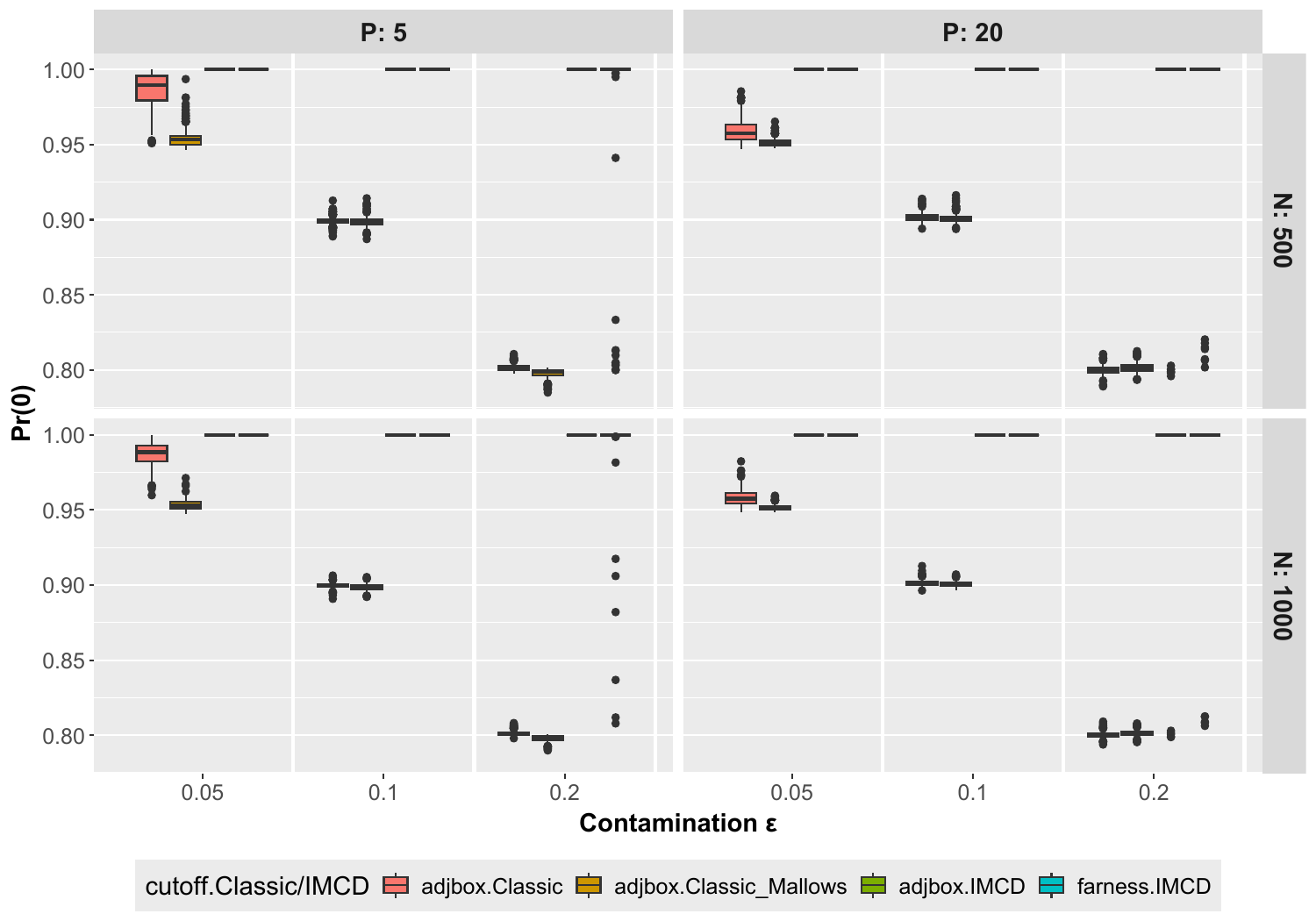}
        \caption{Scenario 6.} 
    \end{subfigure}
    \caption{Boxplots of the precision of class 0 (regular) \eqref{eq:pr_re_0} obtained for the six different scenarios, levels of contamination ($\epsilon$), number of variables ($P$), and sample size ($N$). For each case, we have four outlier detection methods: classic Interval-Mahalanobis distance with adjusted boxplot cutoff (adjbox.Classic), Mallows distance with adjusted boxplot cutoff (adjbox.Classic\_Mallows), robust Interval-Mahalanobis distance with adjusted boxplot (adjbox.IMCD), and farness (farness.IMCD) reweighting/cutoff.}
    \label{fig:precision0}
\end{figure}

\begin{figure}[ht]
    \centering
    \begin{subfigure}[b]{0.49\textwidth}
        \centering
        \includegraphics[width=\textwidth]{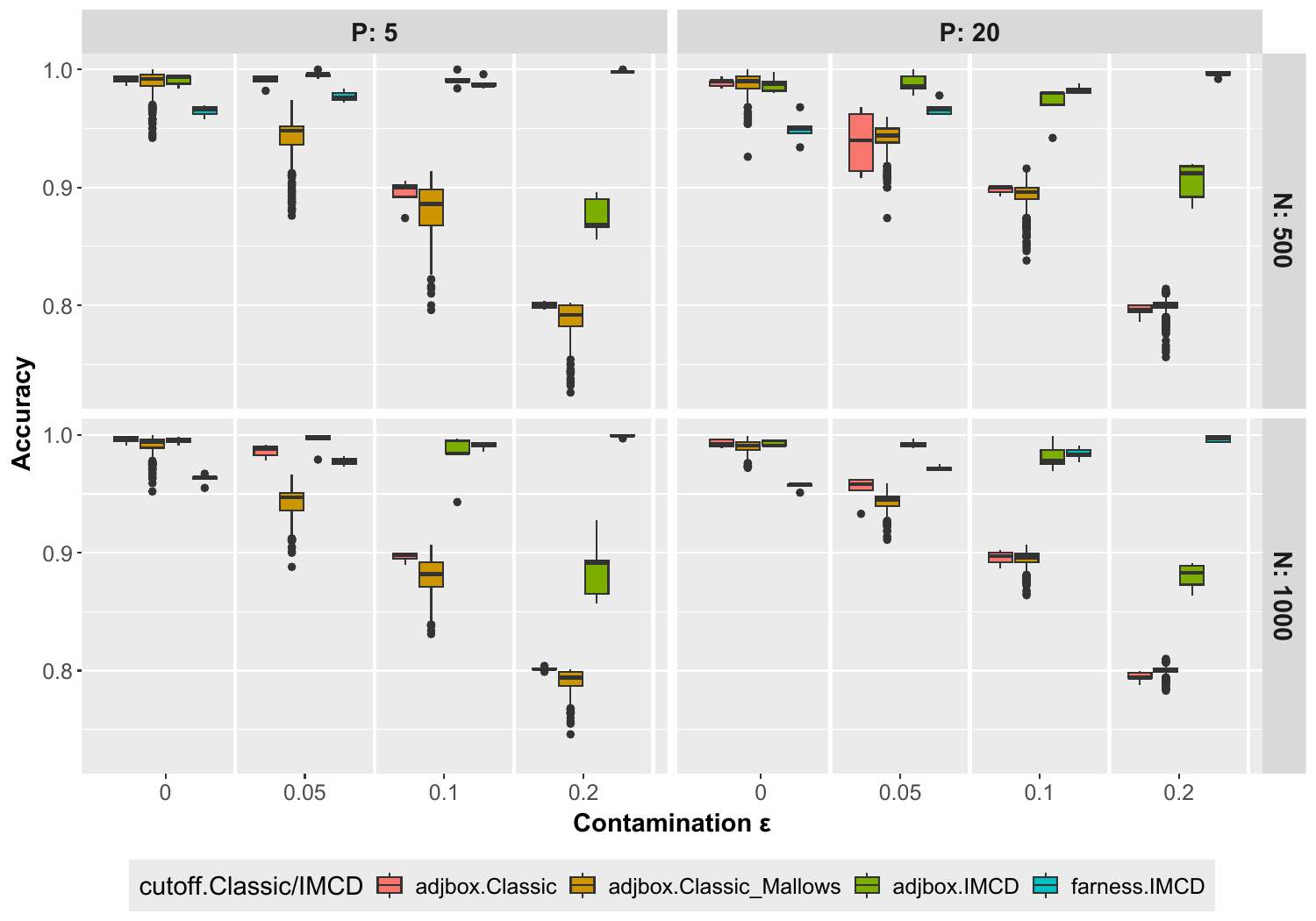}
        \caption{Scenario 1.} 
    \end{subfigure}
    \hfill
    \begin{subfigure}[b]{0.49\textwidth}
        \centering
        \includegraphics[width=\textwidth]{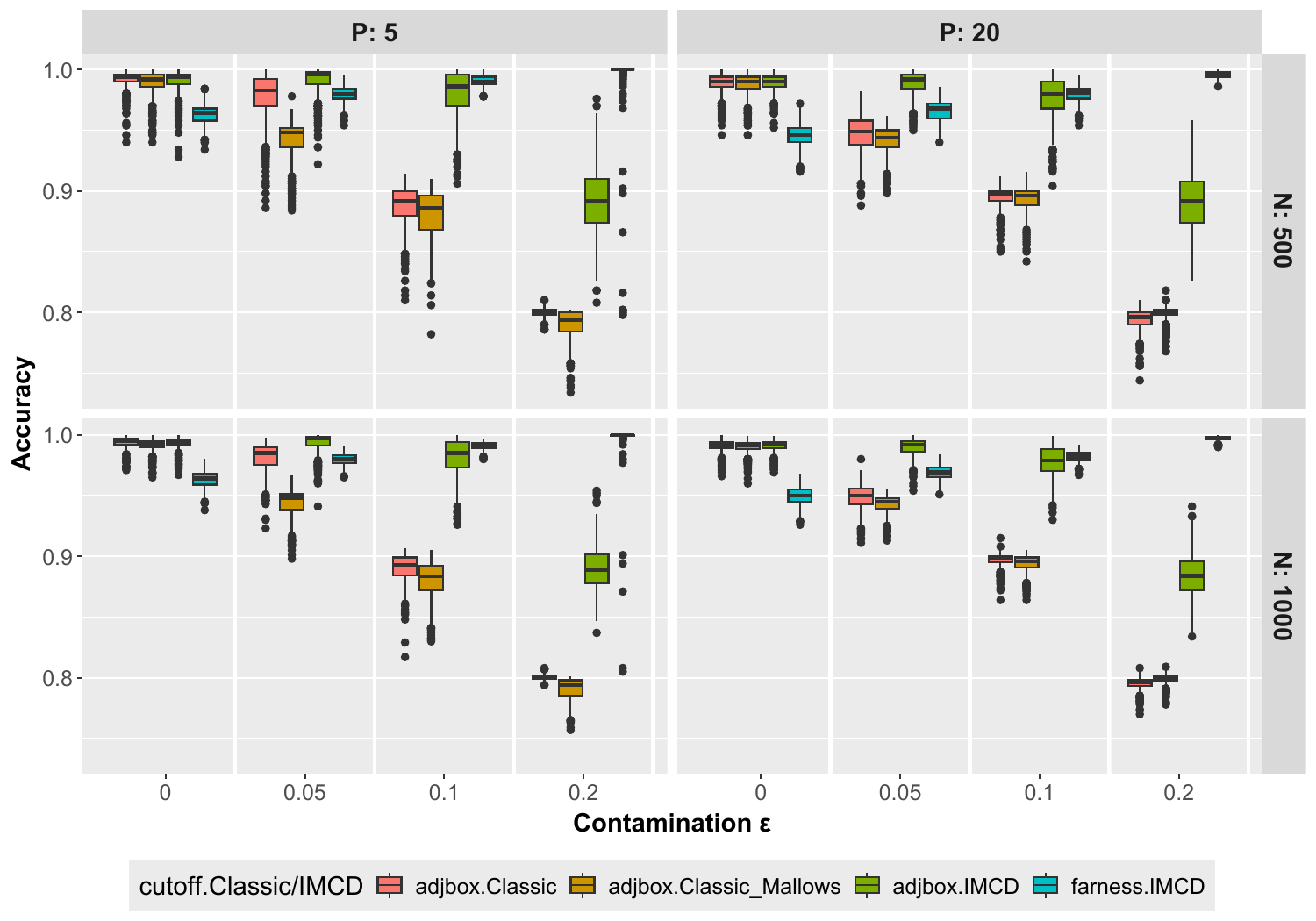}
        \caption{Scenario 4.} 
    \end{subfigure}
    \begin{subfigure}[b]{0.49\textwidth}
        \centering
        \includegraphics[width=\textwidth]{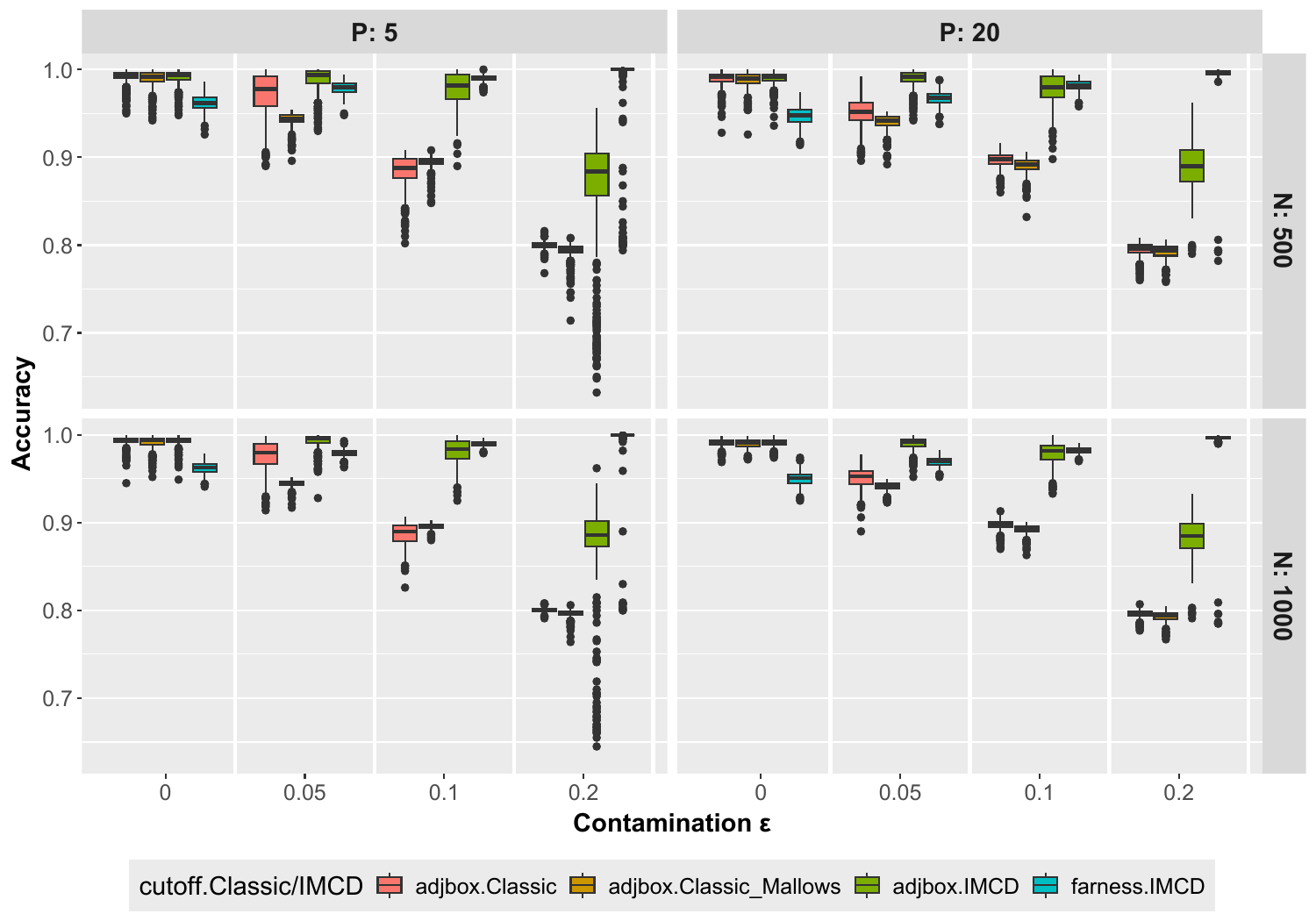}
        \caption{Scenario 2.} 
    \end{subfigure}
    \hfill
    \begin{subfigure}[b]{0.49\textwidth}
        \centering
        \includegraphics[width=\textwidth]{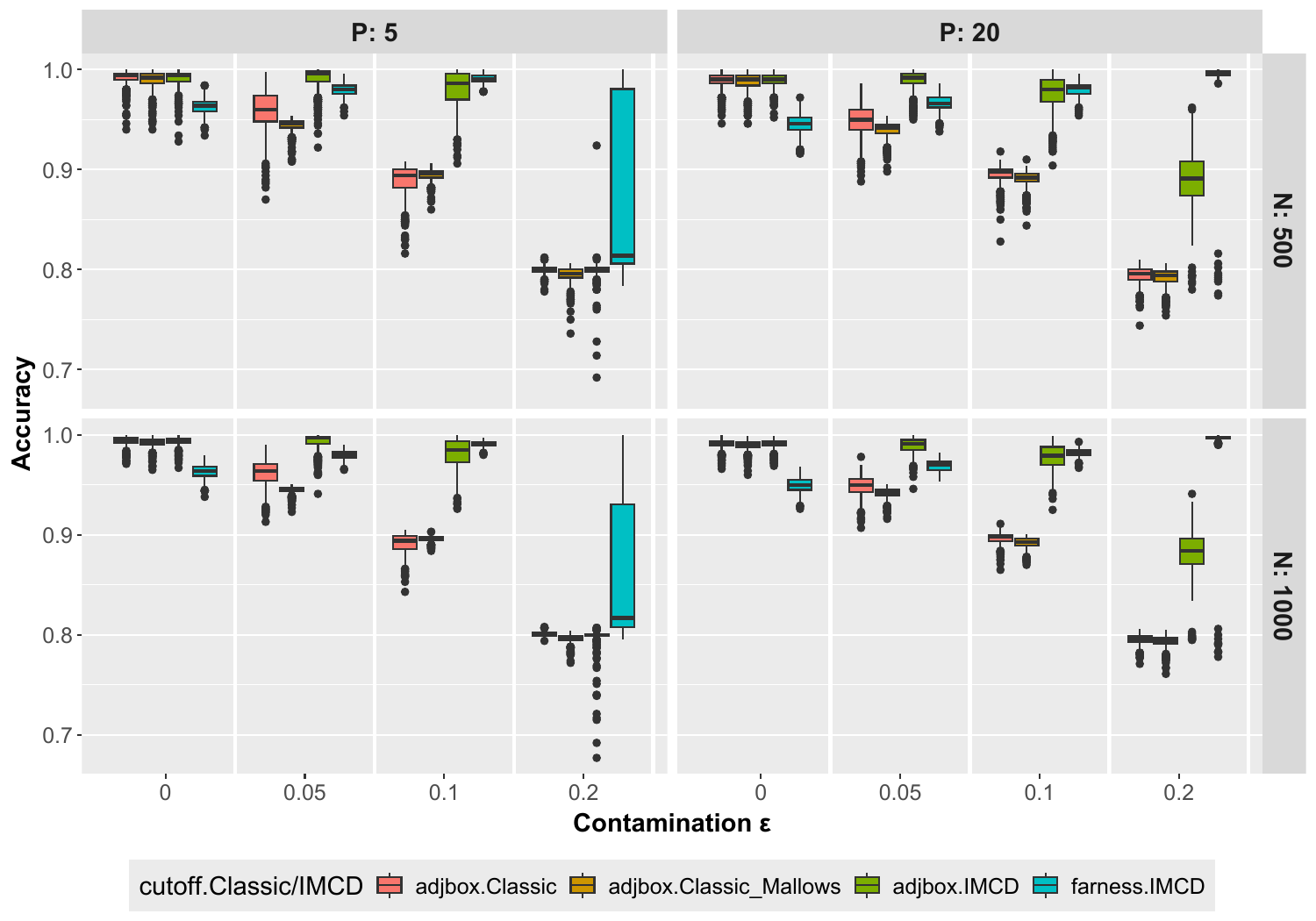}
        \caption{Scenario 5.} 
    \end{subfigure}
    \begin{subfigure}[b]{0.49\textwidth}
        \centering
        \includegraphics[width=\textwidth]{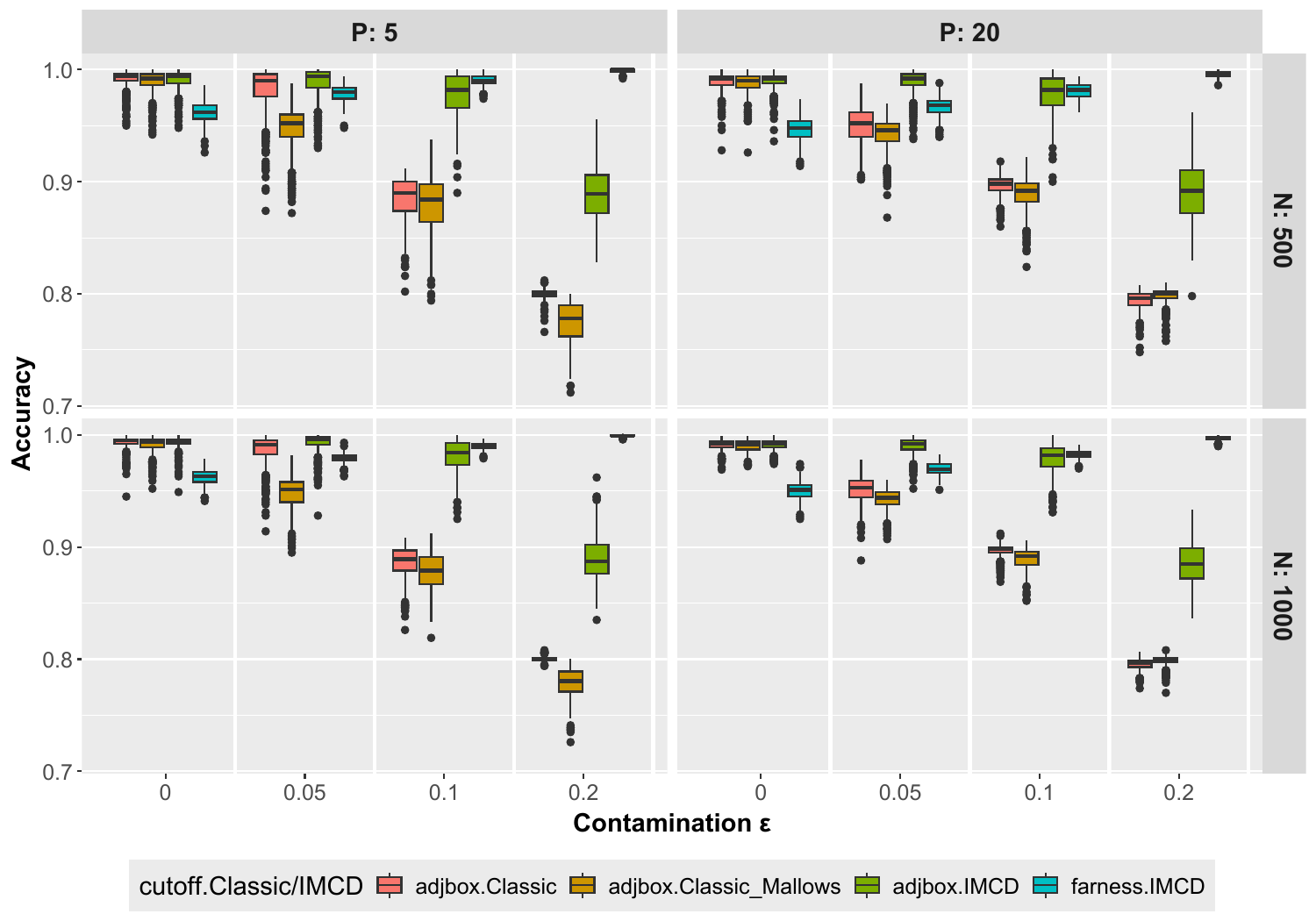}
        \caption{Scenario 3.} 
    \end{subfigure}
    \hfill
    \begin{subfigure}[b]{0.49\textwidth}
        \centering
        \includegraphics[width=\textwidth]{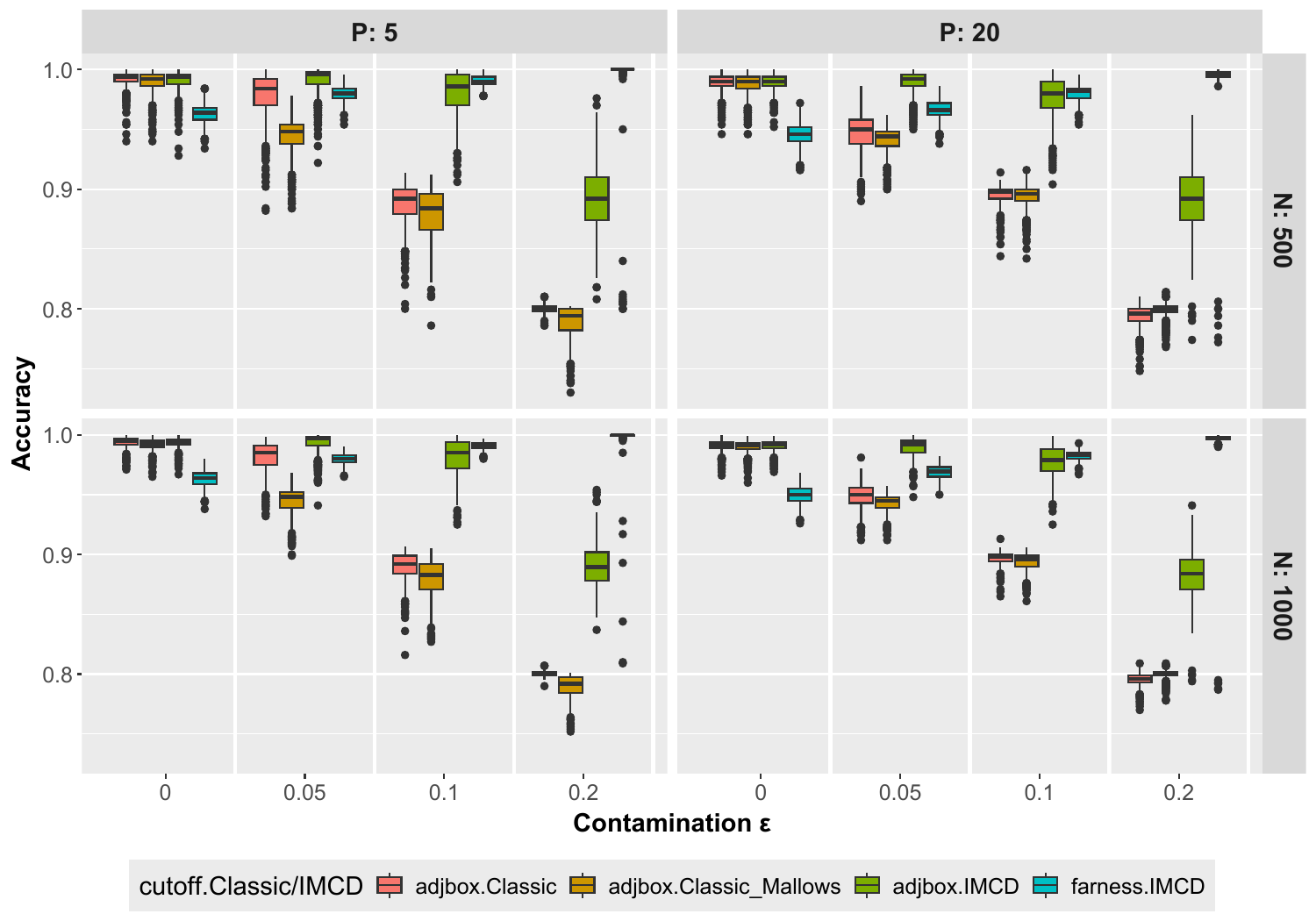}
        \caption{Scenario 6.} 
    \end{subfigure}
    \caption{Boxplots of the accuracy \eqref{eq:accuracy} obtained for the six different scenarios, levels of contamination ($\epsilon$), number of variables ($P$), and sample size ($N$). For each case, we have four outlier detection methods: classic Interval-Mahalanobis distance with adjusted boxplot cutoff (adjbox.Classic), Mallows distance with adjusted boxplot cutoff (adjbox.Classic\_Mallows), robust Interval-Mahalanobis distance with adjusted boxplot (adjbox.IMCD), and farness (farness.IMCD) reweighting/cutoff.}
    \label{fig:accuracy}
\end{figure}

\begin{figure}[ht]
    \centering
    \begin{subfigure}[b]{0.49\textwidth}
        \centering
        \includegraphics[width=\textwidth]{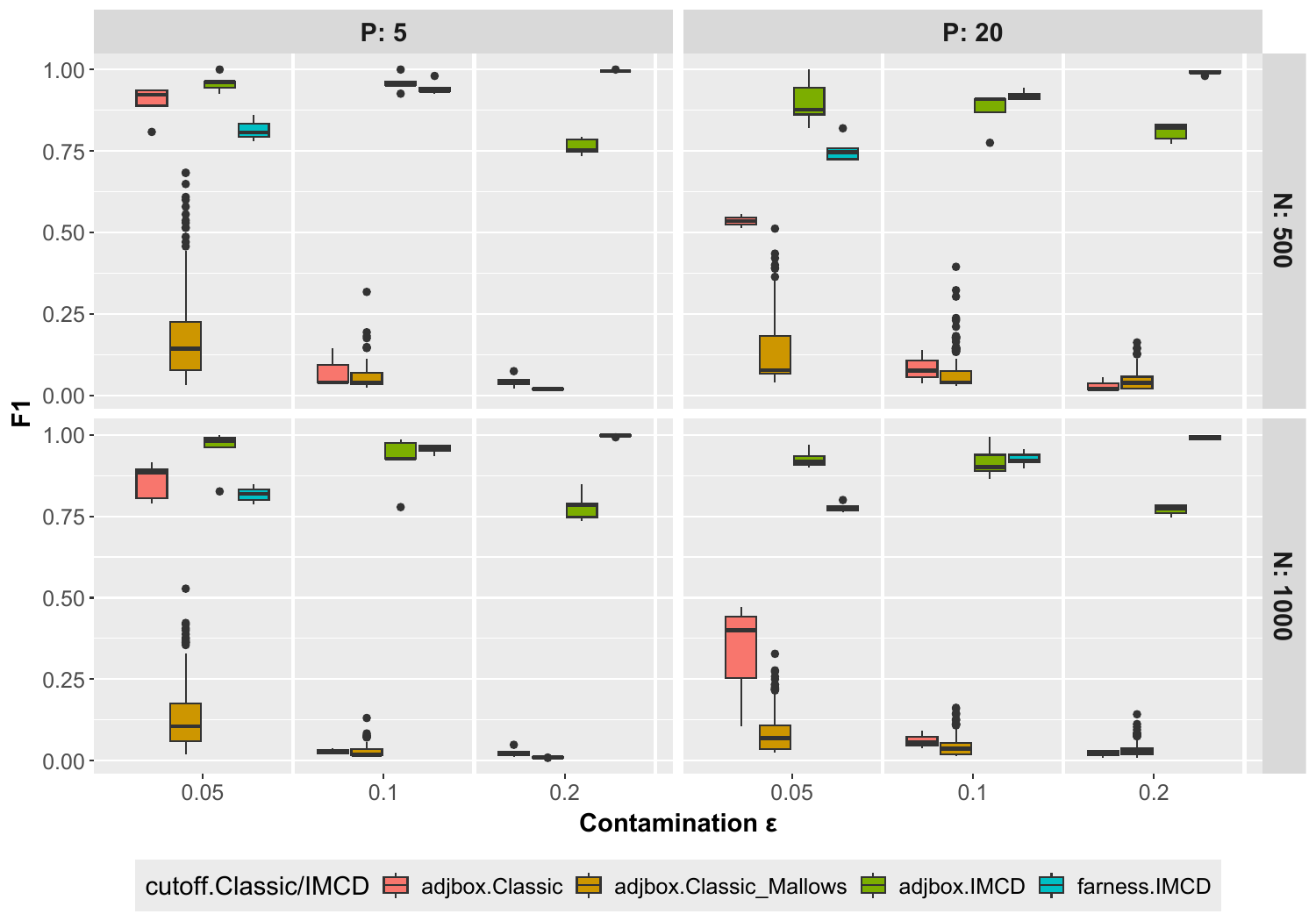}
        \caption{Scenario 1.} 
    \end{subfigure}
    \hfill
    \begin{subfigure}[b]{0.49\textwidth}
        \centering
        \includegraphics[width=\textwidth]{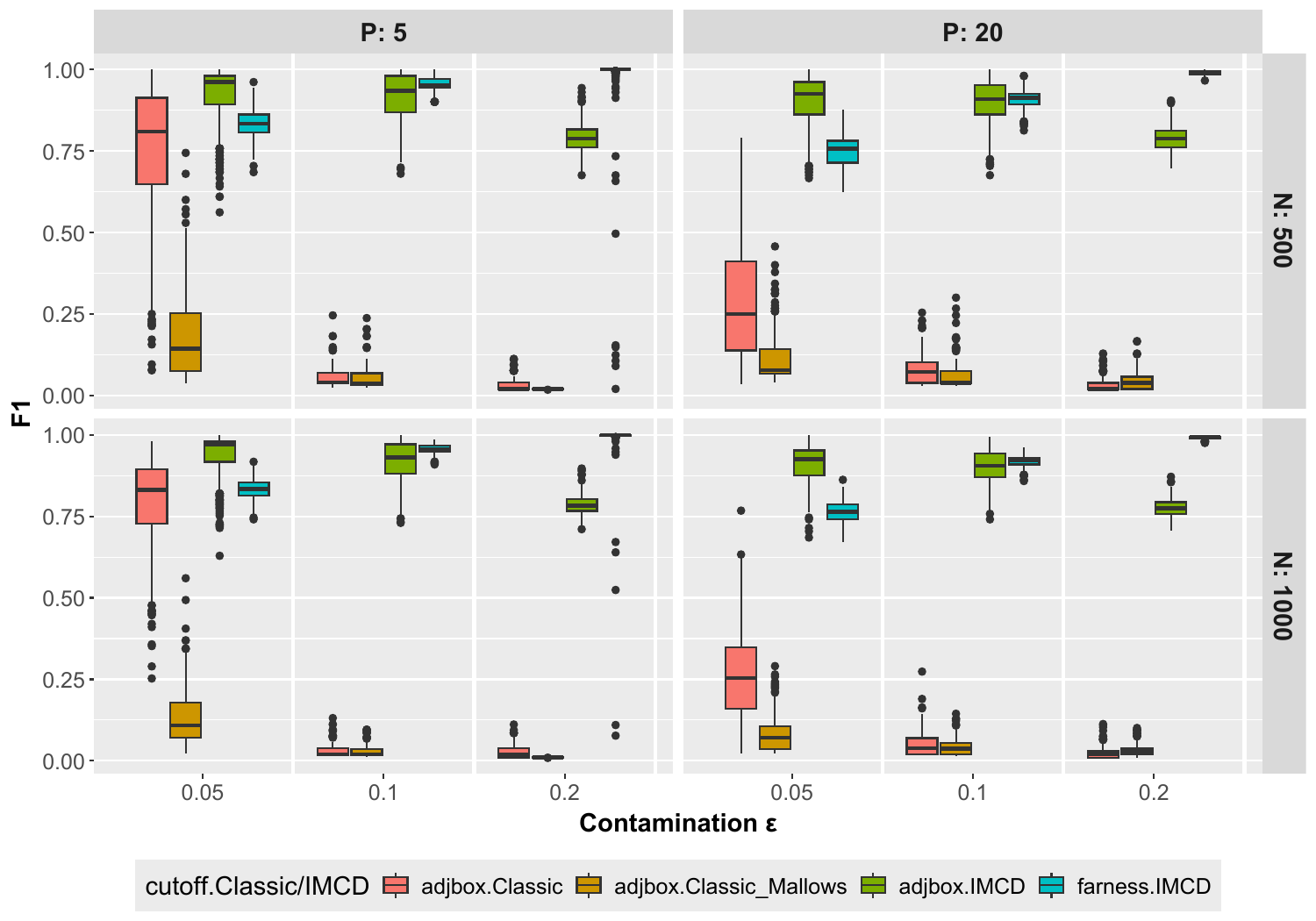}
        \caption{Scenario 4.} 
    \end{subfigure}
    \begin{subfigure}[b]{0.49\textwidth}
        \centering
        \includegraphics[width=\textwidth]{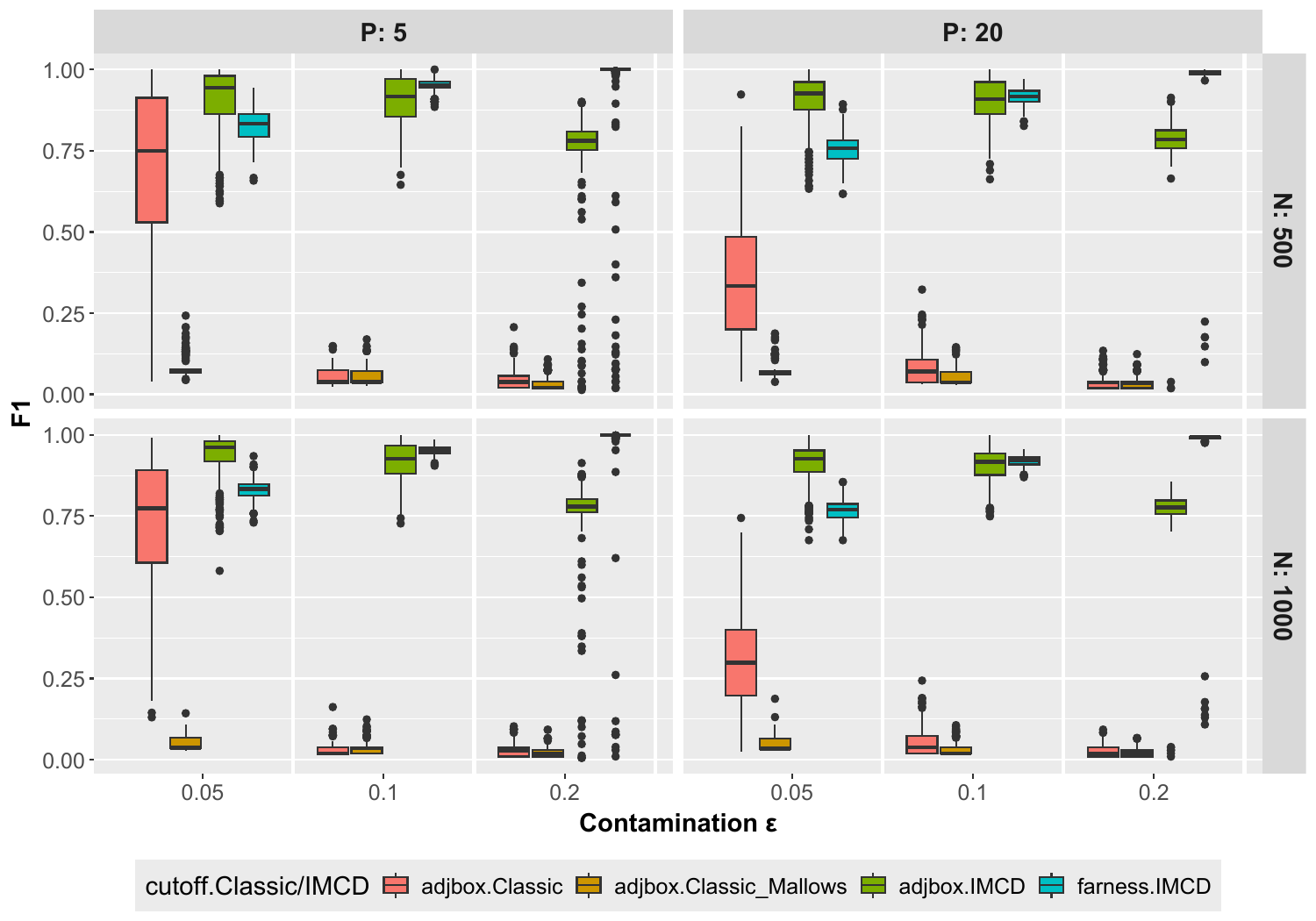}
        \caption{Scenario 2.} 
    \end{subfigure}
    \hfill
    \begin{subfigure}[b]{0.49\textwidth}
        \centering
        \includegraphics[width=\textwidth]{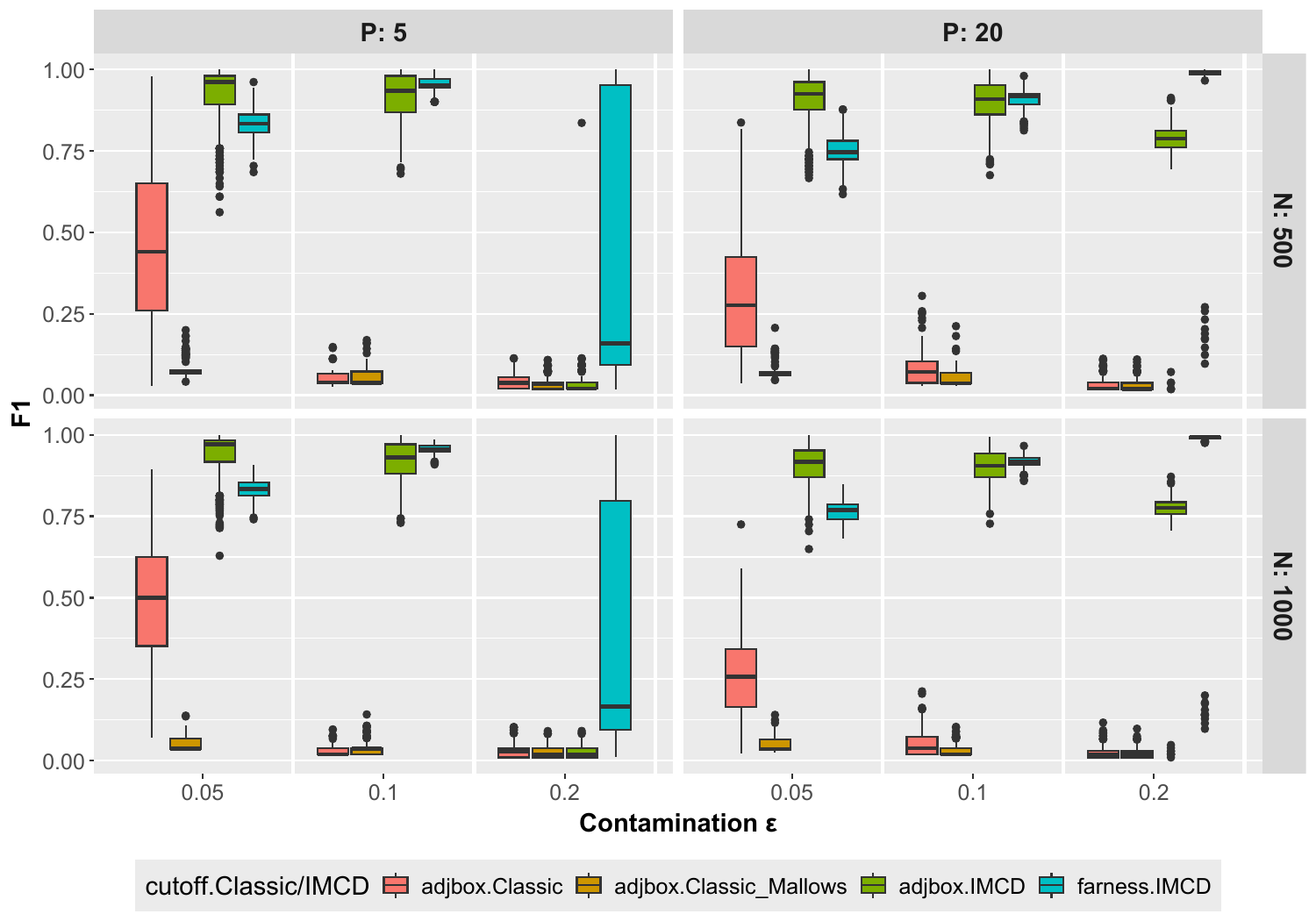}
        \caption{Scenario 5.} 
    \end{subfigure}
    \begin{subfigure}[b]{0.49\textwidth}
        \centering
        \includegraphics[width=\textwidth]{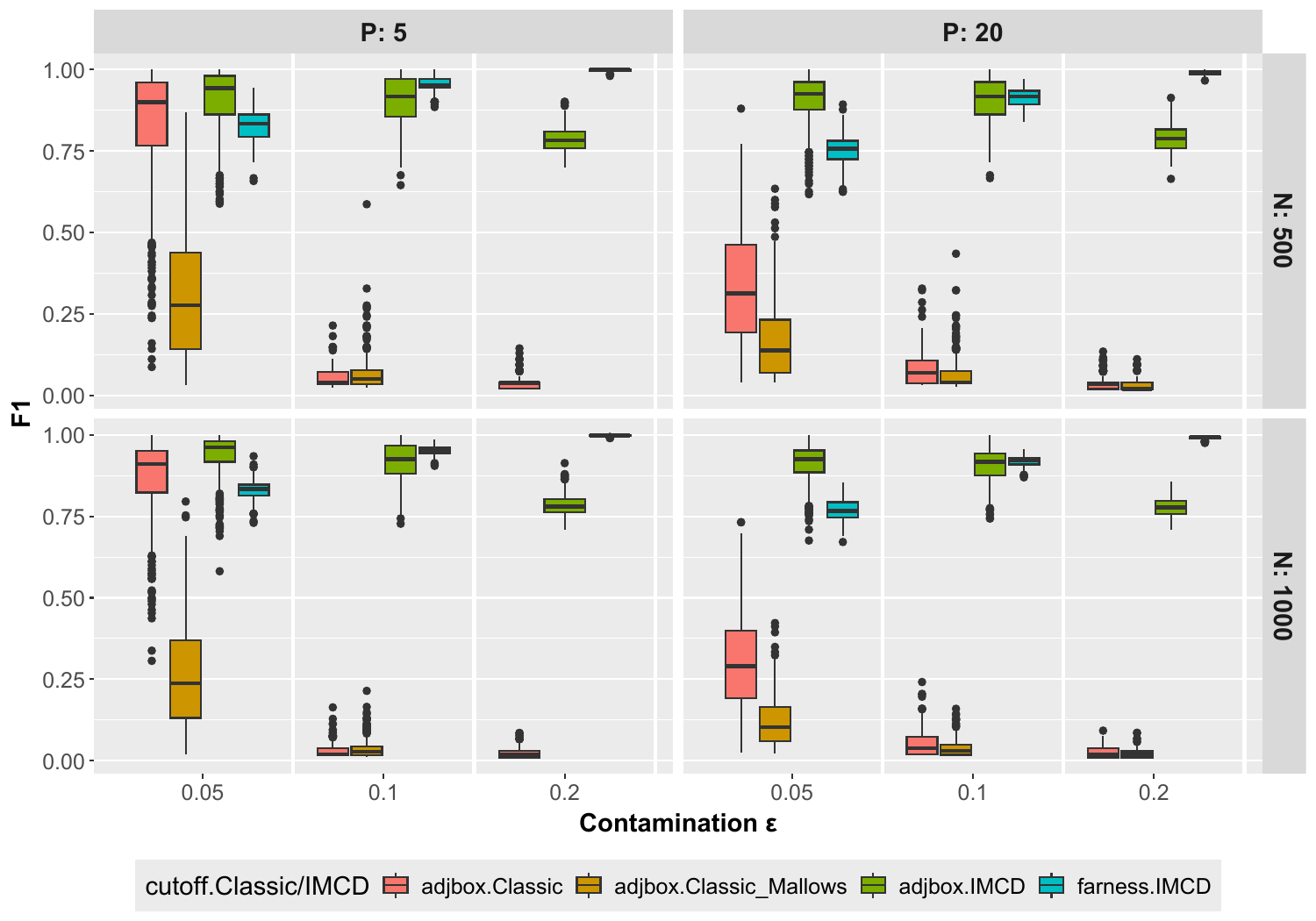}
        \caption{Scenario 3.} 
    \end{subfigure}
    \hfill
    \begin{subfigure}[b]{0.49\textwidth}
        \centering
        \includegraphics[width=\textwidth]{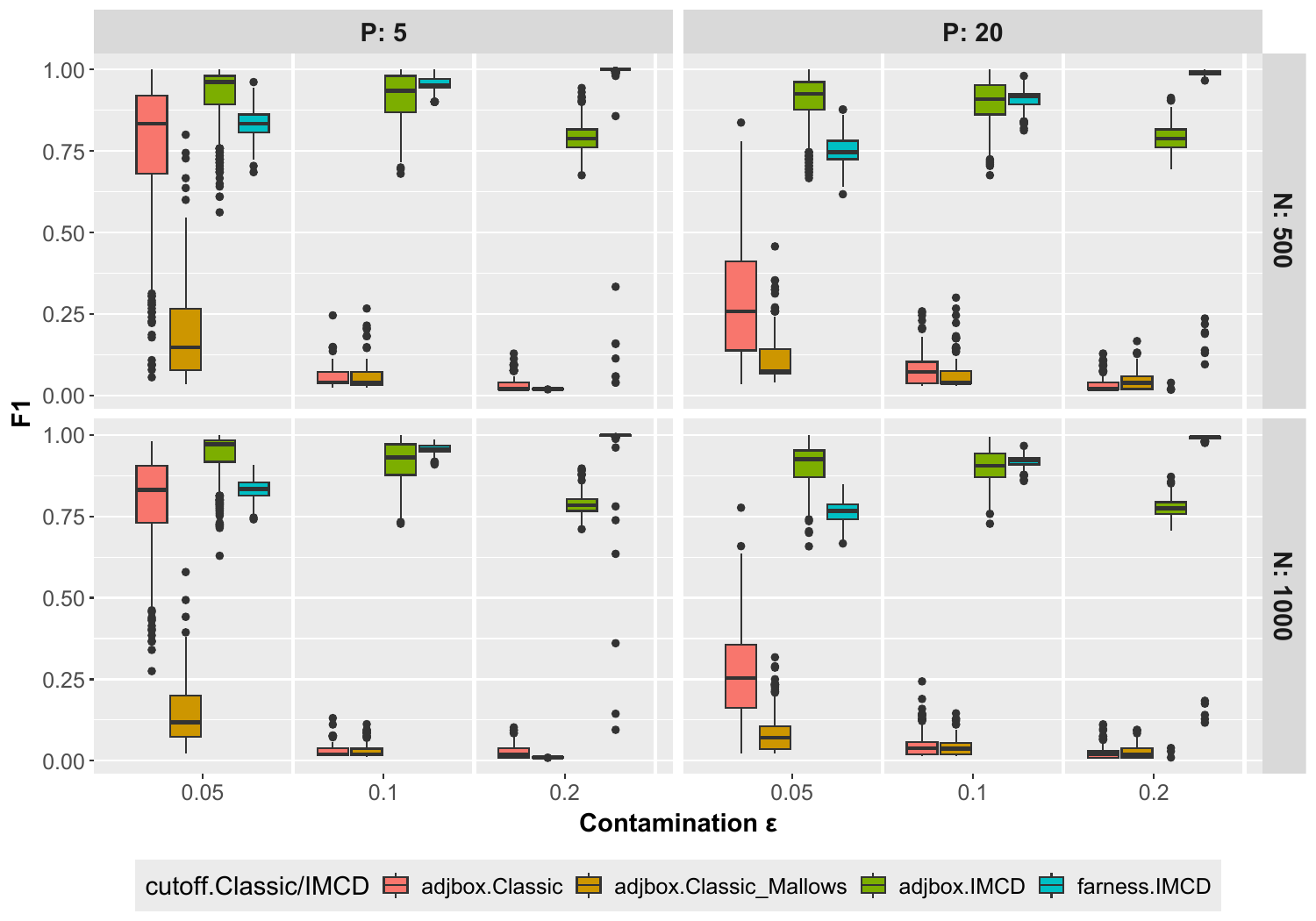}
        \caption{Scenario 6.} 
    \end{subfigure}
    \caption{Boxplots of the F$_1$-score \eqref{eq:f1score} obtained for the six different scenarios, the different levels of contamination ($\epsilon$), number of variables ($P$), and sample size ($N$). For each case, we have four outlier detection methods: classic Interval-Mahalanobis distance with adjusted boxplot cutoff (adjbox.Classic), Mallows distance with adjusted boxplot cutoff (adjbox.Classic\_Mallows), robust Interval-Mahalanobis distance with adjusted boxplot (adjbox.IMCD), and farness (farness.IMCD) reweighting/cutoff.}
    \label{fig:f1score}
\end{figure}

\begin{figure}[ht]
    \centering
    \begin{subfigure}[b]{0.49\textwidth}
        \centering
        \includegraphics[width=\textwidth]{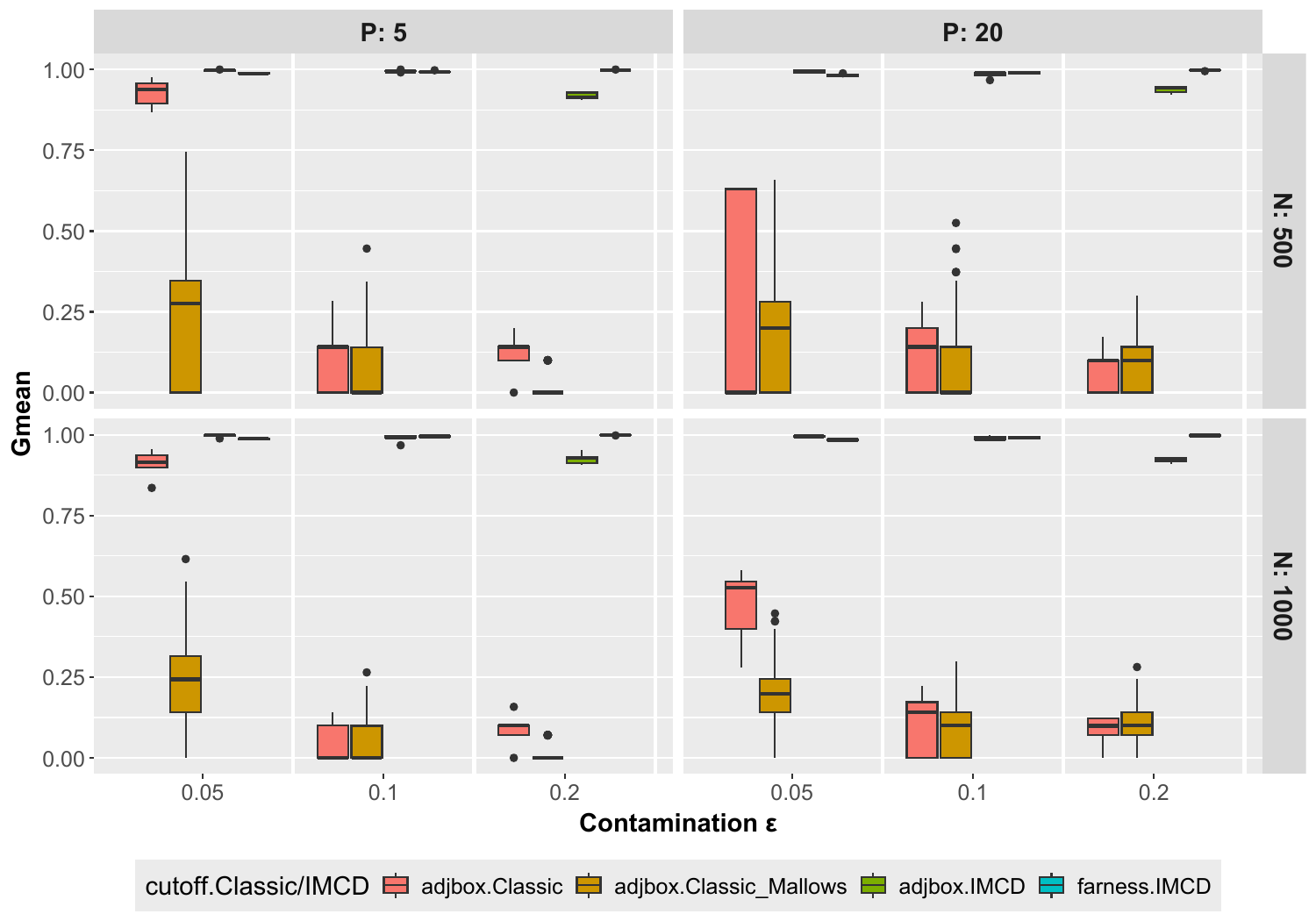}
        \caption{Scenario 1.} 
    \end{subfigure}
    \hfill
    \begin{subfigure}[b]{0.49\textwidth}
        \centering
        \includegraphics[width=\textwidth]{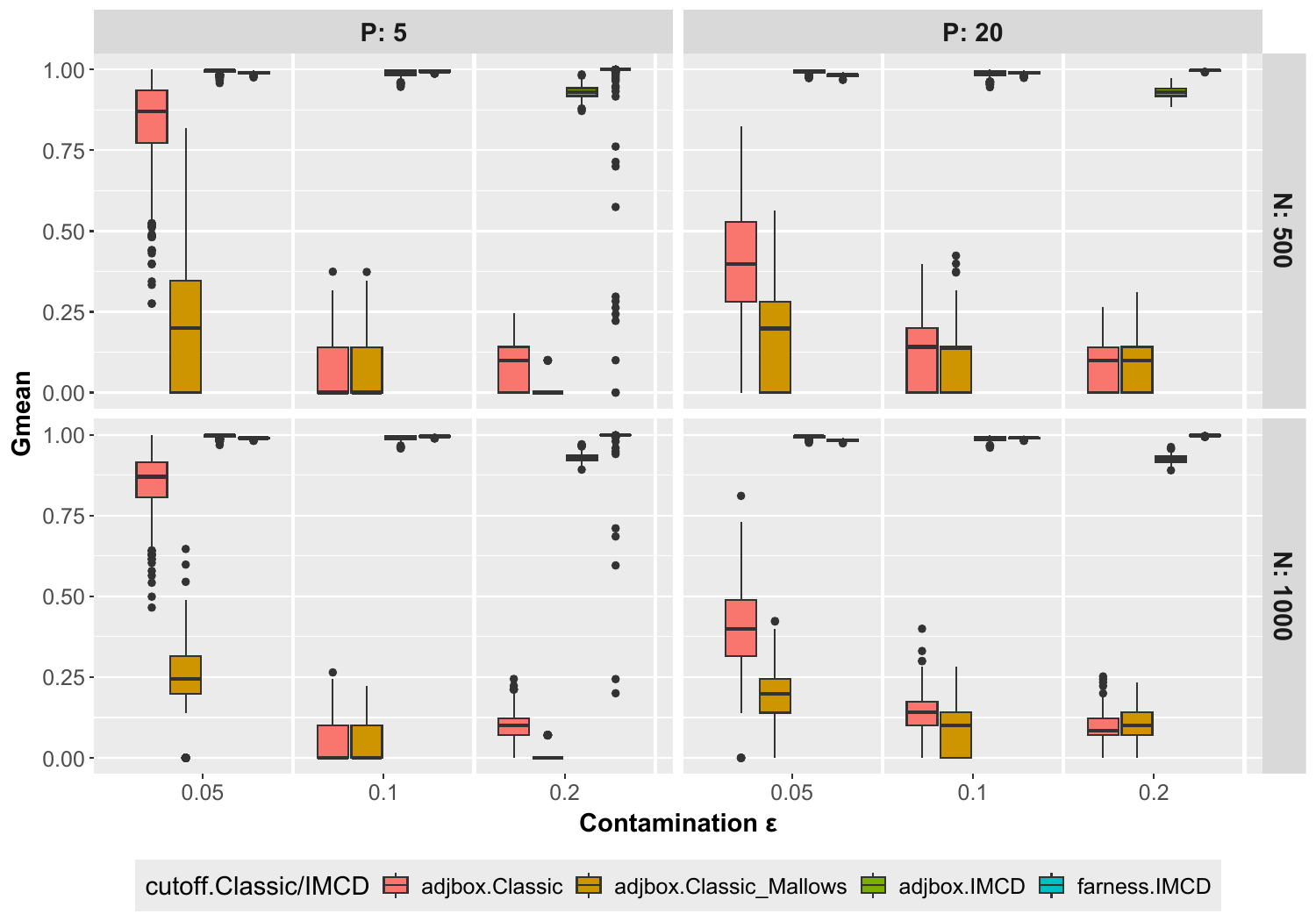}
        \caption{Scenario 4.} 
    \end{subfigure}
    \begin{subfigure}[b]{0.49\textwidth}
        \centering
        \includegraphics[width=\textwidth]{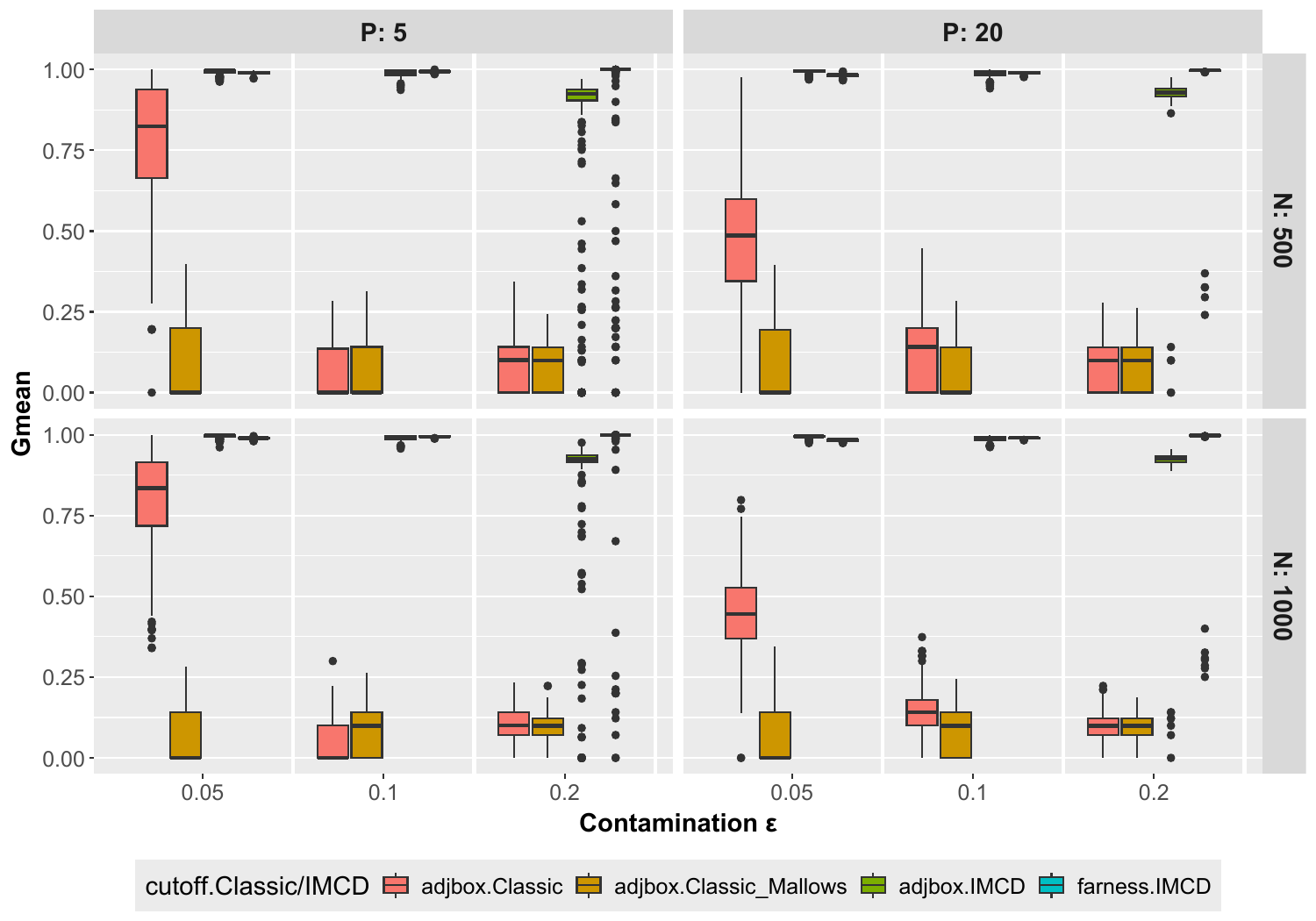}
        \caption{Scenario 2.} 
    \end{subfigure}
    \hfill
    \begin{subfigure}[b]{0.49\textwidth}
        \centering
        \includegraphics[width=\textwidth]{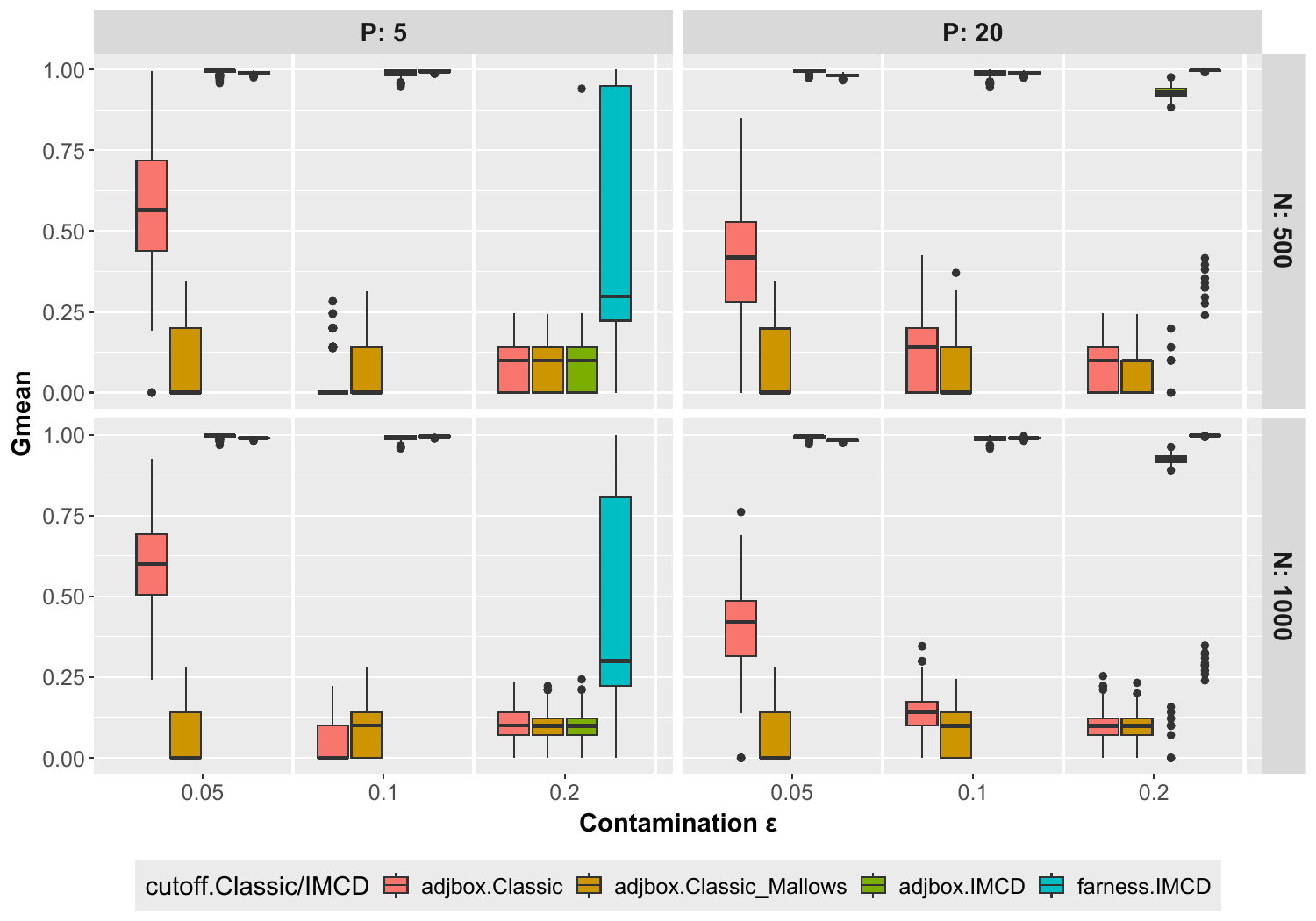}
        \caption{Scenario 5.} 
    \end{subfigure}
    \begin{subfigure}[b]{0.49\textwidth}
        \centering
        \includegraphics[width=\textwidth]{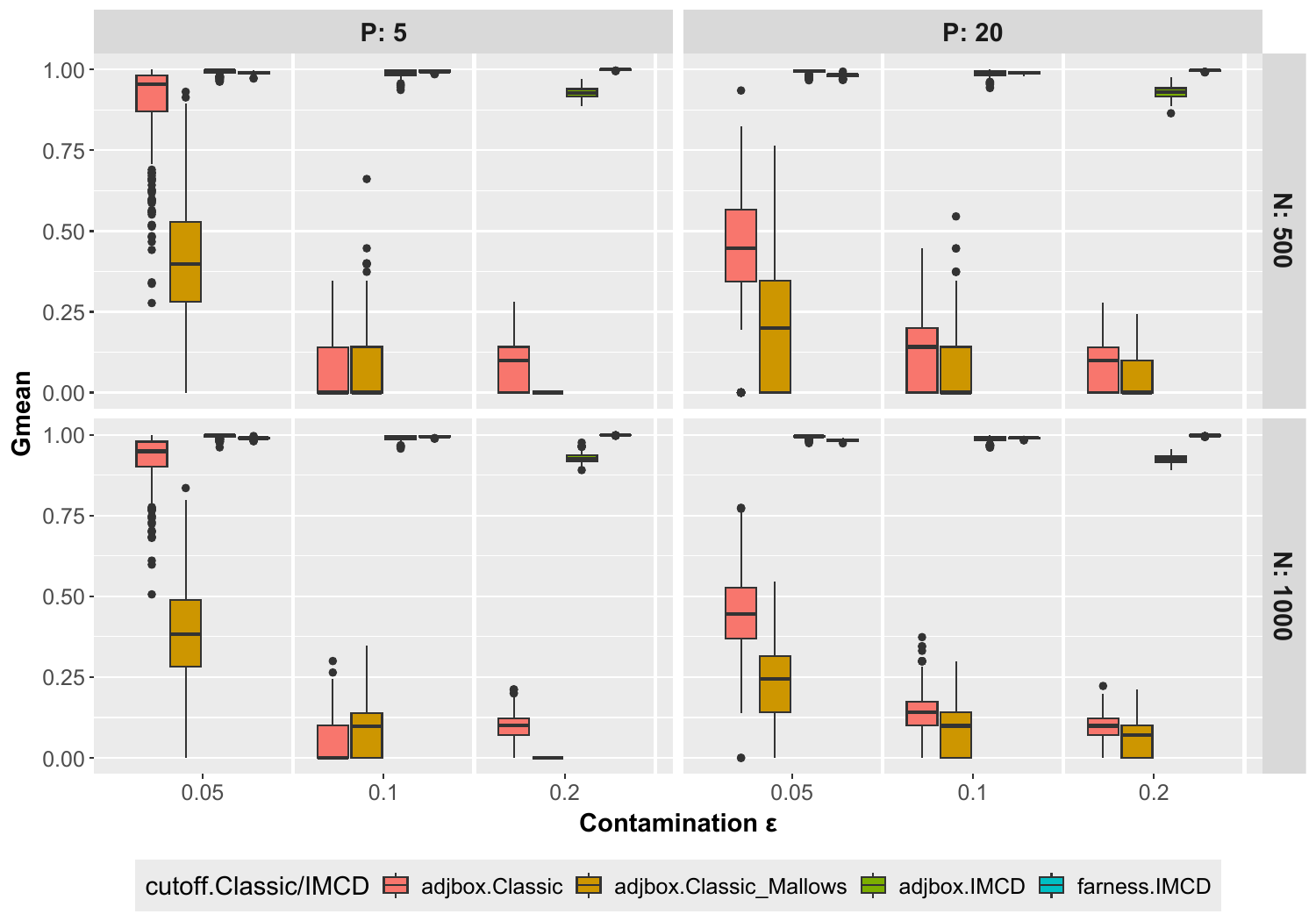}
        \caption{Scenario 3.} 
    \end{subfigure}
    \hfill
    \begin{subfigure}[b]{0.49\textwidth}
        \centering
        \includegraphics[width=\textwidth]{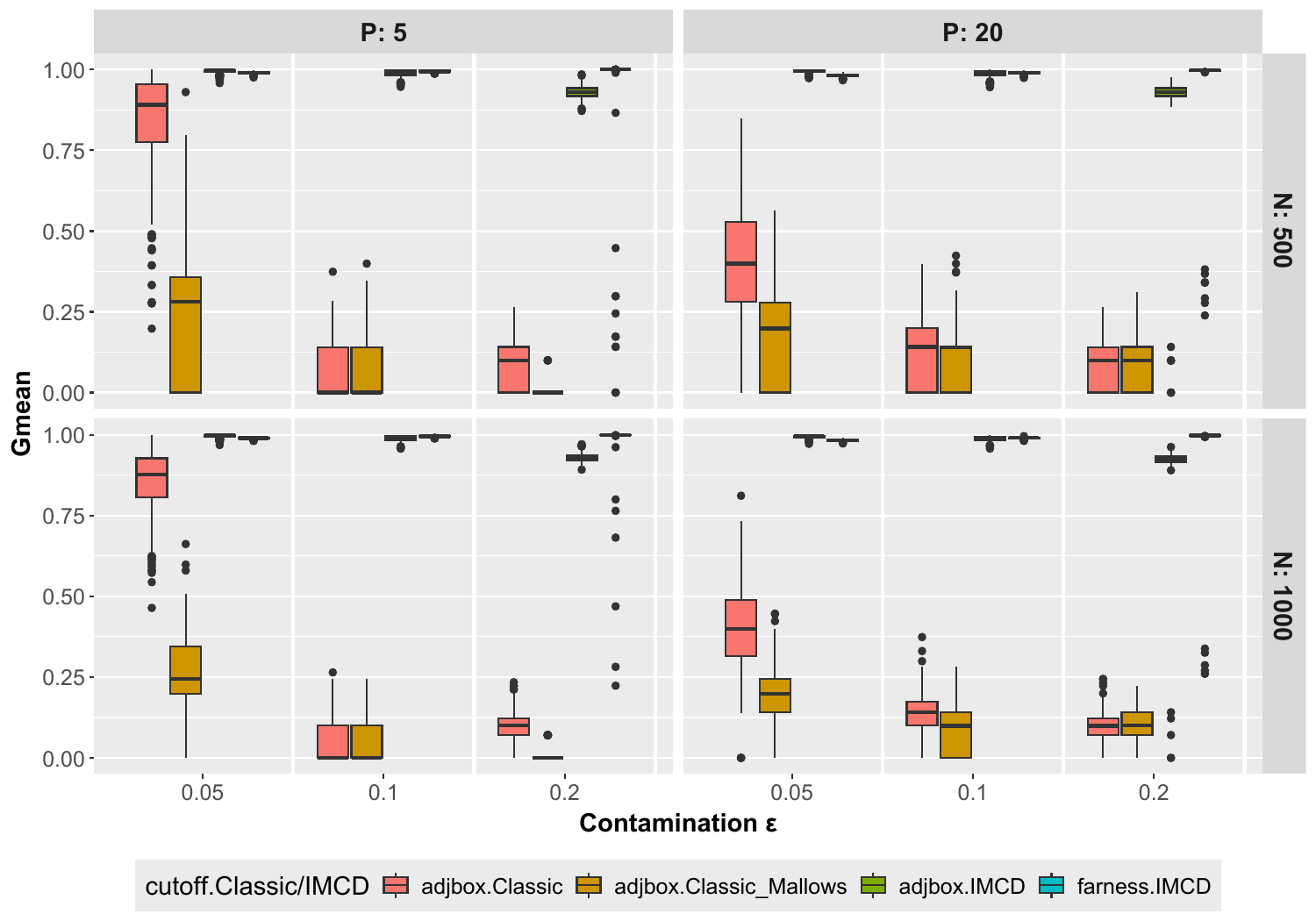}
        \caption{Scenario 6.} 
    \end{subfigure}
    \caption{Boxplots of the G-mean \eqref{eq:gmean} obtained for the six different scenarios, levels of contamination ($\epsilon$), number of variables ($P$), and sample size ($N$). For each case, we have four outlier detection methods: classic Interval-Mahalanobis distance with adjusted boxplot cutoff (adjbox.Classic), Mallows distance with adjusted boxplot cutoff (adjbox.Classic\_Mallows), robust Interval-Mahalanobis distance with adjusted boxplot (adjbox.IMCD), and farness (farness.IMCD) reweighting/cutoff.}
    \label{fig:gmean}
\end{figure}

\end{document}